\documentclass[10pt,amsmath,amssymb,prx,superscriptaddress,twocolumn]{revtex4}
\usepackage{amssymb, amsmath,latexsym}
\usepackage{bm}
\usepackage[pdftex]{graphicx, color}

\usepackage{multirow}

	\usepackage{xr}

\usepackage{booktabs}

\def\QED{\mbox{\rule[0pt]{1.5ex}{1.5ex}}}
\def\endproof{\hspace*{\fill}~\QED\par\endtrivlist\unskip}

 \newenvironment{proofof}[1]{\vspace*{5mm} \par \noindent
         \quad{\it Proof of #1:\hspace{2mm}}}{\endproof}

\newtheorem{theorem}    {Theorem}
\newtheorem{lemma}[theorem]     {Lemma}
\newtheorem{corollary}[theorem]  {Corollary}
\newtheorem{definition}     {Definition}
\newtheorem{remark}{Remark}



\def\im{\mathop{\bf Im}\nolimits}

\newcommand{\half}[1]{{ \rm  h}}

\newcommand{\Tr}[0]{ {\rm{Tr}}}
\newcommand{\cH}{{\cal H}}

\def\Label#1{\label{#1}\ [\ \text{#1}\ ]\ }
\def\Label{\label}

\newcommand{\affA}{Center for Emergent Matter Science (CEMS), RIKEN, Wako, Saitama 351-0198 Japan}
\newcommand{\affB}{Guraduate School of Mathematics, Nagoya University, Nagoya 464-0814 Japan}
\newcommand{\affC}{Centre for Quantum Technology, National University of Singapore,
Singapore 117543}

\begin{document}

\title{Measurement-based Formulation of Quantum Heat Engine}

\author{Masahito Hayashi}
\affiliation{\affB}
\affiliation{\affC}
\author{Hiroyasu Tajima}
\affiliation{\affA}


\begin{abstract}
There exist two formulations for quantum heat engines that model energy transfer between two microscopic systems.
One is the semi-classical scenario, and the other is the full quantum scenario. 
The former is formulated as unitary evolution for the internal system, and is adopted by the statistical mechanics community. 
In the latter, the whole process is formulated as unitary, and is adopted by the quantum information community. 
This paper proposes a model for quantum heat engines that transfer energy 
from a collection of microscopic systems to a macroscopic system like a fuel cell.
In such a situation, the amount of extracted work is visible for a human.
For this purpose, we formulate a quantum heat engine as the measurement process whose measurement outcome is the amount of extracted work.
Under this model, we derive a suitable energy conservation law and propose a more concrete submodel.
Then, we derive a novel trade-off relation between the measurability of the amount of work extraction 
and the coherence of the internal system,
which limits the applicability of the semi-classical scenario to a heat engine transferring energy from 
a collection of microscopic systems to a macroscopic system.
\end{abstract}
\maketitle


\section{Introduction}
Thermodynamics started as a study that clarifies the upper limit of the efficiency of macroscopic heat engines \cite{Carnot} and has become a huge realm of science which covers from electric batteries \cite{Fermi} to black holes \cite{Bardeen}. 
Today, with the development of experimental techniques, the study of thermodynamics is reaching a new phase. 
The development of experimental techniques is realizing micro-machines in the laboratory \cite{nano1,nano2,demon}.
We cannot apply standard thermodynamics to these small-size heat engines as it is, because it is a phenomenological theory for macroscopic systems.
In order to study such small-size heat engines, we need to use statistical mechanical approaches.

In the statistical mechanical approach, the internal system of the heat engine can be formulated as a system obeying time-dependent Hamiltonian dynamics, which is called the classical standard formulation in this paper.
The classical standard formulation has been used since Einstein and Gibbs \cite{Ehrenfest}.
For example, Bochkov-Kuzolev \cite{Bochkov-Kuzolev1,Bochkov-Kuzolev2,Bochkov-Kuzolev3,Bochkov-Kuzolev4}
showed the second law under the standard formulation for cyclic operations, and it was extended to general operations as a corollary of the Jarzynski equality\cite{Jarzynski}. 
In this way, the classical standard formulation works well for classical heat engines, and has been adopted by the statistical mechanics community \cite{Sagawa2010,Ponmurugan2010,Horowitz2010,Horowitz2011,Sagawa2012,Ito2013}.
In the statistical mechanics community, as a quantum extension of the classical standard formulation, the work extraction process was formulated to be unitary for the internal system, which is called the semi-classical scenario in this paper. 
As an example, employing this scenario, Lenard showed the second law for cyclic processes in the quantum setting \cite{Lenard}.
Also, based on this scenario, Kurchan and Tasaki\cite{Kurchan,tasaki} gave a quantum generalization of the Jarzynski equality.
The semi-classical scenario has been adopted by the statistical mechanics community \cite{Croocks,Car1,sagawa1,jacobs,sagawa2,Funo,Morikuni,Parrondo,IE1,IE2,IE1.5,BBM1,TLH}.
Here, they assume that the time evolution of the internal system is unitary.
On the other hand, researchers in the quantum information community recently discussed unitary dynamics of the whole system including the external system storing the extracted work 
\cite{Car2,Popescu2014,oneshot2,Horodecki,oneshot1,oneshot3,Egloff,Brandao,Popescu2015,CSHO}, which is called the fully quantum scenario in this paper\footnote{%
Our classification between the semi-classical scenario and the fully quantum scenario is based on the range of the unitary dynamics of our interest.
Although the papers\cite{Car2,Popescu2014,oneshot2,Horodecki,oneshot1,oneshot3,Egloff,Brandao,Popescu2015}
discuss only states diagonal in energy basis,
the unitary dynamics of their interest covers the whole system including the external system storing the extracted work. 
Hence, we classify them as  the fully quantum scenario.}.
Although both models are different,
\r{A}berg \cite[Section II-D in Supplement]{catalyst} showed that the internal unitary dynamics in the semi-classical scenario can be realized as the approximation of the unitary of the whole system\footnote{%
To realize the approximation of the unitary of the whole system,
\r{A}berg \cite{catalyst} employed the coherence in the external system storing the extracted work. 
Also, \r{A}berg \cite{catalyst} showed that the coherence of the external system can be used repeatedly to perform coherent operations.
As was commented in \cite{BVB},
when we repeatedly use the same external system, 
the overall coherent operation has diminished accuracy, and is necessarily accompanied by an increased thermodynamic cost.}.
So, both models have succeeded in analyzing heat engines that transfer energy between two micro-systems.

\begin{figure}[htbp]
\begin{center}
\includegraphics[scale=0.4]{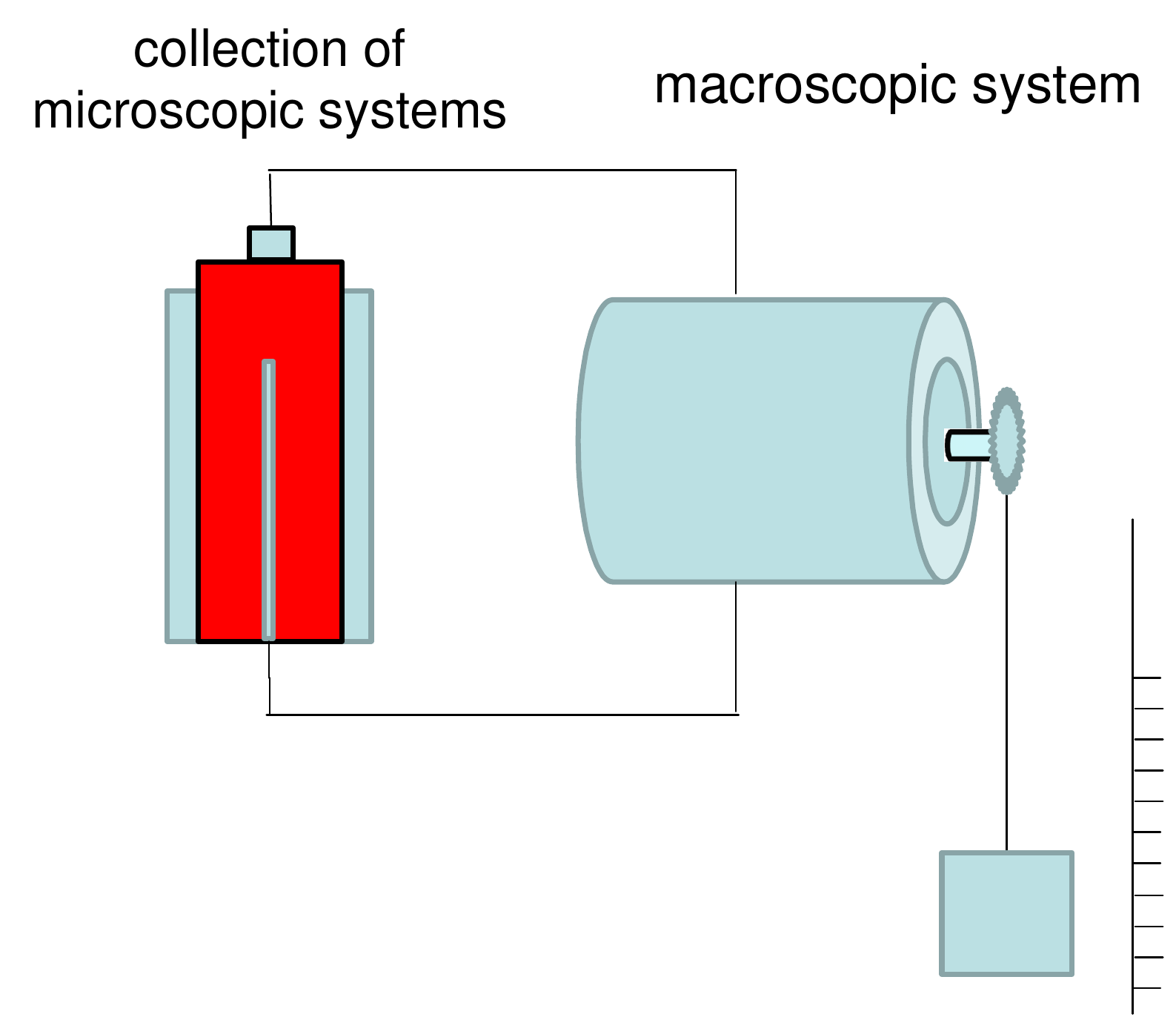}
\end{center}
\caption{Energy extraction from a collection of microscopic system to a macroscopic system}
\Label{extmeasb}
\end{figure}

To further develop quantum thermodynamics, this paper discusses heat engines transferring discernible energy 
from a collection of quantum systems to a macroscopic system, 
which is a new frontier of quantum thermodynamics.
This kind of heat engines are used in our daily life.
As a typical example, fuel cells with micro structure have been developed as solid oxide fuel cells (SOFC) recently \cite{LLA,SHTYFA,BGLAKS}.
A fuel cell has a collection of microscopic systems as an internal system and a macroscopic output power, which can be measured as the amount of extracted work by humans and effects our daily life
while the amount of extracted work does not need to be measured in the previous case as Fig. \ref{extmeasb}.
As the next topic, in order to analyze such a fuel cell with micro structure as a quantum heat engine,
it is interesting to develop a model for a heat engine that includes a measurement process to produce the output power.
That is, it is an interesting study to treat the output power as a measurement outcome, i.e., 
to formulate the heat engine as a quantum measurement process for the quantum internal system.
In this scenario, people do not measure the inside of the fuel cell, but measure alternatively the macroscopic object.

Indeed, the measurability of the amount of extracted work is a very crucial task due to the following real situation.
Consider the case when an electric bill is charged by an electric power company.
In fact, at the time of a disaster, SOFC is intended to be used as an electric power source \cite{SSYHSF}.
If the measured amount of extracted work is different from the true amount of extracted work,
the user will not pay the electric bill because he/she cannot trust the amount charged. 
To avoid such trouble, we need to precisely measure the amount of extracted work as a fundamental requirement for our model of a heat engine.
However, it is an open problem to extend quantum thermodynamics to such a case.
That is, it is our desire to formulate our model for a quantum heat engine as 
the energy extraction process that equips a quantum measurement process to output the amount of extracted work.

In the present article, we propose a general formation of quantum heat engines based on quantum measurement theory \cite{Davies1970,Ozawa1984}
as CP-work extraction, in which, 
the total system is composed of the internal system and the external system, which can be regarded as a work storage system.
As shown by Ozawa \cite{Ozawa1984}, 
such a quantum measurement process is realized by an indirect measurement process.
That is, the combination of a unitary on the whole system and the measurement of the energy on the meter system, 
which is the work storage in the current situation.
In this scenario, the initial unitary can be regarded as the fully quantum scenario \cite{Car2,Popescu2014,oneshot2,Horodecki,oneshot1,oneshot3,Egloff,Brandao,Popescu2015}.
So, the initial unitary has to satisfy the energy conservation law in the sense of the fully quantum scenario.

There are three issues to discuss about our model.
Firstly, it is not trivial to identify the energy conservation law under our CP-work extraction model.
To clarify a natural condition for energy conservation,
we consider the natural energy conservation law in the dynamics between the internal system and the quantum storage of the full-quantum model, which can be regarded as the first step of the measuring process in the indirect measurement model 
in the context of the CP-work extraction model \cite{Ozawa1984}.
When additionally we impose a natural constraint of the initial state for the quantum storage, 
we derive a very restrictive energy conservation law in an unexpected way, which will be called the level-4 energy conservation law (Theorem \ref{3-14-8L}).
That is, the restrictive condition is naturally obtained by considering 
the indirect measurement model and the existing energy conservation law in the full-quantum model \cite{oneshot2,Horodecki,oneshot1,oneshot3,Egloff,Brandao,Popescu2015}.
However, there is a case when this constraint is satisfied only partially.
Under such conditions, we derive weaker conditions as other types of energy conservation laws for CP-work extractions.
 
Second, we need a more concrete model as a natural extension of the classical standard formulation \cite{Ehrenfest}
because the above CP-work extraction is too abstract
and contains an unnatural case when the dynamics of the internal system depends on the state of the external system,
while the dynamics of the internal system is not independent of the state of the external system in the classical standard formulation \cite{Ehrenfest}. 
Fortunately, \r{A}berg \cite[Section II of Supplement]{catalyst}
discussed a concrete model that satisfies this requirement as a full-quantum model with the energy conservation law,
and we call the model the shift-invariant model
because shift-invariance guarantees the independence of the state of the external system.
However, he did not discuss the measurability of the amount of extracted work
because the main topic of his work {was} work coherence.
So, we investigate how to naturally convert the model to a CP-work extraction with the level-4 energy conservation law.
Since the shift-invariant model is obtained from a semi-classical model in a canonical way, this model can be regarded as a modification of the semi-classical model. 
Under this modification, the semi-classical model works properly when we discuss the amount of extracted work and endothermic energy.

{Third, we examine how the semi-classical scenario works approximately 
when the measurability of the amount of extracted work is imposed.
Indeed, it has been expected that the fully quantum scenario converges to the semi-classical scenario in a proper approximation
Although the semi-classical scenario assumes that the internal system evolves unitarily under a time-dependent Hamiltonian controlled by a classical external system.
This examination checks this expectation.}
To discuss this issue, we investigate 
the trade-off between the approximation of the internal unitary and the measurability of the amount of extracted work.
As a result, we derive two remarkable trade-off relations between information gain for knowing the amount of extracted work and the maintained coherence of the thermodynamic system during the work extraction process. 
These trade-off relations clarify that we can hardly know the amount of the extracted work when the time evolution of the internal system is close to unitary.

This paper is organized as follows.
Firstly, in Section \ref{s3}, we formulate work extraction as a measurement process by using CP-work extraction.
Then, we give four energy conservation laws,
level-1, level-2, level-3, and level-4 energy conservation laws, 
among which, level-4 energy conservation law is most restrictive.
Next, in Section \ref{s4}, we discuss fully quantum work extraction as a unitary process between the internal system and the work storage
as well as the energy conservation law.
Then, we discuss what kind of energy conservation laws in the CP-work extraction model 
are derived from the respective conditions for the fully quantum work extraction.
In Section \ref{s4b}, we introduce the shift invariant model 
as a modification of the semi-classical model.
We consider how well this model works as a model for a heat engine.
In Section \ref{s5}, we derive two remarkable trade-off relations between information gain for knowing the amount of extracted work and the maintained coherence of the thermodynamic system during the work
extraction process. These trade-off relations clarify that
we can hardly know the amount of extracted work
when the time evolution of the internal system is close to
unitary.

\section{Work extraction as a measurement process}\Label{s3}
In this section, we give the basic idea of our measurement-based formulation of work extraction from a quantum system to a macroscopic system. 
Let us start with standard thermodynamics; in a macroscopic heat engine, the work is given as a discernible energy change of a macroscopic work storage. 
In our quantum setting, we extract energy from a collection of quantum systems to a macroscopic system.
That is, a discernible energy change of a macroscopic work storage is caused by the effect of a collection of quantum system.
In quantum physics, such a macroscopic discernible influence caused by a quantum system
can be described only by a measurement process as Fig. \ref{extmeasb}.
Therefore, we need to formulate work extraction from the quantum system to the macroscopic system as a measurement process.

Let us formulate the above idea more concretely.
We consider a heat engine, in which, the internal system is a collection of microscopic systems
and the meter system is a macroscopic system, which can be regarded as the output system of the heat engine.
For example, a fuel battery has the fuel cells as the internal system
and the motor system as the meter system.
Hence, as the internal system, we consider a quantum system $I$, whose Hilbert space is ${\cal H}_{I}$.
We refer to the Hamiltonian of $I$ as $\hat{H}_{I}$.
The internal system $I$ usually consists of the thermodynamical system $S$ and the heat baths $\{B_{m}\}^{M}_{m=1}$, 
but we do not discuss such detailed structure of the internal system $I$, here.
Let us formulate the work extraction from $I$.
We assume that the amount of the work is indicated by a meter.(Fig.\ref{extmeas})
\begin{figure}[htbp]
\begin{center}
\includegraphics[scale=0.85]{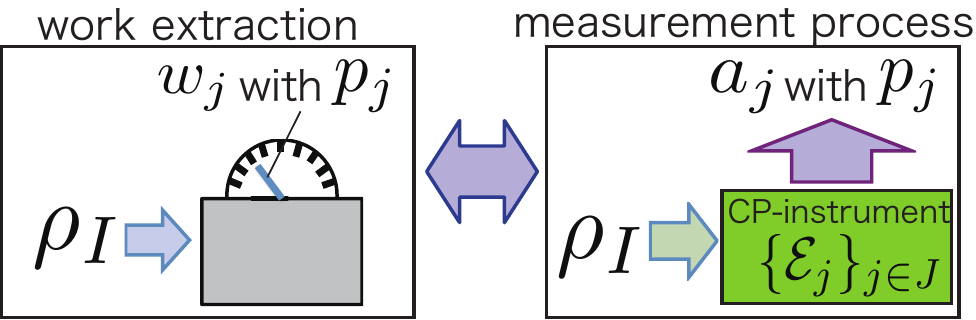}
\end{center}
\caption{Work extraction as a measurement process}
\Label{extmeas}
\end{figure}
In other words, we assume that we have an equipment to assess the amount of extracted work, and that the equipment indicates the work $w_{j}$ with the probability $p_{j}$.
In quantum mechanics, such a process that determines an indicated value $a_{j}$ with the probability $p_{j}$ is generally described as a measurement process.
Thus, we formulate work extraction from the quantum system as a measurement process\cite{Ozawa1984}.
As the minimal requirement, we demand that the average of $w_{j}$ is equal to the average energy loss of $I$ during the measurement; 
\begin{definition}[CP-work extraction]\Label{CP-normal}
Let us take an arbitrary set of a CP-instrument $\{{\cal E}_{j}\}_{j\in {\cal J}}$ and measured values $\{w_{j}\}_{j\in {\cal J}}$
satisfying the following conditions;
(1) each ${\cal E}_{j}$ is a completely positive (CP) map, 
(2) $\sum_{j}{\cal E}_{j}$ is a completely positive and trace preserving (CPTP) map, 
and (3) ${\cal J}$ is a discrete set of the outcome.
When the set $\{{\cal E}_{j},w_{j}\}_{j\in {\cal J}}$ satisfies the above condition, 
we refer to the set $\{{\cal E}_{j},w_{j}\}_{j\in {\cal J}}$ as a CP-work extraction.
\end{definition}

Here, we note that 
the measurement process $\{{\cal E}_{j}\}$ is not necessarily 
a measurement of the Hamiltonian of the internal system.
It is not difficult to treat the case where ${\cal J}$ is a continuous set, 
but to avoid mathematical difficulty, 
we consider only the case where ${\cal J}$ is discrete.
Since the heat engine needs to satisfy the conservation law of energy,
we consider the following energy conservation laws for 
a CP-work extraction $\{{\cal E}_{j},w_{j}\}_{j\in {\cal J}}$.
Firstly, we consider the weakest condition.
\begin{definition}[level-1 energy conservation law]\Label{CP-weak}
The following condition for a CP-work extraction $\{{\cal E}_{j},w_{j}\}_{j\in {\cal J}}$ 
is called the level-1 energy conservation law.
Any state $\rho_{I}$ on ${\cal H}_{I}$ satisfies 
\begin{align}
\Tr \hat{H}_I \rho_{I} = \sum_j w_j \Tr {\cal E}_j(\rho_{I})+ \sum_j \Tr \hat{H}_I {\cal E}_j(\rho_{I}),
\Label{CP1}
\end{align}
where $\hat{H}_{I}$ is the Hamiltonian of $I$.
\end{definition}

Since the level-1 energy conservation law is too weak, as explained later,
we introduce stronger conditions with an orthonormal basis $\{|x\rangle\}_x$ of $\cH_I$ such that
$|x\rangle$ is an eigenstate of the Hamiltonian $\hat{H}_I $ associated with the eigenvalue $ h_x$.
For this purpose, we introduce the spectral decomposition of $\hat{H}_I$ as $\hat{H}_I=\sum_h h P_{h}$, where $P_{h}$ is the projection to the energy eigenspace of $\hat{H}_{I}$ whose eigenvalue is $h$.

\begin{figure}[htbp]
\begin{center}
\includegraphics[scale=0.3]{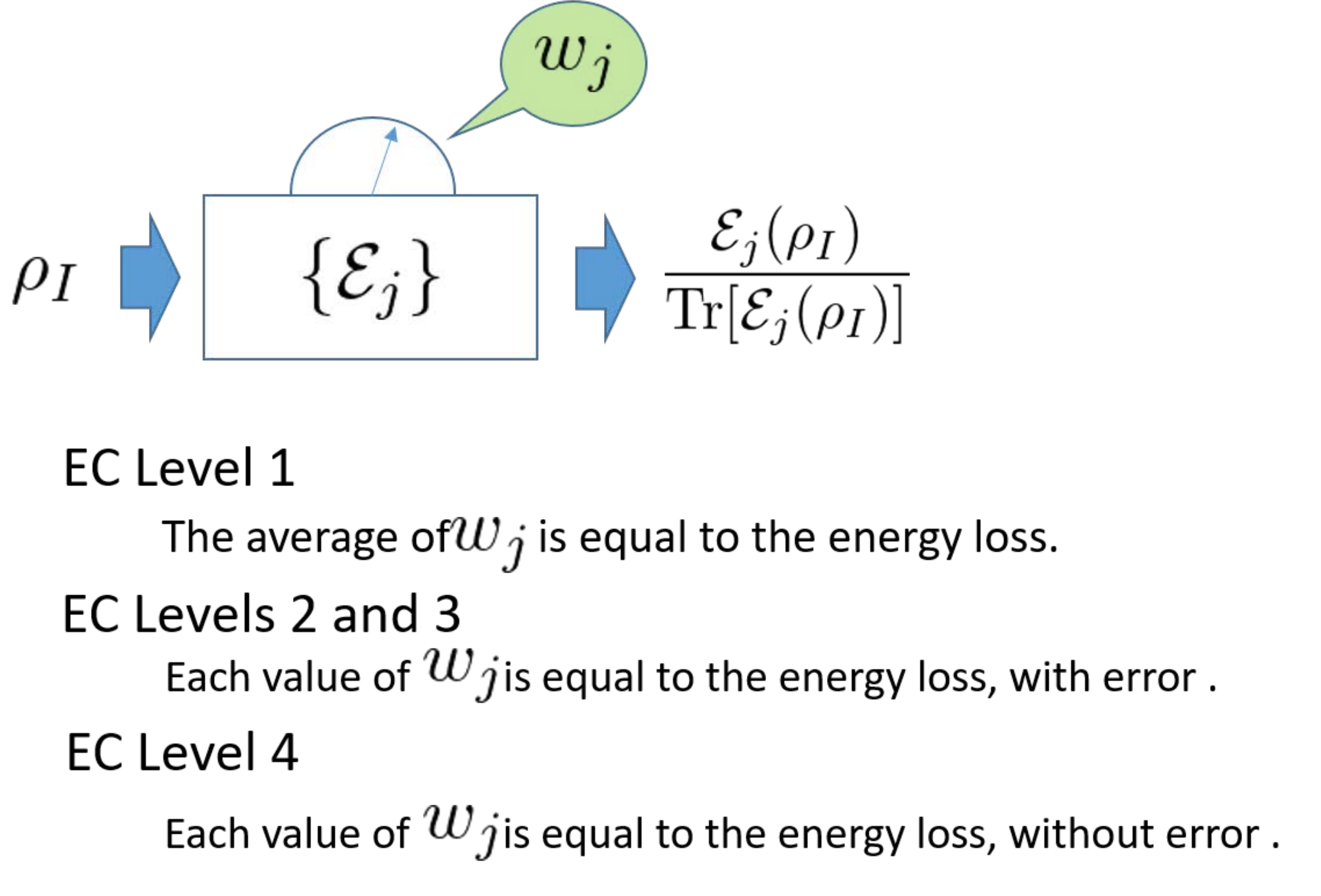}
\end{center}
\caption{The concepts of the energy conservation laws.
}
\Label{EC1fig}
\end{figure}

\begin{definition}[level-4 energy conservation law]\Label{CP-strong}
A CP-work extraction $\{{\cal E}_{j},w_{j}\}_{j\in {\cal J}}$ is called a level-4 CP-work extraction
when 
\begin{align}
{\cal E}_{j}(\Pi_{x}) 
=P_{h_{x}-w_{j}}{\cal E}_{j}(\Pi_{x})P_{h_{x}-w_{j}},
\Label{CP2}
\end{align}
for any initial eigenstate $\Pi_{x}:=|x\rangle\langle x|$.
This condition is called the level-4 energy conservation law.
\end{definition}
The meaning of \eqref{CP2} is that
the resultant state $\frac{1}{\Tr {\cal E}_{j}(\Pi_{x})}{\cal E}_{j}(\Pi_{x})$
must be an energy eigenstate with energy $h_{x}-w_{j}$
because the remaining energy in the internal system is $h_{x}-w_{j}$.
That is, the level-4 energy conservation law requires 
the conservation of energy for every possible outcome $j$.
One might consider that the level-4 energy conservation law is too strong a constraint.
However, as shown in Theorem \ref{3-14-8L}, this condition holds if and only if 
the natural energy conservation law holds as the dynamics between the internal system and the quantum storage of the full-quantum model, which can be regarded as the first step of the measuring process in the indirect measurement model (Definition \ref{FQ-normal})
and the initial state for the quantum storage is an energy eigenstate.
Further, as precisely mentioned in Lemma \ref{L5-11}, when the level-4 energy conservation law holds,
the measurement outcome precisely reflects the amount of energy lost by from the internal system.
So, such a CP-work extraction can be used for the purpose mentioned in the second paragraph of the introduction.

\begin{widetext}
\begin{figure}[htbp]
\begin{center}
\includegraphics[scale=0.3]{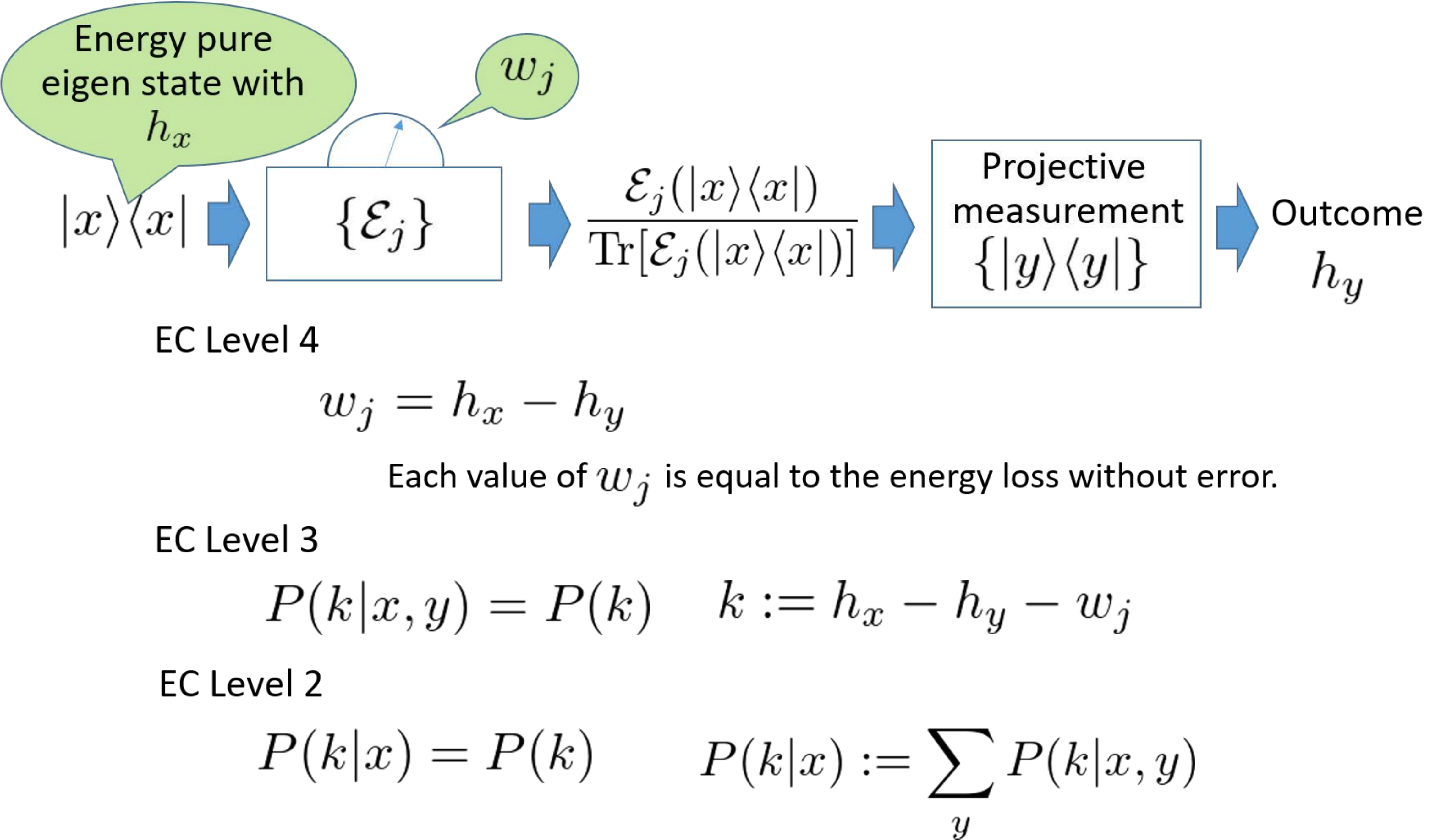}
\end{center}
\caption{The level-2, 3 and 4 energy conservation laws}
\Label{EC2fig}
\end{figure}
\end{widetext}

However, there is a possibility that the initial state of the quantum storage is not an energy eigenstate.
To characterize such a case, 
we introduce intermediate conditions in between the level-1 and level-4 energy conservation laws.
In the latter section, as Lemmas \ref{4-2-1L} and \ref{3-13-2L},
we will clarify what physical situations in the indirect model correspond to these two conservation laws. 


Here, to introduce two other energy conservation laws,
we introduce several notions for a CP-work extraction $\{{\cal E}_{j},w_{j}\}_{j\in {\cal J}}$.
Let the initial state on $I$ be an eigenstate $|x\rangle$.
After the CP-work extraction $\{{\cal E}_{j},w_{j}\}_{j\in {\cal J}}$,
we perform a measurement of $\{\Pi_{y}:=|y\rangle\langle y| \}$ 
on the resultant system ${\cal H}_I$. 
Then, we obtain the joint distribution 
$P_{JY|X}(j,y| x)$ of the two outcomes $j$ and $y$ as follows.
\begin{align}
P_{JY|X}(j,y| x)= 
\langle y| {\cal E}_{j}(\Pi_{x}) |y\rangle.
\end{align}
Then, we introduce the random variable $K:=h_{X} -h_Y - w_J$ that describes the difference between the loss of energy and the extracted energy.
So, we define the two distributions 
\begin{align}
P_{K|X}(k| x) &:= \sum_{j,y:h_{x} -h_y - w_j=k} 
P_{JY|X}(j,y| x) \\
P_{K|YX}(k| y,x) &:= \sum_{j:h_{x} -h_y - w_j=k} 
P_{J|YX}(j| y,x) ,
\end{align}
where 
\begin{align}
P_{J|YX}(j| y,x) :=
\frac{P_{JY|X}(j,y| x)}{\sum_j P_{JY|X}(j',y| x)}.
\end{align}

\begin{definition}[level-2 and 3 energy conservation laws]\Label{CP-semi}
Now, we introduce two energy conservation laws for a CP-work extraction $\{{\cal E}_{j},w_{j}\}_{j\in {\cal J}}$
when the level-1 energy conservation law holds.
When the relation $P_{K|X}(k|x)=P_{K|X}(k|x')$ holds for $k$ and $x\neq x'$,
the CP-work extraction $\{{\cal E}_{j},w_{j}\}_{j\in J}$ is called a level-2 CP-work extraction.
Similarly,
when the relation $P_{K|Y,X}(k|y,x)=P_{K|Y,X}(k|y',x')$ 
holds for $k$ and $(x,y)\neq (x',y')$,
the CP-work extraction $\{{\cal E}_{j},w_{j}\}_{j\in J}$ is called a level-3 CP-work extraction.
These conditions are called the level-2 and 3 energy conservation laws.
\end{definition}

The level-4 energy conservation law can be characterized in terms of 
the distribution $P_{K|X}$.
That is, a CP-work extraction $\{{\cal E}_{j},w_{j}\}_{j\in {\cal J}}$ is a level-4 CP-work extraction
if and only if 
\begin{align}
P_{K|X}(k| x)=\delta_{k,0} \Label{3-5-1}
\end{align}
for any initial eigenstate $|x\rangle$.
So, we find that the level-4 energy conservation law is stronger than the level-3 energy conservation law.
To investigate the property of a level-4 CP-work extraction,
we employ the pinching ${\cal P}_{\hat{H}_I}$ of the Hamiltonian $
\hat{H}_I=\sum_h h H_p$ as
\begin{align}
{\cal P}_{\hat{H}_I}(\rho):= \sum_h P_h \rho P_h .
\Label{4-8-9eqx}
\end{align}

\begin{lemma}\Label{3-14-1L}
A level-4 CP-work extraction $\{{\cal E}_{j},w_{j}\}_{j\in {\cal J}}$ 
satisfies
\begin{align}
{\cal P}_{\hat{H}_I}({\cal E}_{j}(\rho))
={\cal P}_{\hat{H}_I}(
{\cal E}_{j}({\cal P}_{\hat{H}_I}(\rho)))
=
{\cal E}_{j}({\cal P}_{\hat{H}_I}(\rho)).\Label{3-14-1eq}
\end{align}
That is, 
when we perform a measurement of an observable commuting with the Hamiltonian $\hat{H}_I$ after any level-4 CP-work extraction,
the initial state ${\cal P}_{\hat{H}_I}(\rho) $
has the same behavior as the original state $\rho$.
\end{lemma}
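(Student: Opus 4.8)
The plan is to reduce everything to a statement about the individual Kraus operators of ${\cal E}_j$, and then to propagate that statement from the diagonal inputs $\Pi_x$, to which the level-4 law \eqref{CP2} directly applies, to arbitrary off-diagonal inputs $|x\rangle\langle x'|$. Fix an outcome $j$ and write a Kraus decomposition ${\cal E}_j(\rho)=\sum_k K_{j,k}\rho K_{j,k}^\dagger$. First I would combine the level-4 condition with positivity. Writing $Q_x:=I-P_{h_x-w_j}$, the condition $Q_x {\cal E}_j(\Pi_x) Q_x=0$ becomes $\sum_k (Q_x K_{j,k}|x\rangle)(Q_x K_{j,k}|x\rangle)^\dagger=0$, a sum of positive semidefinite rank-one operators that vanishes; hence each summand vanishes and $K_{j,k}|x\rangle=P_{h_x-w_j}K_{j,k}|x\rangle$ for every $k$ and every eigenstate $|x\rangle$. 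In words, each Kraus operator sends the energy-$h_x$ eigenvector $|x\rangle$ into the energy-$(h_x-w_j)$ eigenspace. This single structural fact is the engine of the whole proof.

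Second, I would feed this back into the off-diagonal matrix elements. For any two eigenstates $|x\rangle,|x'\rangle$,
$${\cal E}_j(|x\rangle\langle x'|)=\sum_k (K_{j,k}|x\rangle)(K_{j,k}|x'\rangle)^\dagger ,$$
and since $K_{j,k}|x\rangle$ lives in the $(h_x-w_j)$-eigenspace while $K_{j,k}|x'\rangle$ lives in the $(h_{x'}-w_j)$-eigenspace, we obtain the support statement ${\cal E}_j(|x\rangle\langle x'|)=P_{h_x-w_j}{\cal E}_j(|x\rangle\langle x'|)P_{h_{x'}-w_j}$. Applying the pinching \eqref{4-8-9eqx} then splits into two cases. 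If $h_x=h_{x'}$, the operator already lives inside a single energy eigenspace, so ${\cal P}_{\hat{H}_I}({\cal E}_j(|x\rangle\langle x'|))={\cal E}_j(|x\rangle\langle x'|)$; if $h_x\neq h_{x'}$, then no single $P_h$ can act as the identity on both sides simultaneously, so ${\cal P}_{\hat{H}_I}({\cal E}_j(|x\rangle\langle x'|))=0$.

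Finally I would assemble the identities by linearity. Expanding $\rho=\sum_{x,x'}\rho_{xx'}|x\rangle\langle x'|$ and recalling that the pinching keeps exactly the terms with $h_x=h_{x'}$, both ${\cal P}_{\hat{H}_I}({\cal E}_j(\rho))$ and ${\cal E}_j({\cal P}_{\hat{H}_I}(\rho))$ collapse, via the two-case computation above, to the same sum $\sum_{x,x':h_x=h_{x'}}\rho_{xx'}{\cal E}_j(|x\rangle\langle x'|)$, which gives the first equality; applying the pinching once more to this sum changes nothing, since every surviving term already has $h_x=h_{x'}$, and this yields the middle equality. I expect the only genuine obstacle to be the very first step: the level-4 law constrains only eigenstate inputs, and a naive polarization argument fails because a superposition $|x\rangle+|x'\rangle$ with $h_x\neq h_{x'}$ is not an energy eigenstate, so the law cannot be invoked for it. The passage to a per-Kraus-operator constraint through the positivity and rank-one argument is therefore essential and is where all the work lies; the remaining steps are routine bookkeeping in the energy eigenbasis.
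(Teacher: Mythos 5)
Your proposal is correct and follows essentially the same route as the paper: both extract from the level-4 condition \eqref{CP2} the block structure of the Kraus operators (each one maps $\im P_h$ into $\im P_{h-w_j}$) and then obtain \eqref{3-14-1eq} by linearity, the paper via the projector identity $P_h{\cal E}_j(\rho)P_h={\cal E}_j(P_{h+w_j}\rho P_{h+w_j})$ summed over $h$, you via the equivalent matrix-element bookkeeping in the energy eigenbasis. The one added value of your write-up is that the positivity/rank-one argument explicitly justifies the per-Kraus-operator constraint, which the paper's proof asserts without detail.
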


If we measure the Hamiltonian $\hat{H}_I$, 
we have the same result 
even if we apply the pinching ${\cal P}_{\hat{H}_I}$
before the measurement of the Hamiltonian $\hat{H}_I$.
Thus, due to Lemma \ref{3-14-1L}, 
if we measure the Hamiltonian $\hat{H}_I$ after a level-4 work extraction
$\{{\cal E}_{j},w_{j}\}_{j\in {\cal J}}$,
we have the same result 
even if we apply the pinching ${\cal P}_{\hat{H}_I}$
before the level-4 work extraction $\{{\cal E}_{j},w_{j}\}_{j\in {\cal J}}$.

\begin{proof}
We employ the Kraus representation $\{A_{j,l}\}$ of ${\cal E}_j$ 
\begin{align}
{\cal E}_j(\rho)=\sum_l A_{j,l}\rho A_{j,l}^\dagger. 
\end{align}
Then, due to the condition \eqref{CP2},
$A_{j,l}$ has the following form
\begin{align}
A_{j,l}= \sum_{h} A_{j,l,h},
\end{align}
where $A_{j,l,h}$ is a map from $\im P_h$ to $\im P_{h-w_j}$ and $\im P_h$ is the image of $P_h$. 
Thus, 
\begin{align}
P_{h}{\cal E}_j(\rho)P_{h}
=
{\cal E}_j(P_{h+w_j} \rho P_{h+w_j})
=
P_{h}{\cal E}_j(P_{h+w_j} \rho P_{h+w_j})P_{h}.
\end{align}
Taking the sum in $h$, we obtain \eqref{3-14-1eq}.
\end{proof}

Note that an arbitrary Gibbs state of a quantum system commutes with the Hamiltonian of the quantum system.
Thus, when the internal system $I$ consists of the systems in Gibbs states, a level-4 CP work extraction 
gives the energy loss of $I$ without error.

\begin{figure}[htbp]
\begin{center}
\includegraphics[scale=0.4]{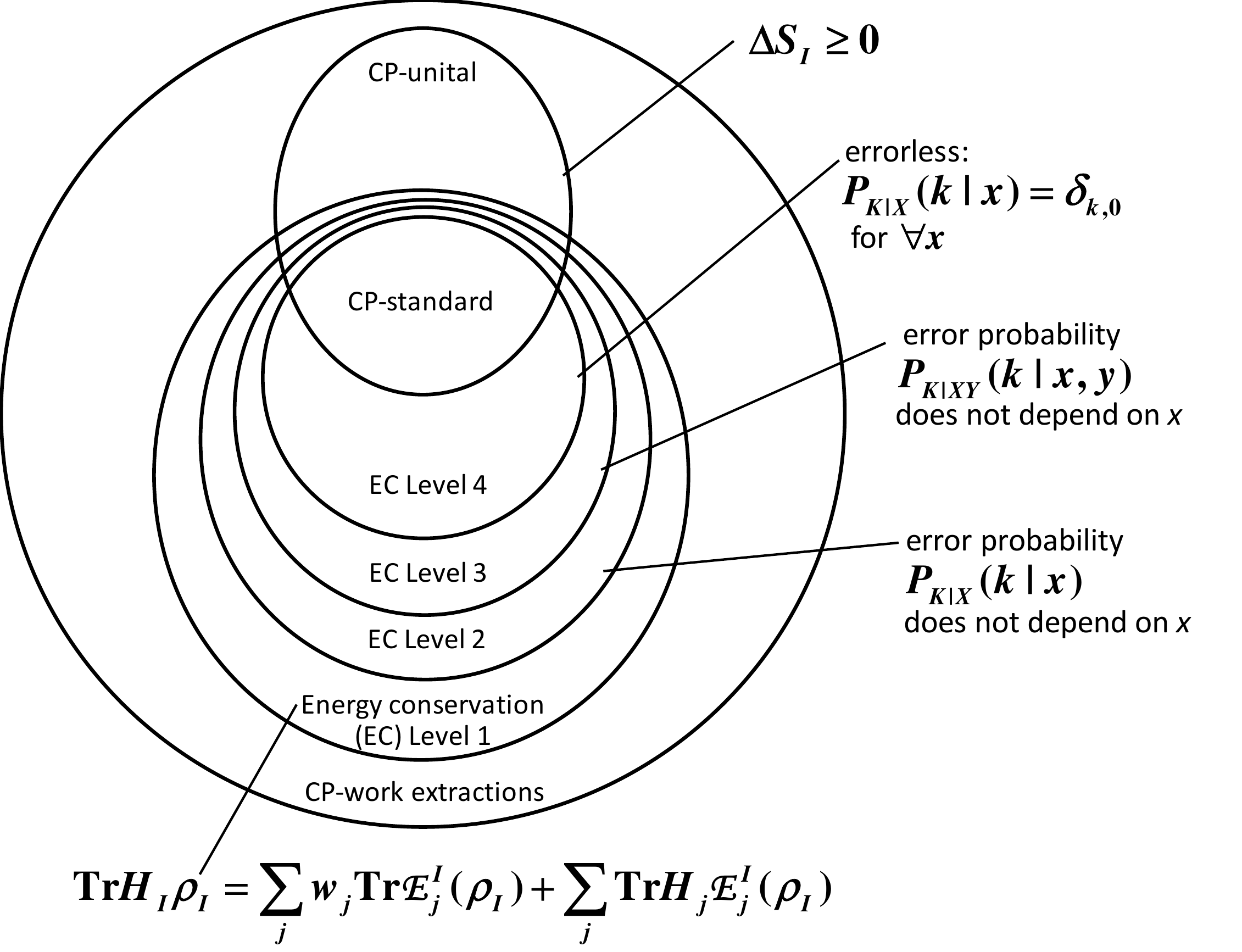}
\end{center}
\caption{A Venn diagram of the CP-work extractions}
\Label{venn1}
\end{figure}

We also consider the following condition for a CP-work extraction.
\begin{definition}[CP-unital work extraction]\Label{CP-unital}
Consider a CP-work extraction $\{{\cal E}_{j},w_{j}\}_{j\in {\cal J}}$.
When the CPTP-map $\sum_{j}{\cal E}_{j}$ is unital, namely when
\begin{align}
\sum_{j}{\cal E}_{j}(\hat{1}_{I})=\hat{1}_{I}
\Label{3-14-10eq}
\end{align}
 holds, we refer to the CP-work extraction  $\{{\cal E}_{j},w_{j}\}_{j\in {\cal J}}$ as the CP-unital work extraction.
\end{definition}
Because an arbitrary unital map does not decrease the von Neumann entropy \cite{openquantum}, the CP-unital work extraction corresponds to the class of work extractions which do not decrease the entropy of $I$.
That is,
an arbitrary $\rho_{I}$ satisfies 
\begin{eqnarray}
\Delta S_{I}&:=&S\left(\sum_{j}{\cal E}_{j}(\rho_{I})\right)-S(\rho_{I})
\ge 0,
\end{eqnarray}
where $S(\rho):=\Tr[-\rho\log\rho]$.
In contrast, we have the following characterization of the entropy of the output random variable.

\begin{lemma}\Label{4-8-1L}
Let $\{{\cal E}_{j},w_{j}\}_{j\in {\cal J}}$
be a level-4 work extraction. 
We denote the random variable describing the amount of extracted work by $W$. 
Then, for any initial state $\rho_I$ of the internal system,
the resultant entropy $S[W]$ of the system $W$ is
\begin{align}
S[W] \le 2 \log N,\Label{4-8-1eq}
\end{align}
where $N$ is the number of eigenvalues of the Hamiltonian $\hat{H}_I$
in the internal system.
\end{lemma}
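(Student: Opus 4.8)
The plan is to realize the work variable $W$ as a deterministic coarse-graining of the pair (initial energy, final energy) of the internal system, and then to bound its entropy by the logarithm of the number of admissible such pairs, which the level-4 law forces to be at most $N^2$.

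First I would write down the distribution of the work: for the initial state $\rho_I$, outcome $j$ occurs with probability $p_j=\Tr \mathcal{E}_j(\rho_I)$ and carries work $w_j$, so $P_W(w)=\sum_{j:\,w_j=w}p_j$. To expose the energy structure I would resolve this probability over the initial energy eigenspaces by setting $p(h,j):=\Tr \mathcal{E}_j(P_h\rho_I P_h)$. Summing over $j$ and using that $\sum_j\mathcal{E}_j$ is trace preserving gives $\sum_j p(h,j)=\Tr P_h\rho_I=q_h$, the probability that the initial energy is $h$. Summing instead over $h$ and invoking Lemma \ref{3-14-1L} (pinching commutes with $\mathcal{E}_j$, and pinching is trace preserving) gives $\sum_h p(h,j)=\Tr \mathcal{E}_j(\mathcal{P}_{\hat{H}_I}(\rho_I))=\Tr \mathcal{E}_j(\rho_I)=p_j$. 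Hence $p(h,j)$ is a genuine joint distribution whose two marginals are the initial-energy distribution $\{q_h\}$ and the outcome distribution $\{p_j\}$.

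Next I would use the level-4 condition to assign a final energy. From the Kraus description in the proof of Lemma \ref{3-14-1L}, each $A_{j,l}$ maps $\im P_h$ into $\im P_{h-w_j}$, whence $\mathcal{E}_j(P_h\rho_I P_h)=P_{h-w_j}\mathcal{E}_j(P_h\rho_I P_h)P_{h-w_j}$. Taking the trace shows that $p(h,j)\neq 0$ only if $h-w_j$ is itself an eigenvalue of $\hat{H}_I$; in that case I set the final energy $h':=h-w_j$. Pushing $p(h,j)$ forward along $(h,j)\mapsto(h,\,h-w_j)$ then produces a probability distribution $\tilde p(h,h'):=\sum_{j:\,w_j=h-h'}p(h,j)$ supported entirely on the $N\times N$ grid of eigenvalue pairs. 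A short computation confirms that $W=h-h'$ is a deterministic function of $(h,h')$ under this map and that the induced marginal reproduces $P_W$ exactly, since $\sum_h\tilde p(h,h-w)=\sum_{j:\,w_j=w}p_j=P_W(w)$.

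The conclusion then follows from two elementary facts. Since $W$ is a function of the random variable $(h,h')$, data processing gives $S[W]\le S[(h,h')]=S[\tilde p]$; and since $\tilde p$ is supported on at most $N^2$ points, $S[\tilde p]\le \log N^2=2\log N$. The only genuine subtlety, and the step I would be most careful about, is the normalization and confinement of $\tilde p$ to the $N^2$ eigenvalue pairs. This is precisely where the level-4 law is indispensable: it guarantees that every outcome with nonzero probability carries an eigenvalue $h$ to another eigenvalue $h-w_j$ inside the spectrum, rather than to an energy outside it; without this constraint the final energy would not be confined to the $N$-element spectrum and the alphabet bound would fail.
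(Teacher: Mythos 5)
Your proof is correct and, despite the extra scaffolding, rests on exactly the same idea as the paper's own three-line proof: the level-4 law confines every work value occurring with nonzero probability to the set of spectral differences $\{h_x-h_{x'}\}$, which has cardinality at most $N^2$, whence $S[W]\le \log N^2 = 2\log N$. Your joint distribution $p(h,j)$, the pushforward $\tilde p(h,h')$, and the data-processing step are a valid but longer route to the same alphabet-size bound that the paper states directly.
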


One might consider that Lemma \ref{4-8-1L} is too weak to justify the unital condition.
However, as shown in Theorem \ref{3-13-5L}, 
the unital condition is a natural condition for CP-work extraction.

\begin{proof}
Due to the condition for a level-4 work extraction,
for a possible $w_j$, there exist eigenstates $x$ and $x'$ such that
$w_j=h_x-h_{x'}$.
Hence, the number of possible $w_j$ is less than $N^2$.
Thus, we obtain \eqref{4-8-1eq}.
\end{proof}

When a CP-work extraction is level-4 as well as unital,
we refer to it as a standard CP-work extraction for convenience of description
because Theorems \ref{3-14-8L} and \ref{3-13-5L} guarantee that 
these conditions are satisfied under a natural setting as illustrated in a Venn diagram (Fig.\ref{venn1}) of the CP-work extractions.

\section{Fully quantum work extraction}\Label{s4}
Next, we consider the unitary dynamics of  heat engine between the internal system $I$ and the external system $E$ which stores the extracted work from $I$.
This dynamics is essential for our CP-work extraction model as follows.
Here, the internal system $I$ is assumed to interact only with $E$, and
the external system $E$ is described by the Hilbert space  ${\cal H}_{E}$ and has the Hamiltonian $\hat{H}_{E}$.
In relation to the CP-work extraction model,
this type of description of a heat engine is given as an indirect measurement process that consists of the following two steps \cite{Ozawa1984}.
The first step is the unitary time evolution $U_{IE}$ that conserves the energy of the combined system $IE$, 
and the second step is the measurement of the Hamiltonian $\hat{H}_{E}$.
That is, the second step is given as the measurement corresponding to the spectral decomposition 
of the Hamiltonian $\hat{H}_{E}$.
While previous works \cite{Car2,Popescu2014,oneshot2,Horodecki,oneshot1,oneshot3,Egloff,Brandao,Popescu2015} have 
discussed the unitary time evolution $U_{IE}$ with a proper energy conservation law,
the relation with the CP-work extraction model was not discussed.

\begin{definition}[Fully quantum (FQ) work extraction]\Label{FQ-normal}
Let us consider an external system ${\cal H}_{E}$ with the Hamiltonian 
$\hat{H}_{E}=\sum_{j\in {\cal J}} h_{E,j} P_{E,j}$.
Then, a unitary transformation $U$ on ${\cal H}_{I}\otimes{\cal H}_{E}$
and an initial state $\rho_{E}$ of the external system ${\cal H}_{E}$
give the CP-work extraction $\{{\cal E}_j , w_j \}_{j\in {\cal J}}$ as follows.
\begin{align}
{\cal E}_j(\rho_I) &:=
\Tr_{E} U (\rho_I \otimes \rho_{E} )U^\dagger (\hat{1}_I \otimes P_{E,j}) \\
w_j &:= h_{E,j} - \Tr \hat{H}_{E} \rho_{E}.
\end{align}
Then, the quartet
${\cal F}=({\cal H}_{E}, \hat{H}_{E}, U, \rho_{E})$
is called a fully quantum (FQ) work extraction.
The above CP-work extraction $\{{\cal E}_j , w_j \}_{j\in {\cal J}}$ is simplified to 
$CP({\cal F})$.
In particular, 
the FQ-work extraction
${\cal F}$ satisfying $CP({\cal F})=\{{\cal E}_j , w_j \}_{j\in {\cal J}}$
is called a realization of the CP-work extraction $\{{\cal E}_j , w_j \}_{j\in {\cal J}}$.
\end{definition}

Here, any FQ-work extraction corresponds to a CP-work extraction.
Conversely, considering the indirect model for an instrument model,
we can show that 
there exists a FQ-work extraction
${\cal F}$ with a pure state $\rho_{E}$
for an arbitrary CP-work extraction $\{{\cal E}_j , w_j \}_{j\in {\cal J}}$ 
such that $CP({\cal F})= \{{\cal E}_j , w_j \}_{j\in {\cal J}}$\cite{Ozawa1984}\cite[Theorem 5.7]{HIKKO}.

Since the heat engine needs to satisfy the conservation law of energy,
we consider the following energy conservation laws for 
a FQ-work extraction $({\cal H}_{E}, \hat{H}_{E}, U, \rho_{E})$.

\begin{definition}[FQ energy conservation law]\Label{D7}
When a unitary $U$ is called energy conserving for the Hamiltonian $\hat{H}_I$ and $\hat{H}_{E}$
\begin{align}
[U,\hat{H}_{I}+\hat{H}_{E}]=0.\Label{FQ2}
\end{align}
Then, 
an FQ-work extraction $
{\cal F}=
({\cal H}_{E}, \hat{H}_{E}, U, \rho_{E})$ is called energy-conserving
when the unitary $U$ is energy-conserving for the Hamiltonian $\hat{H}_I$ and $\hat{H}_{E}$.
\end{definition}
The condition \eqref{FQ2} is called the FQ energy conservation law.
Note that the above condition does not depend on the choice of the initial state $\rho_{E}$ on the external system.
Indeed, the condition \eqref{FQ2} is equivalent to
the condition
\begin{align}
\Tr (\hat{H}_{I}+\hat{H}_{E}) U (\rho_I \otimes \rho_{E}) U^\dagger
&=
\Tr (\hat{H}_{I}+\hat{H}_{E}) (\rho_I \otimes \rho_{E}) 
\nonumber\\
&\enskip\enskip\enskip\enskip\hbox{ for }
\forall \rho_I \hbox{ and }\forall \rho_{E}.\Label{FQ3}
\end{align}
When we make restrictions on the state $\rho_{E}$,
the condition \eqref{FQ3} is weaker than the condition \eqref{FQ2}.
For example, when the condition \eqref{FQ3} is given with a fixed $\rho_{E}$,
the CP-work extraction $CP({\cal F})$ satisfies the level-1 energy conservation law.
However, such a restriction is unnatural, 
because such restricted energy conservation cannot recover the conventional energy conservation.
Thus, we consider the condition \eqref{FQ3} without any constraint on the state $\rho_{E}$.
Hence, we have no difference between the condition \eqref{FQ2} and the average energy conservation law \eqref{FQ3}
in this scenario.
Indeed, if we do not consider the measurement process on the external system $E$,
the model given in Definition \ref{D7} corresponds to the formulations which are used in Refs.\cite{oneshot2,Horodecki,oneshot1,oneshot3,Egloff,Brandao,Popescu2015};

Here, we discuss how to realize the unitary $U$ satisfying 
\eqref{FQ2}.
For this purpose, we prepare the following lemma.
\begin{lemma}\Label{3-13-5L'}
For an arbitrary small $\epsilon>0$
and a unitary $U$ satisfying \eqref{FQ2},
there exist a Hermitian matrix $B$ and a time $t_0>0$ such that
\begin{align}
\| B \| \le \epsilon, \quad
U= \exp(it_0 ( \hat{H}_{I}+\hat{H}_{E}+B)).\Label{4-19-5eq}
\end{align} 
\end{lemma}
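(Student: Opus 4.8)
The plan is to exploit the commutation relation \eqref{FQ2} to simultaneously block-diagonalize $U$ and the total Hamiltonian $G:=\hat{H}_I+\hat{H}_E$, and then to construct $B$ block by block as a suitably rescaled Hermitian logarithm, choosing the time $t_0$ large enough that this logarithm is forced to be small. First I would use that $G$ is Hermitian to decompose the finite-dimensional space ${\cal H}_I\otimes{\cal H}_E$ into the orthogonal direct sum $\bigoplus_g {\cal H}_g$ of eigenspaces of $G$, on which $G$ acts as the scalar $g$. Since $[U,G]=0$, the unitary $U$ leaves each eigenspace invariant, so it decomposes as $U=\bigoplus_g U_g$ with each $U_g$ a unitary on ${\cal H}_g$.

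Next I would fix any $t_0\ge \pi/\epsilon$ and, on each block, set $V_g:=e^{-it_0 g}U_g$, which is again unitary. Using a spectral decomposition $V_g=\sum_k e^{i\theta_{g,k}}|v_{g,k}\rangle\langle v_{g,k}|$ with the phases chosen in the principal range $\theta_{g,k}\in(-\pi,\pi]$, I would define the Hermitian operator $B_g:=\frac{1}{t_0}\sum_k \theta_{g,k}|v_{g,k}\rangle\langle v_{g,k}|$ on ${\cal H}_g$ and collect these into $B:=\bigoplus_g B_g$. By construction $\exp(it_0 B_g)=V_g$, and since each eigenphase obeys $|\theta_{g,k}|\le\pi$ we obtain $\|B_g\|\le \pi/t_0\le\epsilon$; as $B$ is block diagonal, its norm is the maximum of the block norms, so $\|B\|\le\epsilon$.

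It then remains to verify the exponential formula. Because $B$ is block diagonal with respect to the eigenspaces of $G$, so is $G+B$, and on ${\cal H}_g$ the operator $G$ acts as the scalar $g$, which commutes with $B_g$. Hence $\exp(it_0(G+B))$ restricts on ${\cal H}_g$ to $e^{it_0 g}\exp(it_0 B_g)=e^{it_0 g}V_g=U_g$, and taking the direct sum over $g$ gives $\exp(it_0(\hat{H}_I+\hat{H}_E+B))=U$, as required.

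The computation is elementary once the block structure is in place; the only point requiring care — and the step I regard as the \emph{crux} — is guaranteeing $\|B\|\le\epsilon$ \emph{simultaneously} on every block. This is precisely why the time $t_0$ is introduced as a free parameter: confining every eigenphase to $(-\pi,\pi]$ bounds each $\|B_g\|$ by $\pi/t_0$ uniformly in $g$, so the single choice $t_0\ge\pi/\epsilon$ suffices for all blocks at once. Finite dimensionality ensures there are finitely many blocks and that the principal logarithm is well defined, so no convergence issues arise.
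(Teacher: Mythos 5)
Your proof is correct and follows essentially the same route as the paper's: both exploit \eqref{FQ2} to simultaneously (block-)diagonalize $U$ and $\hat{H}_I+\hat{H}_E$, absorb the term $t_0(\hat{H}_I+\hat{H}_E)$ by reducing the eigenphases modulo $2\pi$ to the principal range $(-\pi,\pi]$, and take $t_0\ge \pi/\epsilon$ so that the rescaled logarithm satisfies $\|B\|\le \epsilon$. Your per-block principal logarithm of $V_g=e^{-it_0 g}U_g$ is just a repackaging of the paper's choice of integers $n_x$ in $D=C-t(\hat{H}_I+\hat{H}_E)-\sum_x 2\pi n_x |x\rangle\langle x|$, with the minor advantage that the block structure makes the commutation of the logarithm with the Hamiltonian automatic rather than needing $C$ to be a spectral function of $U$.
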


\begin{proof}
Choose a Hermitian matrix $C$ such that
$\|C\|\le \pi$ and $U=\exp(i C)$.
Since $C$ and $\hat{H}_{I}+\hat{H}_{E}$ commute, 
we can choose a common basis $\{|x\rangle\}$ of $\cH_I \otimes \cH_E$ 
that diagonalizes $C$ and $\hat{H}_{I}+\hat{H}_{E}$ simultaneously.
For any $t$, we can choose a set of integers $\{n_x\}$ such that
$\|D_t\| \le \pi$,
where $D:=C- t (\hat{H}_{I}+\hat{H}_{E}) - \sum_{x}2 \pi n_x |x\rangle\langle  x|$.
Hence, the Hermitian matrix $B:=\frac{1}{t}D$ satisfies 
both conditions in \eqref{4-19-5eq} with $\epsilon = \frac{\pi}{t}$.
So, choosing $t$ large enough, we obtain the desired result.
\end{proof}

Thanks to Lemma \ref{3-13-5L'},
any unitary $U$ satisfying \eqref{FQ2} can be realized with a sufficiently long time $t$ 
by adding the small interaction Hamiltonian term $B$.
Note that the interaction $B$ does not change in $0<t<t_0$.
Thus, in order to realize the unitary $U$, we only have to turn on the interaction $B$ at $t=0$ and to turn off it at $t=t_0$.
From $t=0$ to $t=t_0$, we do not have to control the total system $IE$ time dependently.
Namely, we can realize a ``clockwork heat engine," which is programmed to perform the unitary transformation $U$ automatically.

Now, we have the following lemma.
\begin{lemma}\Label{4-2-1L}
For an energy conserving FQ-work extraction ${\cal F}=({\cal H}_{E}, \hat{H}_{E}, U, \rho_{E})$,
the CP-work extraction $CP({\cal F})$ satisfies 
the level-2 energy conservation law.
\end{lemma}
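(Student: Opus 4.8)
The plan is to verify the two requirements packaged into Definition~\ref{CP-semi}: that $CP(\mathcal F)$ obeys the level-1 law, and that $P_{K|X}(k|x)$ is independent of $x$. The level-1 law is the routine half. Tracing $\mathcal E_j(\rho_I)=\Tr_E[(\hat 1_I\otimes P_{E,j})U(\rho_I\otimes\rho_E)U^\dagger]$ against $\hat H_I$, together with $\sum_j P_{E,j}=\hat 1_E$ and $\hat H_E=\sum_j h_{E,j}P_{E,j}$, gives
\begin{align}
\sum_j w_j\Tr\mathcal E_j(\rho_I)+\sum_j\Tr\hat H_I\mathcal E_j(\rho_I)
=\Tr\big[(\hat H_I+\hat H_E)U(\rho_I\otimes\rho_E)U^\dagger\big]-\Tr\hat H_E\rho_E.
\end{align}
The FQ energy conservation law in the form \eqref{FQ3} then collapses the first term to $\Tr\hat H_I\rho_I+\Tr\hat H_E\rho_E$, leaving exactly $\Tr\hat H_I\rho_I$, which is \eqref{CP1}. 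This is precisely the implication already noted below Definition~\ref{D7}.

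For the substantive half I would set $\bar E:=\Tr\hat H_E\rho_E$, so that $w_j=h_{E,j}-\bar E$ and $K=h_x-h_y-h_{E,j}+\bar E$, and expand
\begin{align}
P_{K|X}(k|x)=\sum_{\substack{j,y:\\ h_y+h_{E,j}=h_x+\bar E-k}}\Tr\big[(\ket{y}\bra{y}\otimes P_{E,j})\,U(\Pi_x\otimes\rho_E)U^\dagger\big].
\end{align}
The key observation is that the constraint defining the sum fixes the \emph{total} energy $h_y+h_{E,j}=h_x+\bar E-k=:\mu$, so the summed operator $\sum_{y,j:\,h_y+h_{E,j}=\mu}\ket{y}\bra{y}\otimes P_{E,j}$ is exactly the spectral projection $\Pi^{\tot}_\mu$ of $\hat H_I+\hat H_E$ onto total energy $\mu$. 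Here I use that $\{\ket y\}$ is an energy eigenbasis of $\hat H_I$, so that $\sum_{y:h_y=h}\ket{y}\bra{y}=P_h$.

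It then remains to invoke energy conservation. Since $[U,\hat H_I+\hat H_E]=0$, the unitary commutes with every spectral projection, so $U^\dagger\Pi^{\tot}_\mu U=\Pi^{\tot}_\mu$ and hence $P_{K|X}(k|x)=\Tr[\Pi^{\tot}_\mu(\Pi_x\otimes\rho_E)]$. Because $\ket x$ is an eigenstate of $\hat H_I$ of energy $h_x$, the total-energy projection selects precisely the $E$-energy-$(\mu-h_x)=(\bar E-k)$ part of $\rho_E$, giving $P_{K|X}(k|x)=\Tr[P_E^{(\bar E-k)}\rho_E]$ with $P_E^{(e)}:=\sum_{j:h_{E,j}=e}P_{E,j}$. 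This carries no dependence on $x$, which is the level-2 condition.

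I expect the only genuine obstacle to be the bookkeeping of the middle step: recognizing that the level set $\{(j,y):K=k\}$ reassembles into a single total-energy projection, after which energy conservation makes everything collapse. A minor point to state cleanly is that $\rho_E$ need not be diagonal in the energy eigenbasis, but this is harmless, since only the diagonal weights $\Tr[P_E^{(e)}\rho_E]$ survive the trace against $\Pi^{\tot}_\mu$.
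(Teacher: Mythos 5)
Your proof is correct and arrives at exactly the paper's closed form $P_{K|X}(k|x)=\sum_{j':h_{E,j'}=\bar E-k}\Tr_{E}\,\rho_{E}P_{E,j'}$, but the middle of the argument is organized genuinely differently. The paper works at the level of individual outcomes: it commutes each projection $P_{E,j}$ through $U^\dagger$ via the partial-matrix-element identity \eqref{4-2-2eq}, converting it into a shifted energy projection $P_{E,j'}$ whose index is pinned down by \eqref{4-2-1eq}, and only afterwards sums out $y$ (step $(a)$ of \eqref{3-14-19eq}) so that the two factors of $U$ cancel. You resum first: the constraint $h_y+h_{E,j}=h_x+\bar E-k$ assembles the whole family $\{\Pi_y\otimes P_{E,j}\}$ into the single spectral projection $\Pi^{\tot}_\mu$ of $\hat H_I+\hat H_E$ (using completeness of the eigenbasis, $\sum_{y:h_y=h}|y\rangle\langle y|=P_h$), and one application of $[U,\Pi^{\tot}_\mu]=0$, an immediate consequence of \eqref{FQ2}, collapses the trace to $\Tr[\Pi^{\tot}_\mu(\Pi_x\otimes\rho_E)]=\Tr[P_E^{(\bar E-k)}\rho_E]$. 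The two routes are logically equivalent, but yours buys a tidier justification: the paper's \eqref{4-2-2eq} is stated loosely (the index $j'$ in fact depends on $(j,x,y)$ through $h_{E,j'}=h_{E,j}+h_y-h_x$, and the displayed equation is really an identity between partial matrix elements, i.e. operators on $\cH_E$), whereas commutation of a unitary with the spectral projections of a conserved total Hamiltonian requires no such bookkeeping, and your closing remark that off-diagonal blocks of $\rho_E$ drop out of the trace is the right observation to make the generality explicit. You are also more complete on one point the paper leaves implicit: Definition \ref{CP-semi} makes the level-1 law a prerequisite for level-2, and your verification of \eqref{CP1} via \eqref{FQ3} is exactly the remark following Definition \ref{D7}, which the paper's proof silently invokes.
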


\begin{proof}
For any $j$, due to the FQ energy conservation law \eqref{FQ2},
we can choose $j'$ such that
\begin{align}
\langle x|  U^\dagger (\hat{1}_{I} \otimes P_{E,j}) |y\rangle 
=
\langle x| (\hat{1}_{I} \otimes P_{E,j'}) U^\dagger  |y\rangle .
\Label{4-2-2eq}
\end{align}
Then, the FQ energy conservation law \eqref{FQ2} implies that 
\begin{align}
& h_x -h_y- w_j =h_x -h_y-  h_{E,j}+ \Tr \hat{H}_{E} \rho_{E}
\nonumber \\
= &-h_{E,j'}+ \Tr \hat{H}_{E} \rho_{E}.\Label{4-2-1eq}
\end{align}
Hence, we can show that the distribution $P_{K|X=x}$ does not depend on $x$ as follows.
\begin{align}
&P_{K|X}(k|x) \nonumber\\
=& \sum_{j,y: h_x -h_y- w_j =k}
\Tr U (\Pi_{x} \otimes \rho_{E}) U^\dagger 
(\Pi_{y} \otimes P_{E,j}) \nonumber\\
=&
 \sum_{
 \substack{ j',y:\\ h_{E,j'}=-k+ \Tr \hat{H}_{E} \rho_{E}}
}
\Tr U (\Pi_{x} \otimes \rho_{E} )
(\hat{1}_{I} \otimes P_{E,j'})
 U^\dagger 
(\Pi_{y} \otimes \hat{1}_{E} ) \nonumber\\
\stackrel{(a)}{=} &
 \sum_{j': h_{E,j'}=-k+ \Tr \hat{H}_{E} \rho_{E}}
\Tr U (\Pi_{x} \otimes \rho_{E})
(\hat{1}_{I} \otimes P_{E,j'})
 U^\dagger 
(\hat{1}_{I} \otimes \hat{1}_{E} ) \nonumber\\
=&
 \sum_{j': h_{E,j'}=-k+ \Tr \hat{H}_{E} \rho_{E}}
\Tr (\Pi_{x} \otimes \rho_{E}) (\hat{1}_{I} \otimes P_{E,j'})
\nonumber\\
=&
 \sum_{j': h_{E,j'}=-k+ \Tr \hat{H}_{E} \rho_{E}}
\Tr_{E} \rho_{E} P_{E,j'} ,\Label{3-14-19eq}
\end{align}
which does not depend on $x$,
where $(a)$ follows from the combination of \eqref{4-2-2eq} and \eqref{4-2-1eq}.
\end{proof}

The external system interacts with the macroscopic outer system before and after the work extraction.
So, it is difficult to set the initial state $\rho_{E}$ of the external system 
to a superposition of eigenstates of the Hamiltonian $\hat{H}_{E}$.
Hence, it is natural to restrict the initial state $\rho_{E}$ to be an eigenstate of the Hamiltonian $\hat{H}_{E}$.
More generally, we restrict the initial state so that 
the support of the initial state $\rho_{E}$ 
belongs to an eigenspace of the Hamiltonian $\hat{H}_{E}$.



Now, we have the following theorem.
\begin{theorem}\Label{3-14-8L}
Let ${\cal F}=({\cal H}_{E}, \hat{H}_{E}, U, \rho_{E})$ 
be an energy-conserving FQ-work extraction.
Then, 
the support of the initial state $\rho_{E}$ 
belongs to an eigenspace of the Hamiltonian $\hat{H}_{E}$
if and only if 
the CP-work extraction $CP({\cal F})$ satisfies the level-4 energy conservation law.
\end{theorem}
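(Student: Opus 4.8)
\emph{Plan.} I would prove the biconditional as a short chain of equivalences, built on the characterization \eqref{3-5-1} of the level-4 law together with the explicit form of the distribution $P_{K|X}$ that is already produced inside the proof of Lemma \ref{4-2-1L}. For an energy-conserving FQ-work extraction, that computation (the last line \eqref{3-14-19eq}) gives, for every eigenstate $|x\rangle$,
\begin{align}
P_{K|X}(k|x)=\sum_{j':\, h_{E,j'}=\bar{E}-k}\Tr_{E}\rho_{E}P_{E,j'},
\qquad \bar{E}:=\Tr\hat{H}_{E}\rho_{E},
\end{align}
which is manifestly independent of $x$. The observation to exploit is that the right-hand side is nothing but the statistics of measuring $\hat{H}_{E}$ in the state $\rho_{E}$. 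Concretely, let $Q_{E'}:=\sum_{j':\,h_{E,j'}=E'}P_{E,j'}$ be the spectral projector of $\hat{H}_{E}$ onto the eigenvalue $E'$, and set $q(E'):=\Tr\rho_{E}Q_{E'}\ge 0$, the probability that such a measurement returns $E'$, so that $\sum_{E'}q(E')=1$. Then the displayed formula reads simply $P_{K|X}(k|x)=q(\bar{E}-k)$.

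By the characterization \eqref{3-5-1}, $CP({\cal F})$ is level-4 if and only if $q(\bar{E}-k)=\delta_{k,0}$ for all $k$, i.e. if and only if $q(E')=\delta_{E',\bar{E}}$: the $\hat{H}_{E}$-measurement of $\rho_{E}$ is deterministic, with outcome equal to the mean $\bar{E}$ (which is then forced to be an eigenvalue of $\hat{H}_{E}$). It remains to identify ``$q$ is a point mass'' with ``$\mathrm{supp}\,\rho_{E}$ lies in a single eigenspace of $\hat{H}_{E}$.'' For level-4 $\Rightarrow$ eigenspace: $q(\bar{E})=1$ means $\Tr\rho_{E}Q_{\bar{E}}=1$, and since $0\le Q_{\bar{E}}\le \hat{1}_{E}$ this forces $\rho_{E}=Q_{\bar{E}}\rho_{E}Q_{\bar{E}}$, so the support of $\rho_{E}$ lies in $\im Q_{\bar{E}}$, an eigenspace of $\hat{H}_{E}$. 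For the converse: if $\mathrm{supp}\,\rho_{E}$ lies in the eigenspace of eigenvalue $E_{0}$, then $\bar{E}=E_{0}$ and $q=\delta_{E_{0}}=\delta_{\bar{E}}$, so $CP({\cal F})$ is level-4 by the equivalence just established. This closes both directions.

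\emph{Where the work is, and an alternative.} Essentially all the content is borrowed from Lemma \ref{4-2-1L}: the use of the FQ energy conservation law \eqref{FQ2} both to make $P_{K|X}$ independent of $x$ and to turn it into the $\hat{H}_{E}$-measurement statistics of $\rho_{E}$. Given that, the theorem is the elementary remark that $P_{K|X}=\delta_{0}$ says exactly that the storage energy is deterministic, and the only delicate point is the implication $\Tr\rho_{E}Q_{\bar{E}}=1\Rightarrow \mathrm{supp}\,\rho_{E}\subseteq\im Q_{\bar{E}}$. If one prefers a self-contained argument for the direction eigenspace $\Rightarrow$ level-4 that bypasses \eqref{3-5-1}, I would work directly on operators: since $[U,\hat{H}_{I}+\hat{H}_{E}]=0$, the state $U(\Pi_{x}\otimes\rho_{E})U^{\dagger}$ remains supported in the joint eigenspace of $\hat{H}_{I}+\hat{H}_{E}$ with eigenvalue $h_{x}+E_{0}$; slicing any such vector against an $\hat{H}_{E}$-eigenvector of eigenvalue $h_{E,j}$ produces an $\hat{H}_{I}$-eigenvector of eigenvalue $h_{x}+E_{0}-h_{E,j}=h_{x}-w_{j}$ (using $w_{j}=h_{E,j}-E_{0}$ in this case), whence ${\cal E}_{j}(\Pi_{x})$ is supported on $\im P_{h_{x}-w_{j}}$, which is precisely \eqref{CP2}.
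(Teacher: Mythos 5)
Your proposal is correct and follows essentially the same route as the paper's own proof: both reduce the statement to the formula \eqref{3-14-19eq} established in Lemma \ref{4-2-1L}, identifying $P_{K|X}$ with the $\hat{H}_{E}$-measurement statistics of $\rho_{E}$, and then invoke the characterization \eqref{3-5-1} of the level-4 law. If anything, your write-up is more careful than the paper's, which asserts the equivalence between the support condition and the point-mass condition in one line, whereas you verify both directions explicitly (in particular the step $\Tr\rho_{E}Q_{\bar{E}}=1\Rightarrow\rho_{E}=Q_{\bar{E}}\rho_{E}Q_{\bar{E}}$) and supply a sound direct operator-level check of \eqref{CP2} as an alternative.
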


\begin{proof}
The support of the initial state $\rho_{E}$ 
belongs to an eigenspace of the Hamiltonian $\hat{H}_{E}$.
Due to the assumption 
the probability 
$\Tr_{E} \rho_{E} P_{E,j'} $ 
takes a non-zero value
only in the case when $h_{E,j'}=\Tr \hat{H}_{E} \rho_{E}$.
Due to \eqref{3-14-19eq}, 
the above condition is equivalent to 
the condition that
the probability 
$P_{K|X}(k|x) $
has non-zero value only when $k=0$.
Hence, we obtain the desired equivalence relation.
\end{proof}

Finally, we have the following characterization of
the entropy of the external system $E$.

\begin{lemma}\Label{4-8-2L}
Let ${\cal F}=({\cal H}_{E}, \hat{H}_{E}, U, \rho_{E})$ 
be an energy-conserving FQ-work extraction.
We assume that 
$\rho_E$ is a pure eigenstate of $\hat{H}_E$ and
that the external system ${\cal H}_E$ has non-degenerate Hamiltonian $\hat{H}_E$, i.e., $\hat{H}_E=\sum_j h_j \Pi_{j}$ where $\Pi_{j}:=|j\rangle\langle j|$.
Then, the entropy of the final state in the external system
is
\begin{align}
S(\Tr_I U (\rho_I \otimes \rho_E) U^\dagger)
\le \log 2 N,
\Label{4-8-3eq}
\end{align}
where $N$ is the number of eigenvalues of the Hamiltonian $\hat{H}_I$
in the internal system.
\end{lemma}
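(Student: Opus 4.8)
The plan is to reduce to the level-4 structure and then bound $S(\sigma_E)$, with $\sigma_E:=\Tr_I U(\rho_I\otimes\rho_E)U^\dagger$, by the logarithm of the number of external energy levels it can populate. Since $\rho_E$ is a pure eigenstate, its support lies in a single eigenspace of $\hat{H}_E$, so Theorem~\ref{3-14-8L} tells us that $CP({\cal F})$ obeys the level-4 energy conservation law. Writing $\rho_E=|e\rangle\langle e|$ with $\hat{H}_E|e\rangle=E_0|e\rangle$ and setting $M_j:=\langle j|U|e\rangle$ (using non-degeneracy, $P_{E,j}=|j\rangle\langle j|$), we get ${\cal E}_j(\rho_I)=M_j\rho_I M_j^\dagger$ and $w_j=h_j-E_0$. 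Level-4 forces $M_j=\sum_h P_{h-w_j}M_j P_h$, i.e. $M_j$ carries the internal level $h$ to $h-w_j$; in particular $M_j\neq 0$ only when $w_j=h_x-h_{x'}$ is a difference of two internal eigenvalues.

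Next I would read off the populated levels from $\langle j|\sigma_E|j'\rangle=\Tr[\rho_I M_{j'}^\dagger M_j]$: the support of $\sigma_E$ lies in the span of those $|j\rangle$ whose offset $h_j-E_0$ is an internal energy difference, and by non-degeneracy each admissible offset labels at most one level. Hence $S(\sigma_E)\le\log{\rm rank}(\sigma_E)\le\log D$, where $D=|\{E_0+(h_x-h_{x'})\}|$. To make the counting transparent I would decompose $\sigma_E=\sum_g\sigma_E^{[g]}$ by the final internal energy $g$, with $\sigma_E^{[g]}=\Tr_I[(P_g\otimes\hat 1)\,U(\rho_I\otimes\rho_E)U^\dagger\,(P_g\otimes\hat 1)]\ge 0$; total-energy conservation shows that each block is supported on the $\le N$ external levels $\{E_0+h-g:h\in{\rm spec}(\hat{H}_I)\}$, so ${\rm rank}(\sigma_E^{[g]})\le N$, while the diagonal $\langle j|\sigma_E|j\rangle$ recovers the work distribution whose entropy is controlled by Lemma~\ref{4-8-1L}.

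The main obstacle is exactly the counting that is supposed to produce $2N$ rather than $N^2$. The union of the block supports is $E_0$ shifted by the set of internal differences $\{h_x-h_{x'}\}$. When the internal spectrum is an equally spaced ladder this set is $\{E_0+k\Delta:|k|\le N-1\}$, of size $2N-1$, which gives ${\rm rank}(\sigma_E)\le 2N-1$ and hence $S(\sigma_E)\le\log 2N$; this ladder picture is the natural one for a work-storage meter. For a generic non-degenerate spectrum, however, the differences take up to $N^2-N+1$ values, and the same estimate yields only $S(\sigma_E)\le 2\log N$, consistent with $S(\sigma_E)\le S[W]$ together with Lemma~\ref{4-8-1L}; indeed, for the spectrum $\{0,1,3\}$ with maximally mixed $\rho_I$ the offsets $\{0,\pm1,\pm2,\pm3\}$ give seven populated levels, exceeding $2N=6$. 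I therefore expect the real work to lie in justifying the $2N$ count—either by invoking the equally spaced structure of the levels relevant to work extraction or by a sharper bound on how many distinct differences can be populated—while the coherence retained inside each block $\sigma_E^{[g]}$ is what makes $S(\sigma_E)$ strictly smaller than the diagonal (work) entropy and is where any improvement beyond $2\log N$ must originate.
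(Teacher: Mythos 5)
Your argument is correct, and it reaches exactly what the paper's own proof reaches, namely $S(\sigma_E)\le 2\log N$ for $\sigma_E:=\Tr_I U(\rho_I\otimes\rho_E)U^\dagger$ — but by a genuinely different route. The paper's proof is purely entropic: after invoking the level-4 property (the printed proof cites Lemma \ref{3-14-1L}, but the step really rests on Theorem \ref{3-14-8L}, which is the result you correctly used), it observes that since $\hat{H}_E$ is non-degenerate, pinching $\sigma_E$ in the eigenbasis of $\hat{H}_E$ produces exactly the work variable $W$, so $S(\sigma_E)\le S[W]$ because pinching does not decrease entropy, and then Lemma \ref{4-8-1L} gives $S[W]\le 2\log N$ via \eqref{4-8-1eq}. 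Your rank argument replaces both entropic steps with the support count $S(\sigma_E)\le \log \mathrm{rank}\,\sigma_E \le \log\,\bigl|\{h_x-h_{x'}\}\bigr|$, using the level-4 structure of the Kraus operators $M_j$ and non-degeneracy of $\hat{H}_E$ to identify the populated levels. This is strictly more informative than the paper's chain: it yields $\log(2N-1)$ for an equally spaced internal spectrum, and $\log(N^2-N+1)\le 2\log N$ in general, whereas the pinching route always stops at $2\log N$.

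On the constant: your diagnosis is right, and the only looseness in your write-up is treating the failure of the literal bound $\log(2N)$ as a gap you still need to close, rather than as an error in the statement. The displayed bound \eqref{4-8-3eq}, read as $\log(2N)$, is false in general, and your spectrum $\{0,1,3\}$ can be completed into a full counterexample within the paper's own framework: take $\rho_I=\hat{1}_I/3$ and $U=F[U_I]$ with $U_I$ the three-dimensional Fourier matrix, with $\rho_E=|0\rangle\langle 0|$ on the non-degenerate external system $E1$. Since $\rho_I$ commutes with $\hat{H}_I$, energy conservation forces $\sigma_E$ to be diagonal — for $j\neq j'$ the internal vectors $\langle j|U|x,0\rangle$ and $\langle j'|U|x,0\rangle$ lie in orthogonal eigenspaces of $\hat{H}_I$ — so $S(\sigma_E)=S[W]$; the work distribution is $P_W(0)=\tfrac{1}{3}$ and $P_W(\pm 1)=P_W(\pm 2)=P_W(\pm 3)=\tfrac{1}{9}$, giving $S(\sigma_E)=\tfrac{5}{3}\log 3\approx 1.83 > \log 6\approx 1.79$. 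So the statement should read $2\log N$ (equivalently $\log N^2$), which is the bound your proof — and the paper's — actually establishes, and your coherence remark correctly identifies the only mechanism by which $S(\sigma_E)$ can fall strictly below $S[W]$ when $\rho_I$ does not commute with $\hat{H}_I$.
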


\begin{proof}
Due to Lemma \ref{3-14-1L}, this FQ-work extraction
generates a CP-work extraction satisfying the level-4 condition.
So, Lemma \ref{4-8-1L} guarantees \eqref{4-8-1eq}.
Since $\hat{H}_E$ is non-degenerate,
the random variable $W$ given in Lemma \ref{4-8-1L} satisfies 
\begin{align}
S[W]= &S(\sum_j 
\Pi_{j} 
\Tr_I U (\rho_I \otimes \rho_E) U^\dagger 
\Pi_{j} ) \nonumber \\
\ge &
S(\Tr_I U (\rho_I \otimes \rho_E) U^\dagger).
\Label{4-8-2eq}
\end{align}
The combination of \eqref{4-8-1eq} of Lemma \ref{4-8-1L} and \eqref{4-8-2eq}
yields \eqref{4-8-3eq}.
\end{proof}

\section{Shift-invariant model}\Label{s4b}
The above CP work extraction and full quantum work extraction are too abstract in comparison with  the classical standard formulation.
Also, these models contain the case when the dynamics of the internal system depends on the state of the external system,
which seems unnatural.
To discuss this issue, we firstly recall the classical standard formulation:
\begin{description}
\item[\bf Classical standard formulation \cite{Ehrenfest,Bochkov-Kuzolev1,Bochkov-Kuzolev2,Bochkov-Kuzolev3,Bochkov-Kuzolev4,Sagawa2010,Ponmurugan2010,Horowitz2010,Horowitz2011,Sagawa2012,Ito2013}:]
In this scenario, we consider an external agent who performs the external operation as the classical time evolution $f$ of the internal system $I$ (which usually consists of the system $S$ and the heat bath $B$). 
In this scenario, the loss of energy of the internal system can be regarded as the amount of extracted work due to the energy conservation law.
That is, when the initial state of the internal system $x$ and the Hamiltonian is given as a function $h$, 
the amount of extracted work is $h(x)-h(f(x))$.
\end{description}
Note that the dynamics of the internal system does not depend on the state of the external system in  the classical standard formulation.
To discuss its quantum extension,
we introduce a classification of Hamiltonians.
A Hamiltonian $\hat{H}_I$ is called lattice
when there is a real positive number $d$ such that
any difference $h_i-h_j$ is an integer multiple of $d$
where $\{h_i\}$ is the set of eigenvalues of $\hat{H}_I$.
When $\hat{H}_I$ is lattice, the maximum $d$ is called the lattice span of $\hat{H}_I$.
Otherwise, it is called non-lattice.
In this subsection, we assume that our Hamiltonian $\hat{H}_I$ is lattice and denote the lattice span by $h_E$.
In the lattice case, 
using the external system $E1$ with doubly-infinite Hamiltonian,
\r{A}berg \cite[Section II of Supplement]{catalyst}
proposed a model, in which, the behavior of the heat engine depends less on the initial state of the external system.
The external system $E1$ looks unphysical, because it does not have a ground state. 
When the dimension of the internal system is finite, 
he also reconstructed the property of $E1$ in a pair of harmonic oscillators\cite[Section IV-D in Supplement]{catalyst}. 
So, we employ this definition for the simplicity of mathematical use.

Although he discussed the catalytic property and the role of coherence based on this model,
he did not discuss the relation with CP work extraction model.
In particular, he did not deal with the trade-off relation between 
the coherence and the measurability of the amount of extracted work in this model
because he discussed the average of extracted work, but not the amount of the extracted work as measurement outcome.
In this subsection, we construct essentially the same model as \r{A}berg \cite{catalyst}
in a slightly different logical step in the lattice case,
and call it a shift-invariant model while he did not give a clear name.
Then, we investigate the relation with the CP work extraction model.
In the next subsection, we extend the model to the non-lattice case while he did not discuss the non-lattice case.
In the later sections, we discuss a trade-off relation.

Consider a non-degenerate external system $E1$.
Let $\cH_{E1}$ be $L^2(\mathbb{Z})$ and the Hamiltonian 
$\hat{H}_{E1}$ be $\sum_{j} h_E j |j \rangle_E ~_E\langle j|$.
We define the displacement operator 
$V_{E1}:=\sum_{j}|j+1 \rangle_E ~_E\langle j| $.


\begin{definition}[Shift-invariant unitary]\Label{memoryless}
A unitary $U$ on $\cH_I\otimes \cH_{E1}$ is called shift-invariant when
\begin{align}
U V_{E1}=V_{E1} U.
\end{align}
\end{definition}

Indeed, there is a one-to-one correspondence between 
a shift-invariant unitary on $\cH_I\otimes \cH_{E1}$ and a unitary on $\cH_I$.
To give the correspondence, 
we define an isometry $W$ from $\cH_I$ to $\cH_I\otimes \cH_{E1}$.
\begin{align}
W:=\sum_{x} \Big|-\frac{h_x}{h_E} \Big\rangle_E\otimes \Pi_{x}.
\end{align}

\begin{lemma}\Label{3-13-1L}
A shift-invariant unitary $U$ is energy-conserving
if and only if $W^\dagger U W$ is unitary.
Conversely, for a given unitary $U_I$ on $\cH_I$, 
the operator 
\begin{align}
F[U_I]:=
\sum_{j} V_{E1}^j W U_I W^\dagger V_{E1}^{-j}
\Label{3-12-10eq}
\end{align}
on $\cH_I\otimes \cH_{E1}$
is a shift-invariant and energy-conserving unitary. 
Then, we have 
\begin{align}
W^\dagger F[U_I] W=U_I.
\Label{3-12-4}
\end{align}
\end{lemma}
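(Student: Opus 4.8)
The plan is to identify $W$ with the natural isometry of $\cH_I$ onto the zero-energy eigenspace of the total Hamiltonian $\hat{H}_{\tot}:=\hat{H}_I\otimes\hat{1}_{E1}+\hat{1}_I\otimes\hat{H}_{E1}$, and to exploit that the displacement $V_{E1}$ shifts total energy by exactly the lattice span $h_E$. First I would record two structural facts. Writing $W\ket{x}$ for the product state with internal part $\ket{x}$ and external part $\ket{-h_x/h_E}$, and using $\hat{H}_I\ket{x}=h_x\ket{x}$ together with $\hat{H}_{E1}\ket{-h_x/h_E}=-h_x\ket{-h_x/h_E}$, one gets $\hat{H}_{\tot}W\ket{x}=(h_x-h_x)W\ket{x}=0$; hence $\hat{H}_{\tot}W=0$, so $W$ is an isometry onto the zero-energy eigenspace, with $W^\dagger W=\hat{1}_I$ and $WW^\dagger=P_0$, where $P_0$ is the spectral projection of $\hat{H}_{\tot}$ onto eigenvalue $0$. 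Second, from $\hat{H}_{\tot}V_{E1}=V_{E1}(\hat{H}_{\tot}+h_E)$, conjugation by $V_{E1}$ carries the energy-$h_E n$ sector onto the energy-$h_E(n+1)$ sector, so the isometries $W_n:=V_{E1}^n W$ map $\cH_I$ onto the sectors $P_n\cH$, they have mutually orthogonal images ($W_n^\dagger W_{n'}=\delta_{n,n'}\hat{1}_I$), and $\sum_nP_n=\hat{1}$ because every basis vector $\ket{x}\otimes\ket{j}$ lies in exactly one energy sector.

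The heart of the argument is a block decomposition of a shift-invariant $U$ along these sectors. Setting $C_m:=W^\dagger V_{E1}^m U W$, shift-invariance $UV_{E1}=V_{E1}U$ gives $W_{n'}^\dagger U W_n=C_{n-n'}$, so $U$ is block-Toeplitz in the energy sectors, with the diagonal block being precisely $C_0=W^\dagger U W$. Expanding $W_a^\dagger U^\dagger U W_b=\delta_{a,b}\hat{1}_I$ by inserting $\sum_cW_cW_c^\dagger=\hat{1}$ yields the operator identity $\sum_m C_m^\dagger C_m=\hat{1}_I$ (and symmetrically $\sum_m C_m C_m^\dagger=\hat{1}_I$). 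In this dictionary, the FQ energy conservation law $[U,\hat{H}_{\tot}]=0$ is equivalent to $[U,P_n]=0$ for all $n$, i.e.\ to $U$ being block-diagonal, i.e.\ to $C_m=0$ for $m\neq0$. The equivalence is then immediate in both directions: if $U$ is energy-conserving then it is block-diagonal and the identity collapses to $C_0^\dagger C_0=C_0 C_0^\dagger=\hat{1}_I$, so $W^\dagger U W$ is unitary; conversely, if $W^\dagger U W=C_0$ is unitary, then $\sum_{m\neq0}C_m^\dagger C_m=\hat{1}_I-C_0^\dagger C_0=0$, and since each $C_m^\dagger C_m$ is positive semidefinite, every $C_m$ with $m\neq0$ vanishes, so $U$ is block-diagonal and hence energy-conserving. (The forward implication in fact needs no shift-invariance, since $[U,\hat{H}_{\tot}]=0$ alone forces $[U,P_0]=0$ and thus $W^\dagger U W$ unitary.)

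For the converse half I would verify the four asserted properties of $F[U_I]=\sum_j V_{E1}^j W U_I W^\dagger V_{E1}^{-j}=\sum_j W_j U_I W_j^\dagger$ directly. Shift-invariance follows by reindexing $j\mapsto j+1$. Each summand $W_j U_I W_j^\dagger$ is a unitary on the sector $P_j\cH$ and zero elsewhere; since the images are mutually orthogonal, $F[U_I]^\dagger F[U_I]=\sum_j W_j U_I^\dagger U_I W_j^\dagger=\sum_j P_j=\hat{1}$ (cross terms drop via $W_j^\dagger W_{j'}=\delta_{j,j'}\hat{1}_I$), giving unitarity, and the same orthogonality shows $F[U_I]$ is block-diagonal, hence energy-conserving. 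Finally $W^\dagger F[U_I] W=\sum_j (W^\dagger V_{E1}^j W)\,U_I\,(W^\dagger V_{E1}^{-j}W)$, and because $V_{E1}^j W$ lands in $P_j\cH$ while $W^\dagger=W^\dagger P_0$ annihilates $P_j\cH$ for $j\neq0$, one has $W^\dagger V_{E1}^j W=\delta_{j,0}\hat{1}_I$, so only the $j=0$ term survives and equals $U_I$, which is \eqref{3-12-4}.

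I expect most of the work to be bookkeeping rather than conceptual: confirming that the sectors $W_n\cH_I$ exhaust $\cH_I\otimes\cH_{E1}$ (so that $\sum_nP_n=\hat{1}$ and the block identity is complete) and justifying the rearrangements of the sum over sectors, both routine once $\hat{H}_I$ is lattice with span $h_E$ so that each $-h_x/h_E$ is an integer. The genuinely load-bearing step is the reverse implication of the first equivalence, where shift-invariance is indispensable: it is what rigidifies $U$ into a block-Toeplitz form and thereby converts ``the diagonal block $C_0$ is unitary'' into ``all off-diagonal blocks vanish'' through the positivity in $\sum_m C_m^\dagger C_m=\hat{1}_I$. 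Without it the implication is false---one could let $U$ act unitarily on the zero-energy sector while violating energy conservation on the others---and the shift-invariant unitary $U=V_{E1}$, which is energy-raising and for which $W^\dagger V_{E1} W$ is not unitary, confirms that the hypothesis on $W^\dagger U W$ is doing real work.
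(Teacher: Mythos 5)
Your proof is correct and follows essentially the same route as the paper: both identify the image of $W$ with the zero-eigenspace of $\hat{H}_I+\hat{H}_{E1}$, use the spectral decomposition $\sum_j h_E j\, V_{E1}^j P_0 V_{E1}^{-j}$, and exploit shift-invariance to reduce the energy conservation law to the statement that $U$ preserves the zero-energy sector, which is exactly the unitarity of $W^\dagger U W$; the verification of $F[U_I]$ is likewise the same sector-by-sector computation. Your block-Toeplitz bookkeeping ($C_m:=W^\dagger V_{E1}^m U W$ with $\sum_m C_m^\dagger C_m=\hat{1}_I$ and positivity) simply makes rigorous the reverse implication that the paper asserts in a single (typo-marred) line, so the two arguments coincide in substance.
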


Notice that the RHS of \eqref{3-12-10eq} is the same as the model given by \r{A}berg \cite[(S9) of Supplement]{catalyst}.

\begin{proof}
The image of $W$ is the eigenspace of the Hamiltonian
$\hat{H}_I+ \hat{H}_{E1}$ associated with the eigenvalue $0$.
Then, we denote the projection on the above space by $P_0$.
Hence, 
the spectral decomposition of the Hamiltonian $\hat{H}_I+ \hat{H}_{E1}$ 
is $\sum_{j} h_E j V_{E1}^j P_0 V_{E1}^{-j}$.
Since the unitary satisfies the shift-invariant condition, 
the condition \eqref{FQ2} is equivalent to 
the condition $P_0 U =P_0 U$.
The latter condition holds 
if and only if $W^\dagger U W$ is unitary.

When $U_I$ is a unitary on $\cH_I$, 
$W U_I W^\dagger$ is a unitary on the image of $W$.
So, the operator $\sum_{j} V_{E1}^j W U_I W^\dagger V_{E1}^{-j}$
on $\cH_I\otimes \cH_{E1}$
is a shift-invariant and energy-conserving unitary. 
\eqref{3-12-4} follows from the constructions.
\end{proof}

Due to \eqref{3-12-4} in Lemma \ref{3-13-1L},
we find the one-to-one correspondence between 
a shift-invariant unitary on $\cH_I\otimes \cH_{E1}$ and a unitary on $\cH_I$.
When the unitary $U_I$ is written as $\sum_{x,x'}u_{x,x'}|x \rangle \langle x'|$,
the unitary $F[U_I]$ has another expression.
\begin{align}
F[U_I]= \sum_{j,x,x'}u_{x,x'}|x \rangle \langle x'|\otimes
\left.\left.\left.\left|j+\frac{h_{x'}}{h_E}-\frac{h_x}{h_E} \right\rangle_E \right._E
\right\langle j\right|.
\end{align}

To consider such a case, we impose the following condition.
\begin{definition}[shift-invariant FQ-work extraction]\Label{FQ-shift-invariant}
We call an FQ-work extraction $({\cal H}_{E}, \hat{H}_{E}, U, \rho_{E})$ 
as shift-invariant 
when the following conditions hold.
\begin{description}
\item[Condition SI1]
The external system $E$ is the non-degenerate system $E1$, 
or the composite system of the non-degenerate system $E1$ 
and a fully degenerate system $E2$.
That is, the Hamiltonian on the additional external system $E2$ is a constant.

\item[Condition SI2]
The unitary $U$ on $(\cH_I\otimes \cH_{E2})\otimes\cH_{E1}$ is shift-invariant 
and
\begin{align}
w_j = h_E j - \Tr \hat{H}_{E} \rho_{E}.
\end{align}
\end{description}
\end{definition}

We can interpret the FQ-shift-invariant work extraction as the work extraction without memory effect
when the external system is in the non-degenerate external system $\cH_{E1}$.
Let us consider the situation that we perform CP-work extractions $n$ times.
In these applications, 
the state reduction of the external system $\cH_{E1}$
is based on the projection postulate.
Let $\rho^{(1)}_{E}$ be the initial state on the external system $\cH_{E1}$, which is assumed to be a pure state.
We assume that the initial state $\rho^{(k)}_{E}$ on $\cH_{E1}$
of the $k$th CP-work extraction is
the final state of the external system of the $k-1$th work extraction. 
Other parts of the $k$th CP-work extraction are the same as 
those of the first CP-work extraction.
Hence, the $k$th CP-work extraction is
$({\cal H}_{E}, \hat{H}_{E}, \rho^{(k)}_{E}, U)$.
Generally, the FQ-work extraction $({\cal H}_{E}, \hat{H}_{E}, \rho^{(k)}_{E}, U)$ depends on the state 
$\rho^{(k)}_{E}$.
Namely, there exists a memory effect.
However, when $U$ is shift-invariant, 
the FQ-work extraction does not depend on the state 
$\rho^{(k)}_{E}$.
Then the memory effect does not exist.
So, we don't have to initialize the external system after the projective measurement on the external system.

Then, the shift-invariant FQ-work extraction can simulate the semi-classical scenario in the following limited sense.
\begin{lemma}\Label{4-10-4L}
Given an internal unitary $U_I$ and a state $\rho_I$ of the internal system $I$,
in the shift-invariant FQ-work extraction ${\cal F}=({\cal H}_{E}, \hat{H}_{E}, F[U_I], \rho_{E})$,
the average amount of extracted work is $ \Tr \rho_I \hat{H}_I- \Tr U_I \rho U_I^\dagger \hat{H}_I$.
\end{lemma}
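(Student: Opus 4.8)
The plan is to compute the average extracted work directly from the definition of $w_j$ in the shift-invariant model and use the explicit matrix form of $F[U_I]$ derived above. By Condition SI2, the measured work for outcome $j$ is $w_j = h_E j - \Tr \hat{H}_E \rho_E$, so the average extracted work is $\bbE[W] = \sum_j w_j \Tr {\cal E}_j(\rho_I)$. Since $\sum_j h_E j \Tr {\cal E}_j(\rho_I)$ is just the expectation of $\hat H_{E1}$ in the final external state while $\sum_j \Tr \hat H_E \rho_E \,\Tr {\cal E}_j(\rho_I) = \Tr \hat H_E \rho_E$ (because $\sum_j {\cal E}_j$ is trace-preserving), the average work equals the average energy gained by the external system, namely
\begin{align}
\bbE[W] = \Tr \hat{H}_{E}\, U (\rho_I\otimes\rho_E)U^\dagger - \Tr \hat{H}_{E}\,\rho_E .
\end{align}

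**Next I would** invoke the FQ energy conservation law. The unitary $F[U_I]$ is energy-conserving by Lemma \ref{3-13-1L}, so $[F[U_I],\hat H_I+\hat H_E]=0$, and hence the total energy is preserved:
\begin{align}
\Tr \hat{H}_{E}\,U(\rho_I\otimes\rho_E)U^\dagger - \Tr \hat H_E \rho_E
= \Tr \hat{H}_{I}\,\rho_I - \Tr \hat{H}_{I}\,\Tr_E\!\big(U(\rho_I\otimes\rho_E)U^\dagger\big).
\end{align}
It therefore remains to identify the reduced internal state $\Tr_E\big(U(\rho_I\otimes\rho_E)U^\dagger\big)$, or at least its energy expectation, with $U_I\rho_I U_I^\dagger$ (or with something having the same value of $\Tr \hat H_I$).

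**The main obstacle** is precisely this last identification, because $\Tr_E\big(F[U_I](\rho_I\otimes\rho_E)F[U_I]^\dagger\big)$ is generally \emph{not} equal to $U_I\rho_I U_I^\dagger$; the shift-invariant embedding can introduce decoherence in the internal energy basis. The key observation that resolves this is that the claim only concerns the energy expectation $\Tr \hat H_I(\cdot)$, which is insensitive to off-diagonal decoherence. Using the explicit form $F[U_I]=\sum_{j,x,x'}u_{x,x'}|x\rangle\langle x'|\otimes |j+\tfrac{h_{x'}}{h_E}-\tfrac{h_x}{h_E}\rangle_E{}_E\langle j|$, I would compute $\Tr_E\big(F[U_I](\rho_I\otimes\rho_E)F[U_I]^\dagger\big)$ and verify that its diagonal entries in the energy eigenbasis coincide with those of $U_I\rho_I U_I^\dagger$. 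Concretely, the shift structure forces each pair of Kraus-like terms to act coherently only within a fixed energy sector when the final external energy is fixed, so the populations $\langle x|\Tr_E(\cdots)|x\rangle$ reproduce $\langle x|U_I\rho_I U_I^\dagger|x\rangle$ exactly. Taking the trace against $\hat H_I=\sum_x h_x \Pi_x$ then gives $\Tr \hat H_I\, U_I\rho_I U_I^\dagger$, and substituting into the energy balance yields
\begin{align}
\bbE[W] = \Tr \hat{H}_{I}\,\rho_I - \Tr U_I \rho_I U_I^\dagger\, \hat{H}_{I},
\end{align}
which is the asserted average amount of extracted work.
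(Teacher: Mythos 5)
Your first two steps are correct and are surely the intended route: the average extracted work equals the mean energy gain of the external system, and the FQ energy conservation law for $F[U_I]$ (Lemma \ref{3-13-1L}) converts this into the mean energy loss of the internal system, so everything reduces to showing $\Tr \hat{H}_I \Lambda(\rho_I)=\Tr \hat{H}_I U_I\rho_I U_I^\dagger$ for $\Lambda(\rho_I):=\Tr_E F[U_I](\rho_I\otimes\rho_E)F[U_I]^\dagger$. (The paper states this lemma without proof, so there is nothing else to compare against.) The genuine gap is your third step: the populations of $\Lambda(\rho_I)$ in the energy eigenbasis do \emph{not} coincide with those of $U_I\rho_I U_I^\dagger$. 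Carrying out the computation you propose, with $\rho_{x'y'}:=\langle x'|\rho_I|y'\rangle$, gives
\begin{align}
\langle x|\Lambda(\rho_I)|x\rangle
=\sum_{x',y'} u_{x,x'}\,\overline{u_{x,y'}}\;\rho_{x'y'}\;
c_{(h_{x'}-h_{y'})/h_E},
\qquad
c_{\Delta}:=\sum_{j}\langle j|\rho_{E}|j+\Delta\rangle,
\end{align}
where $c_0=1$ but $|c_\Delta|\le 1$ and in general $c_\Delta\neq 1$ for $\Delta\neq 0$ (indeed $c_\Delta=0$ when $\rho_E$ is an energy eigenstate). So the inter-sector coherences of the \emph{input} $\rho_I$ enter the output populations through $U_I$, damped by the battery-coherence weights $c_\Delta$. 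Your claim that the energy expectation is ``insensitive to off-diagonal decoherence'' is exactly where this fails: the decoherence acts on the input coherences that $U_I$ would have converted into populations, so it does change $\Tr\hat{H}_I\Lambda(\rho_I)$ unless those coherences are absent.

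A concrete counterexample: take $\cH_I=\mathbb{C}^2$ with $\hat{H}_I=\mathrm{diag}(0,h_E)$, $U_I$ the Hadamard unitary, $\rho_I=|+\rangle\langle +|$, and $\rho_E=|0\rangle\langle 0|$ an energy eigenstate. Then $F[U_I]\,|+\rangle|0\rangle=\frac{1}{2}(|0,0\rangle+|1,-1\rangle+|0,1\rangle-|1,0\rangle)$, the outcome $j$ takes values $-1,0,1$ with probabilities $\frac14,\frac12,\frac14$, and the average work is $0$; but the lemma's formula gives $\Tr\rho_I\hat{H}_I-\Tr U_I\rho_I U_I^\dagger\hat{H}_I=\frac{h_E}{2}-0=\frac{h_E}{2}$. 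What your argument actually yields for diagonal $\rho_E$ is $\bbE[W]=\Tr\hat{H}_I\rho_I-\Tr\hat{H}_I\, U_I {\cal P}_{\hat{H}_I}(\rho_I) U_I^\dagger$, with the pinching ${\cal P}_{\hat{H}_I}$ of \eqref{4-8-9eqx}. This equals the asserted value precisely when the input coherences do not matter, e.g.\ when $[\rho_I,\hat{H}_I]=0$ (the Gibbs-type states the paper emphasizes), and for general $\rho_I$ it holds only in the limit of large battery coherence, $c_\Delta\to 1$, which is the regime of Theorem \ref{th11-23} and of \r{A}berg's construction. This is presumably the ``limited sense'' announced just before the lemma; a proof claiming exact equality for every $\rho_I$ and every $\rho_E$, as yours does, cannot be repaired at step three.
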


Further, the shift-invariant FQ-work extraction yields a special class of CP-work extraction.

\begin{lemma}\Label{3-13-2L}
For an energy conserving and shift-invariant
FQ-work extraction ${\cal F}=({\cal H}_{E}, \hat{H}_{E}, U, \rho_{E})$,
the CP-work extraction $CP({\cal F})$ satisfies 
the level-3 energy conservation law.
\end{lemma}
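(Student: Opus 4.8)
The plan is to sharpen the proof of Lemma~\ref{4-2-1L} by exploiting the rigid energy bookkeeping that shift-invariance enforces, computing the full conditional law $P_{J|YX}$ rather than only the marginal $P_{K|X}$. By Definition~\ref{CP-semi}, the level-3 law demands that $P_{K|YX}(k|y,x)$ be independent of the pair $(x,y)$, where $K=h_X-h_Y-w_J$. The tool is Lemma~\ref{3-13-1L}: an energy-conserving shift-invariant unitary equals $F[U_I]$ for some internal unitary $U_I=\sum_{x,x'}u_{x,x'}|x\rangle\langle x'|$, and the explicit expansion of $F[U_I]$ given just after that lemma shows that on products of energy eigenstates $U|x\rangle_I|m\rangle_{E1}=\sum_a u_{a,x}|a\rangle_I|m+(h_x-h_a)/h_E\rangle_{E1}$, with each $(h_x-h_a)/h_E$ an integer since $\hat{H}_I$ is lattice. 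This makes the essential correlation explicit: the energy $h_x-h_a$ released by the internal system is written onto $E1$ as an exact index shift.

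First I would dispatch the case without $E2$. Writing $p_m:=\langle m|\rho_E|m\rangle$ for the $E1$ populations and using the sandwich $\mathcal{E}_j(\Pi_x)=\langle j|_{E1}U(\Pi_x\otimes\rho_E)U^\dagger|j\rangle_{E1}$ valid for the non-degenerate $E1$, the displacement formula forces the surviving terms to have final internal level $y$ with $h_y=h_x-h_E(j-m)$, whence
\begin{align}
P_{JY|X}(j,y|x)=\langle y|\mathcal{E}_j(\Pi_x)|y\rangle=|u_{y,x}|^2\, p_{j-(h_x-h_y)/h_E}.
\end{align}
Summing over $j$ collapses the shifted populations to unity, so the normalizer is $|u_{y,x}|^2$ and
\begin{align}
P_{J|YX}(j|y,x)=p_{j-(h_x-h_y)/h_E},
\end{align}
which depends on $(x,y)$ only through the integer shift $(h_x-h_y)/h_E$.

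Second I would change variables to $K$. Since $w_j=h_E j-\Tr\hat{H}_E\rho_E$, putting $\ell:=j-(h_x-h_y)/h_E$ gives $K=\Tr\hat{H}_E\rho_E-h_E\ell$, a bijection between $k$ and $\ell$ with no residual $(x,y)$. Hence $P_{K|YX}(k|y,x)=p_{(\Tr\hat{H}_E\rho_E-k)/h_E}$, manifestly independent of $(x,y)$: this is the level-3 law. As a byproduct this recovers Lemma~\ref{4-2-1L}, since $\sum_y|u_{y,x}|^2=1$ shows the $y$-average cannot reintroduce $x$-dependence.

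The main obstacle is the composite case with the degenerate ancilla $E2$, where $\Pi_y$ is measured on $I$ alone so that $E2$ is also traced out. Redoing the computation with $U=F[U_{I,E2}]$, the amplitudes carry extra $E2$ labels and the numerator becomes $\sum_\alpha\sum_{\nu,\nu'}c_{(\nu\ell)(\nu'\ell)}\,u_{(y\alpha)(x\nu)}\bar{u}_{(y\alpha)(x\nu')}$ with $\ell=j-(h_x-h_y)/h_E$; for a state correlating $E1$ and $E2$ these amplitude factors do not cancel against the normalizer, so the clean shift structure can break. The step to get right is that $E2$ carries no energy and is prepared independently of the work storage, i.e. $\rho_E=\rho_{E1}\otimes\rho_{E2}$ (equivalently, the covariance $[U,\hat{1}\otimes V_{E1}]=0$ together with $V_{E1}P_{E,j}V_{E1}^\dagger=P_{E,j+1}$). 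Under this product form the factor $c_{(\nu\ell)(\nu'\ell)}=(\rho_{E2})_{\nu\nu'}\,p_\ell$ lets $p_\ell$ pull out of both numerator and normalizer and cancel, returning $P_{J|YX}(j|y,x)=p_{j-(h_x-h_y)/h_E}$, after which the previous change of variables again delivers a $P_{K|YX}$ free of $(x,y)$.
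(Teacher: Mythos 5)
Your proof is correct and takes essentially the same route as the paper's: both exploit the explicit shift structure of $F[U_I]$ from Lemma \ref{3-13-1L} to show that, conditionally on $(X,Y)=(x,y)$, the outcome $J$ is distributed as the $E1$ populations shifted by the integer $(h_x-h_y)/h_E$, so that after the change of variables $P_{K|Y,X}(k|y,x)$ reduces to exactly the paper's \eqref{3-14-16eq}, independent of $(x,y)$. Your explicit flagging of the product assumption $\rho_{E}=\rho_{E1}\otimes\rho_{E2}$ in the composite case is apt, since the paper's proof invokes it tacitly by writing the initial state as $\rho_{E2}\otimes\rho_{E1}$ without comment.
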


\begin{proof}
Firstly, we consider the case when $\cH_{E}=\cH_{E1}$.
Similar to the proof of Lemma \ref{4-2-1L}, 
we have
\begin{align}
&P_{KY|X}(k,y|x) \nonumber\\
=& \sum_{j: h_x -h_y- w_j =k}
\Tr U (\Pi_{x} \otimes \rho_{E}) U^\dagger 
(\Pi_{y} \otimes |j \rangle \langle j|  ) \nonumber\\
=& \sum_{j: h_x -h_y- w_j =k}
\Tr U (\Pi_{x} \otimes \rho_{E}) U^\dagger 
(\Pi_{y} \otimes |j \rangle \langle j|^2  ) \nonumber\\
\stackrel{(a)}{=} &
 \sum_{j': h_E j'=-k+ \Tr \hat{H}_{E} \rho_{E}}
\Tr [U (\Pi_{x} \otimes 
|j'\rangle\langle j'|\rho_{E} |j'\rangle\langle j'|  )
 U^\dagger \nonumber \\
& \cdot(\Pi_{y} \otimes \hat{1}_{E} ) ] \nonumber\\
=&
|\langle y| U_I |x\rangle |^2
 \sum_{j': h_E j'=-k+ \Tr \hat{H}_{E} \rho_{E}}
\langle j'|\rho_{E} |j'\rangle ,
\end{align}
where $(a)$ can be shown by using relations similar to \eqref{4-2-2eq} and \eqref{4-2-1eq}.
Hence,
\begin{align}
P_{K|Y,X}(k|y,x) 
=
 \sum_{j': h_E j'=-k+ \Tr \hat{H}_{E} \rho_{E}}
\langle j'|\rho_{E} |j'\rangle ,\Label{3-14-16eq}
\end{align}
which does not depend on $x$ or $y$.

Next, we proceed to the general case.
Similarly, we can show that
\begin{align}
&P_{KY|X}(k,y|x) \nonumber\\
=& \sum_{j: h_x -h_y- w_j =k}
\Tr U (\Pi_{x} \otimes \rho_{E2} \otimes \rho_{E1}) U^\dagger 
(\Pi_{y} \otimes \hat{1}_{E2} \otimes |j \rangle \langle j| 
 ) \nonumber\\
=&
(\Tr U_I \Pi_{x} \otimes \rho_{E2}U_I^\dagger
\Pi_{y} \otimes \hat{1}_{E2})
 \sum_{j': h_E j'=-k+ \Tr \hat{H}_{E} \rho_{E}}
\langle j'|\rho_{E} |j'\rangle .
\end{align}
Hence, we obtain \eqref{3-14-16eq}.
\end{proof}

As a special case of Theorem \ref{3-14-8L}, we have the following lemma.
\begin{lemma}\Label{3-13-3L}
Let ${\cal F}=({\cal H}_{E}, \hat{H}_{E}, U, \rho_{E})$ 
be an energy-conserving and shift-invariant FQ-work extraction.
Then, 
the support of the initial state $\rho_{E}$ 
belongs to an eigenspace of the Hamiltonian $\hat{H}_{E}$
if and only if 
the CP-work extraction $CP({\cal F})$ satisfies the level-4 energy conservation law.
\end{lemma}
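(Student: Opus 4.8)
The plan is to observe that Lemma \ref{3-13-3L} is simply the specialization of Theorem \ref{3-14-8L} to the subclass of energy-conserving FQ-work extractions that are additionally shift-invariant. Since Theorem \ref{3-14-8L} already establishes the claimed equivalence for \emph{every} energy-conserving FQ-work extraction without any further hypothesis on $U$, and a shift-invariant energy-conserving FQ-work extraction is in particular an energy-conserving FQ-work extraction, the desired equivalence follows immediately. So the proof is essentially a one-line deduction, and the real content is to confirm that nothing in the statement of Theorem \ref{3-14-8L} excludes the shift-invariant case.

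Concretely, I would first recall that by hypothesis ${\cal F}=({\cal H}_{E}, \hat{H}_{E}, U, \rho_{E})$ is energy-conserving, meaning $[U,\hat{H}_{I}+\hat{H}_{E}]=0$ in the sense of \eqref{FQ2}. This is exactly the single hypothesis required by Theorem \ref{3-14-8L}. The additional shift-invariance assumption (Condition SI2, namely $U V_{E1}=V_{E1}U$ on the appropriate factor) is extra structure that places ${\cal F}$ inside a strictly smaller class, but it imposes no obstruction: the conclusion of Theorem \ref{3-14-8L} holds a fortiori.

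Then I would apply Theorem \ref{3-14-8L} directly to conclude that the support of $\rho_{E}$ lies in an eigenspace of $\hat{H}_{E}$ if and only if $CP({\cal F})$ satisfies the level-4 energy conservation law. This gives exactly the stated biconditional, so no separate computation of the joint distributions $P_{JY|X}$ or $P_{K|X}$ is needed here; all of that work was already carried out in the proof of Theorem \ref{3-14-8L} via \eqref{3-14-19eq}.

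I do not anticipate any genuine obstacle, since the lemma is a corollary by specialization. The only point worth double-checking is that the notion of ``energy-conserving'' used in Definition \ref{FQ-shift-invariant} coincides with that of Definition \ref{D7}, so that the hypothesis of Theorem \ref{3-14-8L} is literally met; once this bookkeeping is confirmed, the statement that this is ``a special case of Theorem \ref{3-14-8L}'' is fully justified and the proof reduces to that single citation.
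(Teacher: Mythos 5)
Your proposal is correct and coincides with the paper's own treatment: the paper states Lemma \ref{3-13-3L} with the preamble ``As a special case of Theorem \ref{3-14-8L}, we have the following lemma'' and gives no separate proof, exactly the one-line specialization you describe. Your bookkeeping check also goes through, since Definition \ref{FQ-shift-invariant} imposes shift-invariance on top of the framework of Definitions \ref{FQ-normal} and \ref{D7} (with $w_j = h_E j - \Tr \hat{H}_{E}\rho_{E}$ consistent with $w_j = h_{E,j}-\Tr \hat{H}_{E}\rho_{E}$ even when the degenerate factor $E2$ contributes a constant), so the hypothesis of Theorem \ref{3-14-8L} is literally met.
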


\begin{lemma}\Label{3-13-4L}
For a level-4 CP-work extraction 
$\{{\cal E}_{j},w_{j}\}_{j\in {\cal J}} $,
there exists an energy-conserving and shift-invariant
FQ-work extraction ${\cal F}$
such that 
the support of the initial state $\rho_{E}$ 
belongs to an eigenspace of the Hamiltonian $\hat{H}_{E}$
and
$CP({\cal F})=\{{\cal E}_{j},w_{j}\}_{j\in {\cal J}}$.
\end{lemma}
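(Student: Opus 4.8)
The plan is to realize the given instrument as an indirect measurement whose unitary comes from the shift-invariant construction $F[\cdot]$ of Lemma \ref{3-13-1L}. I would take the external system to be the composite $E=E1\otimes E2$ permitted by Condition SI1, where $E1=L^2(\mathbb{Z})$ is the ladder with $\hat{H}_{E1}=\sum_j h_E j\ket{j}_{E1}\bra{j}$ and displacement $V_{E1}$, while $E2$ is a fully degenerate ancilla with $\hat{H}_{E2}=0$, of dimension large enough to index all Kraus operators introduced below. The strategy is to build a single unitary on the \emph{enlarged} internal system $I\otimes E2$ (whose Hamiltonian is $\hat{H}_I\otimes\hat{1}_{E2}$, since $E2$ is degenerate) and feed it into $F[\cdot]$. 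By Lemma \ref{3-13-1L} the output is automatically shift-invariant and energy-conserving on $(\cH_I\otimes\cH_{E2})\otimes\cH_{E1}$, so Conditions SI1--SI2 and the energy-conservation requirement are obtained for free.

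For the internal unitary, fix a Kraus representation ${\cal E}_j(\rho)=\sum_l A_{j,l}\rho A_{j,l}^\dagger$, so that $\sum_{j,l}A_{j,l}^\dagger A_{j,l}=\hat{1}_I$ because $\sum_j{\cal E}_j$ is CPTP. Choose an orthonormal family $\{\ket{j,l}_{E2}\}$ in $E2$ and a reference vector $\ket{r}_{E2}$, and set $\Theta\ket{x}_I:=\sum_{j,l}A_{j,l}\ket{x}_I\otimes\ket{j,l}_{E2}$. The completeness relation makes $\Theta$ an isometry from $\cH_I$ into $\cH_I\otimes\cH_{E2}$, and I would extend the map $\ket{x}_I\otimes\ket{r}_{E2}\mapsto\Theta\ket{x}_I$ to a genuine unitary $U_{I}$ on $\cH_I\otimes\cH_{E2}$ (possible once $\dim E2$ exceeds the number of Kraus operators). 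Putting $U:=F[U_{I}]$ and $\rho_E:=\ket{0}_{E1}\bra{0}\otimes\ket{r}_{E2}\bra{r}$, the latter is a pure eigenstate of $\hat{H}_E$ with eigenvalue $0$, so its support lies in a single eigenspace, as required.

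The key computation is to evaluate $CP({\cal F})$ on an energy eigenstate $\Pi_{x}$. Using the explicit expansion of $F[U_{I}]$ from Lemma \ref{3-13-1L} (applied with $I\otimes E2$ as the internal system), on a matrix element $\bra{x'}A_{j,l}\ket{x}$ with input $\ket{x}$ and output $\ket{x'}$ the ladder is displaced by $(h_{x}-h_{x'})/h_E$; but the level-4 condition \eqref{CP2} forces $\bra{x'}A_{j,l}\ket{x}$ to vanish unless $h_{x'}=h_{x}-w_j$, i.e.\ unless the displacement equals $w_j/h_E$. Hence, since $E2$ carries no energy, the ladder position after $U$ deterministically records the energy lost, and projecting $E$ onto the energy level $h_E n$ (the measurement of $\hat{H}_E=\hat{H}_{E1}$) collects exactly the Kraus terms with $w_j=h_E n$. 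Tracing out $E$ then returns $\sum_{j:\,w_j=h_E n}{\cal E}_j$ with associated value $h_E n-\Tr\hat{H}_E\rho_E=h_E n$, which is the desired $CP({\cal F})$.

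The delicate point --- and the place where level-4 is indispensable --- is precisely this identification of the measurement outcome with the work value. Because $E2$ must be degenerate by Condition SI1, the energy measurement on $E$ can only read the ladder, so it resolves outcomes only up to the value of $w_j$; the level-4 property is what guarantees that the ladder displacement is a \emph{deterministic} function of $w_j$ rather than a superposition of shifts. I would therefore index the outcomes of the instrument by their work values (merging any outcomes sharing a common $w_j$, which manifestly preserves the level-4 property of \eqref{CP2}), so that $CP({\cal F})=\{{\cal E}_j,w_j\}_{j\in{\cal J}}$ holds exactly. The remaining steps --- that $\Theta$ is an isometry, that its extension to $U_{I}$ exists, and that $F[U_{I}]$ is shift-invariant and energy-conserving --- are routine given Lemma \ref{3-13-1L}.
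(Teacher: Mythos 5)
Your proposal is correct and follows essentially the same route as the paper: both realize the instrument by an indirect-measurement (Stinespring) dilation on $I\otimes E2$ with the ancilla Hamiltonian set to zero, feed the dilating unitary into the shift-invariant construction $F[\cdot]$ of Lemma \ref{3-13-1L}, take $\rho_E=|0\rangle\langle 0|_{E1}\otimes\rho_{E2}$ (an eigenstate of $\hat H_E=\hat H_{E1}$), and invoke the level-4 condition to conclude that the ladder displacement deterministically records $w_j/h_E$, so the energy measurement on $E1$ reproduces $\{{\cal E}_j,w_j\}$. Your only departures are presentational --- you construct the dilation explicitly at the Kraus level instead of citing Ozawa's indirect-measurement theorem, and you explicitly flag the merging of outcomes sharing a common $w_j$, a point the paper's proof leaves implicit.
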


\begin{proof}
We make a Stinespring extension 
$(\cH_{E2},U_{IE2},\rho_{E2})$
with a projection valued measure $\{E_j\}$ on $\cH_{E2} $ of 
$\{{\cal E}_{j},w_{j}\}_{j\in {\cal J}}$ as follows.
\begin{align}
{\cal E}_{j}(\rho_I)=
\Tr_{E2} U_{IE2}
(\rho_I \otimes \rho_{E2}) U_{IE2}^\dagger 
(\hat{1}_{I} \otimes E_j),
\end{align}
where $\{E_j\}$ is a projection valued measure on ${\cal H}_{E2}$
and $\rho_E$ is a pure state.
This extension is often called the indirect measurement model, 
which was introduced by Ozawa \cite{Ozawa1984}.
Here, the Hamiltonian of $\cH_{E2}$ is chosen to be $0$. 
Next, we define the unitary $U$ on $\cH_I \otimes \cH_{E2}
\otimes \cH_{E1}$ as $U=F[U_{IE2}]$.
Here, $\cH_{E1}$ is defined as the above discussion.

Then, we define 
an energy-conserving and shift-invariant
FQ-work extraction ${\cal F}=({\cal H}_{E}, \hat{H}_{E}, U, \rho_{E})$
with the above given $U$
as
${\cal H}_{E}:= \cH_{E1} \otimes \cH_{E2}$,
$\hat{H}_{E}:= \hat{H}_{E1}$, and
$\rho_{E}:=|0\rangle \langle 0|\otimes \rho_{E2}$.
Hence, the level-4 condition implies
\begin{align}
{\cal E}_j(\rho)=\Tr_{E} U
(\rho_I \otimes |0\rangle \langle 0|\otimes \rho_{E2}) 
U^\dagger 
(\hat{1}_{IE2} \otimes \Pi_{j}).
\end{align}
\end{proof}

It is natural to consider that 
the state on the additional external system $E2$ 
does not change due to the work extraction.
Hence, we impose the following restriction for the 
the state on the additional external system $E2$. 
\begin{definition}[Stationary condition]
A shift-invariant FQ-work extraction $({\cal H}_{E}, \hat{H}_{E}, U, \rho_{E})$ is said to satisfy 
the stationary condition when the relations 
$\rho_{E}=\rho_{E1}\otimes \rho_{E2}$ and
\begin{align}
\Tr_{IE1} U (\rho_I\otimes \rho_{E}) U^\dagger=
\rho_{E2} \Label{3-14-12eq}
\end{align}
hold for any initial state $\rho_I$ on $\cH_I$.
\end{definition}

Now, we have the following theorem.
\begin{theorem}\Label{3-13-5L}
Let ${\cal F}=({\cal H}_{E}, \hat{H}_{E}, U, \rho_{E})$ 
be a stationary FQ-work extraction.
Then, 
the CP-work extraction $CP({\cal F})$ satisfies the unital condition \eqref{3-14-10eq}.
\end{theorem}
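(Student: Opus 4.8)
The plan is to show directly that the CPTP map $\mathcal T:=\sum_{j}\mathcal E_{j}$ sends $\hat 1_I$ to $\hat 1_I$, which is exactly the unital condition \eqref{3-14-10eq}. Since $\sum_j P_{E,j}=\hat 1_E$, Definition \ref{FQ-normal} gives $\mathcal T(\rho_I)=\Tr_E U(\rho_I\otimes\rho_E)U^\dagger$, and by Condition SI1 we may write $\rho_E=\rho_{E1}\otimes\rho_{E2}$ and trace out $E1$ first: introducing the $E1$-reduced map
\begin{align}
\mathcal N(M):=\Tr_{E1}U(M\otimes\rho_{E1})U^\dagger
\end{align}
on $\cH_I\otimes\cH_{E2}$, we have $\mathcal T(\rho_I)=\Tr_{E2}\mathcal N(\rho_I\otimes\rho_{E2})$. (When $E=E1$ there is no $E2$ and $\mathcal T=\mathcal N$, so the argument below collapses to the single claim that $\mathcal N$ is unital.) Thus it suffices to prove $\Tr_{E2}\mathcal N(\hat 1_I\otimes\rho_{E2})=\hat 1_I$.

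First I would establish that $\mathcal N$ is unital. By Condition SI2 the unitary $U$ commutes with $\hat 1_{IE2}\otimes V_{E1}$; since $V_{E1}$ is the shift on $\cH_{E1}=L^2(\mathbb Z)$, whose Fourier decomposition diagonalises the $\mathbb Z$-action over $\theta\in[0,2\pi)$, this commutation forces $U$ to decompose as a direct integral $U\cong\int^{\oplus}U(\theta)\,\tfrac{d\theta}{2\pi}$ of unitaries $U(\theta)$ on $\cH_I\otimes\cH_{E2}$. Tracing out $E1$ prepared in $\rho_{E1}$ then exhibits $\mathcal N$ as a mixture of unitary conjugations, $\mathcal N(\cdot)=\int U(\theta)(\cdot)U(\theta)^\dagger\,d\mu(\theta)$ with $d\mu(\theta)=\langle\theta|\rho_{E1}|\theta\rangle\tfrac{d\theta}{2\pi}$ a probability measure; hence $\mathcal N(\hat 1_{IE2})=\hat 1_{IE2}$. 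In particular, as the paper already notes for unital maps \cite{openquantum}, $\mathcal N$ does not decrease the von Neumann entropy: $S(\mathcal N(\tau))\ge S(\tau)$ for every state $\tau$.

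The heart of the proof is an entropy squeeze applied to the maximally mixed input on $I$. Put $\tau:=\tfrac{1}{\dim\cH_I}\hat 1_I\otimes\rho_{E2}$ and $\sigma:=\mathcal N(\tau)$, with marginals $\sigma_I:=\Tr_{E2}\sigma$ and $\sigma_{E2}:=\Tr_I\sigma$. Entropy non-decrease gives
\begin{align}
S(\sigma)\ge S(\tau)=\log\dim\cH_I+S(\rho_{E2}).
\end{align}
On the other hand, the stationary condition \eqref{3-14-12eq} applied with $\rho_I=\tfrac{1}{\dim\cH_I}\hat 1_I$ yields $\sigma_{E2}=\Tr_{IE1}U(\tfrac{1}{\dim\cH_I}\hat 1_I\otimes\rho_E)U^\dagger=\rho_{E2}$, so subadditivity gives
\begin{align}
S(\sigma)\le S(\sigma_I)+S(\sigma_{E2})=S(\sigma_I)+S(\rho_{E2})\le\log\dim\cH_I+S(\rho_{E2}),
\end{align}
the last step because $\sigma_I$ is a state on $\cH_I$. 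Comparing the two bounds forces $S(\sigma_I)=\log\dim\cH_I$, hence $\sigma_I=\tfrac{1}{\dim\cH_I}\hat 1_I$. Since $\sigma_I=\tfrac{1}{\dim\cH_I}\Tr_{E2}\mathcal N(\hat 1_I\otimes\rho_{E2})=\tfrac{1}{\dim\cH_I}\mathcal T(\hat 1_I)$, this is precisely $\mathcal T(\hat 1_I)=\hat 1_I$.

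The step I expect to be the main obstacle is exactly this translation: the stationary condition constrains only the $E2$-marginal of the output, whereas the unital condition is a statement about the $I$-marginal, and a priori these two marginals are unrelated. The device that links them is combining entropy non-decrease of the unital $E1$-reduced channel $\mathcal N$ with subadditivity, which pins the $I$-marginal to its maximal-entropy value and hence to $\tfrac{1}{\dim\cH_I}\hat 1_I$. A secondary point to handle carefully is the justification that $\mathcal N$ is unital, since $\cH_{E1}=L^2(\mathbb Z)$ is infinite-dimensional: the argument proceeds through the direct-integral decomposition of $U$ over the spectrum of the shift $V_{E1}$ rather than through a finite Kraus sum.
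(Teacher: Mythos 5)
Your proof is correct, and its global skeleton coincides with the paper's proof of the general ($\cH_E=\cH_{E1}\otimes\cH_{E2}$) case: show that the $E1$-traced channel $\mathcal{N}$ preserves the maximally mixed state, obtain the lower bound $S(\sigma)\ge \log d_I+S(\rho_{E2})$, then combine subadditivity with the stationary condition $\sigma_{E2}=\rho_{E2}$ to squeeze. Where you genuinely diverge is in the key sub-step, the unitality of $\mathcal{N}$. The paper proves it by a bare-hands matrix-element computation in the shift basis (the identity $\langle y,j'|U|x,j\rangle=\langle y,-j|U|x,-j'\rangle$ and the chain \eqref{10-24-3}, credited to K.~Ito), verifying $\langle\psi_I|\Tr_E U(\hat{1}\otimes\rho_E)U^\dagger|\psi_I\rangle=1$ for every pure state. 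You instead diagonalize the bilateral shift by Fourier transform: since $U$ commutes with the unitary $\hat{1}_{IE2}\otimes V_{E1}$, it commutes with the maximal abelian von Neumann algebra $\{V_{E1}\}''\cong L^\infty(S^1)$, hence is decomposable, $U=\int^\oplus U(\theta)\,\frac{d\theta}{2\pi}$, and tracing out $E1$ exhibits $\mathcal{N}$ as a random-unitary channel. This is a stronger structural statement than the paper extracts: it gives unitality immediately, and via concavity of entropy it also yields the non-decrease $S(\mathcal{N}(\tau))\ge S(\tau)$ directly, whereas the paper derives the corresponding bound \eqref{3-14-9eq} through the data-processing inequality for relative entropy (the two routes are of course equivalent in substance). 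Your endgame is also slightly leaner: rather than invoking the equality condition of \eqref{3-14-9eq} to conclude $\sigma=\frac{\hat{1}_I}{d_I}\otimes\rho_{E2}$, you only pin down $S(\sigma_I)=\log d_I$, which already forces $\sigma_I=\hat{1}_I/d_I$ — and that is all the unital condition for $\sum_j{\cal E}_j$ requires, since it is a statement about the $I$-marginal alone. Two minor remarks: both arguments implicitly take $E2$ finite-dimensional (the paper uses $\hat{1}_{E2}/d_{E2}$, you cancel $S(\rho_{E2})$ across the squeeze), so you are on equal footing there; and your treatment of the infinite-dimensional $E1$ through the direct-integral decomposition is, if anything, more careful than the paper's formal summation over the basis $\{|j\rangle\}$, at the cost of the measure-theoretic point about $\langle\theta|\rho_{E1}|\theta\rangle$, which is rigorized by expanding $\rho_{E1}=\sum_k p_k|\phi_k\rangle\langle\phi_k|$ exactly as you indicate.
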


\begin{proof}
Firstly, we consider the case when $\cH_{E}=\cH_{E1}$.
To show the unital condition \eqref{3-14-10eq}, 
it is enough to show that
\begin{align}
\Tr_I |\psi_I\rangle \langle \psi_I| \Tr_{E} U (I\otimes \rho_{E}) U ^{\dagger}
=1
\end{align}
for any pure state $|\psi_I\rangle$ on $\cH_I$.
The shift-invariant property \eqref{10-24} implies that 
\begin{align}
&\Big\langle y, j'\Big|U  \Big|x,j\Big\rangle 
=
\Big\langle y, j'\Big|U V_{E1}^{j+j'} \Big|x,-j'\Big\rangle \nonumber \\
=&
\Big\langle y, j'\Big| V_{E1}^{j+j'} U \Big|x,-j'\Big\rangle 
=
\Big\langle y, -j\Big|  U \Big|x,-j'\Big\rangle .
\end{align}
When $\rho_{E}=|\psi_{E}\rangle \langle \psi_{E}|$, $|\psi_{E}\rangle =\sum_{j}b_j|j\rangle $, 
and
$|\psi_I\rangle =\sum_{x}a_x|x\rangle $, we have \cite{KT}
\begin{align}
& \Tr_I |\psi_I\rangle \langle \psi_I| \Tr_{E} U (I\otimes \rho_{E}) U 
=
\sum_{x,j} | \langle \psi_I, j| U |x, \psi_{E}\rangle|^2\nonumber\\
=&
\sum_{x,j'} \Big|
\sum_y \bar{a_y} \sum_j b_j \langle y, j'| U |x, j\rangle \Big|^2 \nonumber\\
=&
\sum_{x,j'} \Big|
\sum_y \bar{a_y} \sum_j b_j \langle y, -j| U |x, -j'\rangle \Big|^2 \nonumber\\
=&
\sum_y \bar{a_y} \sum_j b_j 
\sum_{\tilde{y}} \bar{a_{\tilde{y}}} \sum_{\tilde{j}} b_{\tilde{j}} 
\langle y, -j| y, -\tilde{j} \rangle  \nonumber\\
=&
\sum_y |\bar{a_y}|^2 \sum_j |b_j|^2 
=1. \Label{10-24-3}
\end{align}
Hence, when 
$\rho_{E}=\sum_l p_l |\psi_{E,l}\rangle \langle \psi_{E,l}|$, 
we also have
\begin{align}
\Tr_I |\psi_I\rangle \langle \psi_I| \Tr_{E} U (I\otimes \rho_{E}) U 
=1.
\end{align}

Next, we proceed to the proof of the general case.
The above discussion implies that
\begin{align}
\Tr_{E1} U 
(\frac{\hat{1}_I}{d_I} \otimes \frac{\hat{1}_{E2}}{d_{E2}}
\otimes \rho_{E1} )U^\dagger =
\frac{\hat{1}_I}{d_I}\otimes \frac{\hat{1}_{E2}}{d_{E2}}.
\end{align}
The information processing inequality yields that
\begin{align}
&D\Big(
\frac{\hat{1}_I}{d_I}\otimes \rho_{E2}
\Big\|
\frac{\hat{1}_I}{d_I}\otimes \frac{\hat{1}_{E2}}{d_{E2}} \Big)\nonumber\\
\ge &
D\Big(
\Tr_{E1} U 
(\frac{\hat{1}_I}{d_I}\otimes  \rho_{E2}
\otimes \rho_{E1} )U^\dagger 
\Big\|\nonumber \\
& 
\Tr_{E1} U 
(\frac{\hat{1}_I}{d_I}\otimes \frac{\hat{1}_{E2}}{d_{E2}}
\otimes \rho_{E1} )U^\dagger 
\Big) \nonumber\\
\ge &D\Big(
\Tr_{E1} U 
(\frac{\hat{1}_I}{d_I}\otimes  \rho_{E2}
\otimes \rho_{E1} )U^\dagger 
\Big\| \frac{\hat{1}_I}{d_I}\otimes \frac{\hat{1}_{E2}}{d_{E2}}
\Big),
\end{align}
which implies that
\begin{align}
S(\Tr_{E1} U 
(\frac{\hat{1}_I}{d_I}\otimes  \rho_{E2}
\otimes \rho_{E1} )U^\dagger )
\ge
\log d_I+ S(\rho_{E2}).\Label{3-14-9eq}
\end{align}
Due to the condition \eqref{3-14-12eq},
the reduced density operator $\Tr_{E1} U 
(\frac{\hat{1}_I}{d_I}\otimes  \rho_{E2}
\otimes \rho_{E1} )U^\dagger $ on $E2$
is $\rho_{E2}$.
Under this condition, we have the inequality \eqref{3-14-9eq}.
Equality in \eqref{3-14-9eq} holds only when
$\Tr_{E1} U 
(\frac{\hat{1}_I}{d_I}\otimes  \rho_{E2}
\otimes \rho_{E1} )U^\dagger =
\frac{\hat{1}_I}{d_I}\otimes \rho_{E2}$,
which implies the unital condition \eqref{3-14-10eq}.
\end{proof}

Overall, as a realizable heat engine, 
we impose 
the energy-conserving, shift-invariant, and 
stationary conditions to our FQ-work extractions.
Also, it is natural to assume that 
the initial state on $E1$ is an eigenstate of the Hamiltonian 
$\hat{H}_{E1}$
because it is not easy to prepare a non-eigenstate 
of the Hamiltonian $\hat{H}_{E1}$.
{Namely, we define the standard FQ-work extraction as follows:}
\begin{definition}
When a FQ-work extraction satisfies {the energy-conserving, shift-invariant, 
stationary conditions, and the condistion that the initial state on $E1$ is an eigenstate of the Hamiltonian 
$\hat{H}_{E1}$}, we refer to it as a {\it standard FQ-work extraction}.
\end{definition}

So, for a standard FQ-work extraction ${\cal F}$,
the CP-work extraction $CP({\cal F})$ is a standard CP-work extraction.
In the following, 
we consider that the set of standard FQ-work extractions
as the set of preferable work extractions.
Hence, our optimization will be done among the set of standard FQ-work extractions.

However, we can consider restricted classes of standard FQ-work extractions
by considering additional properties.
For a standard FQ-work extraction ${\cal F}=
({\cal H}_{E}, \hat{H}_{E}, U, \rho_{E})$, 
we assume that
the external system $\cH_{E}$ consists only of  
a non-degenerate external system $\cH_{E1}$.
In this case, the corresponding standard CP-work extraction
$CP({\cal F})$ depends only on the internal unitary 
$U_I= WUW^\dagger$.

\begin{definition}
For the above given standard FQ-work extraction ${\cal F}=
({\cal H}_{E}, \hat{H}_{E}, U, \rho_{E})$,
the standard CP-work extraction
$CP({\cal F})$ is called
{\it the standard CP-work extraction associated to 
the internal unitary $U_I$},
and is denoted by $\hat{CP}(U_I)$.
\end{definition}

Further, 
an internal unitary $U_I$
is called {\it deterministic}
when 
$\langle x| U_I |x'\rangle$
is zero or $e^{i\theta}$ for any $x$ and $x'$.
In the latter, 
a standard CP-work extraction associated to 
the deterministic internal unitary 
plays an important role.

We illustrate a Venn diagram of the FQ-work extractions in Figure \ref{venn2}, and classify the relationships among the classes of the CP-work extraction and that of the FQ-work extraction in Table \ref{table}.
\begin{figure}[htbp]
\begin{center}
\includegraphics[scale=0.9]{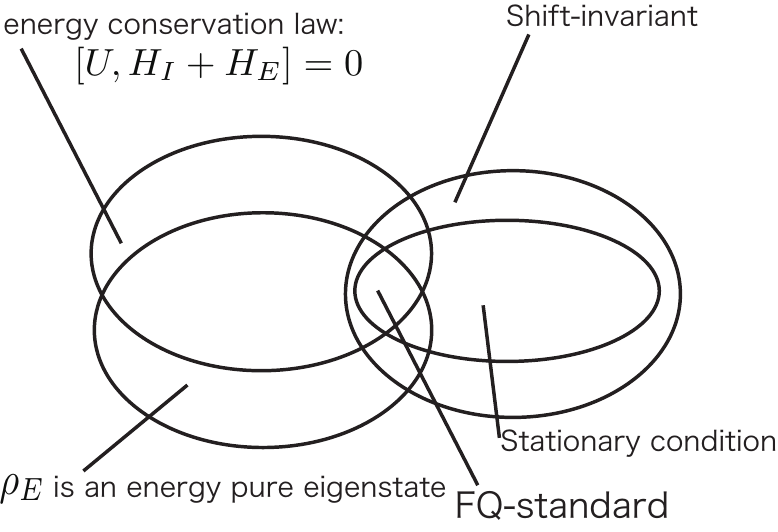}
\end{center}
\caption{A Venn diagram of the FQ-work extractions}
\Label{venn2}
\end{figure}

\begin{table*}[htb]
\caption{Correspondence between FQ-work extraction and CP-work extraction.
The following table describes 
the type of CP-work extraction that can be generated by respective classes of FQ-work extraction.}
  \begin{tabular}{|c|c|c|c|c|}\hline
\multirow{4}{*}{$CP({\cal F})$} & 
FQ-energy & 
FQ-energy & 
\multirow{2}{*}{Stationary} & 
FQ-energy 
\\
&conservation, & conservation, & & conservation, 
\\
&$\rho_{E}$ is not an & 
\multirow{2}{*}{Shift-invariant} &
\multirow{2}{*}{condition} & 
$\rho_{E}$ is an 
\\
& energy eigenstate  & 
& & energy eigenstate 
\\
\hline
CP-level 2 & Yes (Lemma \ref{4-2-1L}) &Yes & - & Yes 
\\\hline
CP-level 3 & - & Yes (Lemma \ref{3-13-2L}) & - & Yes 
\\\hline
CP-level 4 & No (Theorem \ref{3-14-8L})& - & - & Yes (Theorem \ref{3-14-8L})  
\\\hline
CP-unital  & -& - & Yes (Theorem \ref{3-13-5L}) & -
\\\hline
  \end{tabular}
\Label{table}
\end{table*}


Finally, we consider the relation with the  the classical standard formulation.
In the shift-invariant unitary,
if we focus on the eigenstates,
the state $|i,j\rangle $ can be regarded as being probabilistically changed to
$|i',j'\rangle $ with the energy conserving law 
\begin{align}
h_i+h_Ej =h_{i'}+h_E j' , \Label{7-14-1}
\end{align}
which is the same as in  the classical standard formulation.
When we focus on the internal system,
$|i\rangle $ is changed to $|i'\rangle $.
So, in the semi-classical scenario, they make a unitary time evolution based on the states on the internal system.
However, the time evolution occurs between the internal and external systems as
$|i,j\rangle \mapsto |i',j'\rangle $ in the classical standard formulation
under the condition \eqref{7-14-1}.
So, to consider its natural quantum extension, we need to make a unitary evolution on the composite system.
That is, it is natural to add proper complex amplitudes to the time evolution $|i,j\rangle \mapsto |i',j'\rangle $ so that it forms a unitary evolution on the composite system.
Hence, our shift-invariant unitary can be regarded as a natural quantum extension of  the classical standard formulation because it is constructed in this way.

One might consider that there is 
a cost for initialization of the measurement.
However, if we adopt the projection postulate, this problem can be resolved when we employ a shift-invariant model.
The classical standard formulation does not describe the dynamics of the external system, explicitly
because the dynamics of the internal system does not depend on the state of the external system. 
Similarly, under our shift-invariant model,  
the dynamics of the internal system does not depend on the state of the external system 
as long as the initial state of the external system is an energy eigenstate.
When the projection postulate holds for our measurement, the final state of the external system is an energy eigenstate,
which can be used as the initial state for the next work extraction. 
In real systems, decoherence occurs during unitary evolution, such that
the state of the external system becomes inevitablly an energy eigenstate.

Further, we should notice that the initialization of the measurement 
is different from the initialization of the thermal bath.
In the finite-size setting, the final state in the thermal bath
is different from the thermal state in general.
Hence, we need to consider the initialization of the thermal bath.
However, the thermal bath is considered as a part of the internal system in our model.
So, this problem is different from the initialization of the measurement,
and will be discussed in \cite{pre2}.

Although we consider only the lattice case,
in real systems, there are so many cases with non-lattice Hamiltonian.
Our discussion can be extended to the non-lattice case with small modification. The detail is given in Appendix \ref{s4-3}.

\section{Trade-off between information loss and coherence}\Label{s5}
\subsection{Approximation and coherence}\Label{s5a}
In this section, we investigate the validity of the semi-classical scenario \cite{tasaki,Croocks,Car1,sagawa1,jacobs,sagawa2,Funo,Morikuni,Parrondo,IE1,IE2,IE1.5,BBM1,TLH}
 based on an 
FQ-work extraction model.
To discuss this issue,
we firstly recall the semi-classical scenario as follows.
\begin{description}
\item[\bf Semi-classical scenario \cite{tasaki,Croocks,Car1,sagawa1,jacobs,sagawa2,Funo,Morikuni,Parrondo,IE1,IE2,IE1.5,BBM1,TLH}:]
In this scenario, we also consider that
to extract work, an external agent performs the external operation as the unitary time evolution $U_{I}:={\cal T}[\exp(\int-i\hat{H}_{I}(t)dt)]$  (${\cal T}[...]$ denotes time-ordered product) on $I$ by time-dependently varying the control parameter of the Hamiltonian $\hat{H}_{I}(t)$ of the internal system $I$, which usually consists of the system $S$ and the heat bath $B$. 
During the time evolution, the loss of energy of the internal system
is transmitted to the external controller through the back reaction of the control parameter. So, the energy loss is regarded as the extracted work\cite[Section 2]{tasaki}.
This scenario is considered as a natural quantum extension of the classical standard formulation \cite{Ehrenfest}.
\end{description}

In the semi-classical scenario, they expect that the unitary can be realized via the control of the Hamiltonian of the internal system.
Since the control is realized by the external system,
they consider that
the work can be transferred to the external system via the control.
In this scenario, they expect that
the time evolution $\Lambda$ of the internal system $\cH_I$
can be approximated to an ideal unitary $U_I$.
That is, when we employ an FQ-work extraction
${\cal F}=({\cal H}_{E}, \hat{H}_{E}, U, \rho_{E})$,
the time evolution $\Lambda$ of the internal system $\cH_I$
is given as
$\Lambda(\rho_I)= \Tr_{E} U (\rho_I \otimes \rho_{E}) U^\dagger$.
To qualify the approximation to the unitary $U_I$, we need to focus on two aspects.
One is the time evolution of basis states in a basis 
$\{|x\rangle \}_{x}$ diagonalizing the Hamiltonian $\hat{H}_I$.
The other is the time evolution of superpositions of states in this basis.
Usually, it is not difficult to realize the same evolution as that of $U_I$ only for the former states. 
However, it is not easy to keep the quality of the latter time evolution, which is often called the coherence.
Hence, we fix a unitary $U_I$, and we assume that
the TP-CP map $\Lambda$ satisfies the condition
\begin{align}
\langle y|\Lambda(|x \rangle \langle x|)|y\rangle = 
\langle y| U_I |x \rangle \langle x|U_I^\dagger |y\rangle 
\hbox{ for any }x,y.\Label{3-8-1}
\end{align}
That is, we choose our time evolution among TP-CP maps satisfying the above condition.
Under the condition, 
the quality of the approximation to the unitary $U_I$ 
can be measured by the coherence of the TP-CP map $\Lambda$.
For a measure of coherence, we employ the entropy exchange \cite{Schumacher}.
\begin{align}
S_e(\Lambda,\rho_I):=
S(\Lambda(|\Phi \rangle \langle \Phi |)),\Label{3-8-12}
\end{align}
{where $|\Phi \rangle$ is the purification of the state $\rho_I$ with the reference system $R$.
When a TP-CP map $\Lambda$ satisfies \eqref{3-8-1} and $S_e(\Lambda,\rho_I)<<1$, we say that ``$\Lambda\approx U_{I}$.''}

\subsection{Expression of detectability of work in terms of correlation}\Label{s5b}

{On the other hand, we need to focus on another feature of thermodynamics, i.e., the detectability of work. (Fig.\ref{detectwork})
In standard thermodynamics, we can measure the amount of extracted work by measuring only on the work storage system, e.g., the weight.
In order that the heat engine works properly as energy transfer from a collection of microscopic systems to a macroscopic system,
the amount of extracted work needs to be reflected to the outer system that consists of the macroscopic devices.
Hence, the amount of energy loss in the internal system
is required to be correlated to the outer system.
So, we can measure the detectability of work by the correlation between 
the external system $E$ after energy extraction 
and the energy loss in the internal system $I$. 
In Fig. \ref{tradeofffig1},
the former is expressed as A and the latter is as B.
Our purpose of this section is to show a trade-off relation, which indicates 
the impossibility of the detection of work under the condition $\Lambda\approx U_{I}$.
Namely, when $\Lambda\approx U_{I}$, the correlation between A and B in Fig. \ref{tradeofffig1} is close to zero.}

\begin{figure}[htbp]
\begin{center}
\includegraphics[scale=0.35]{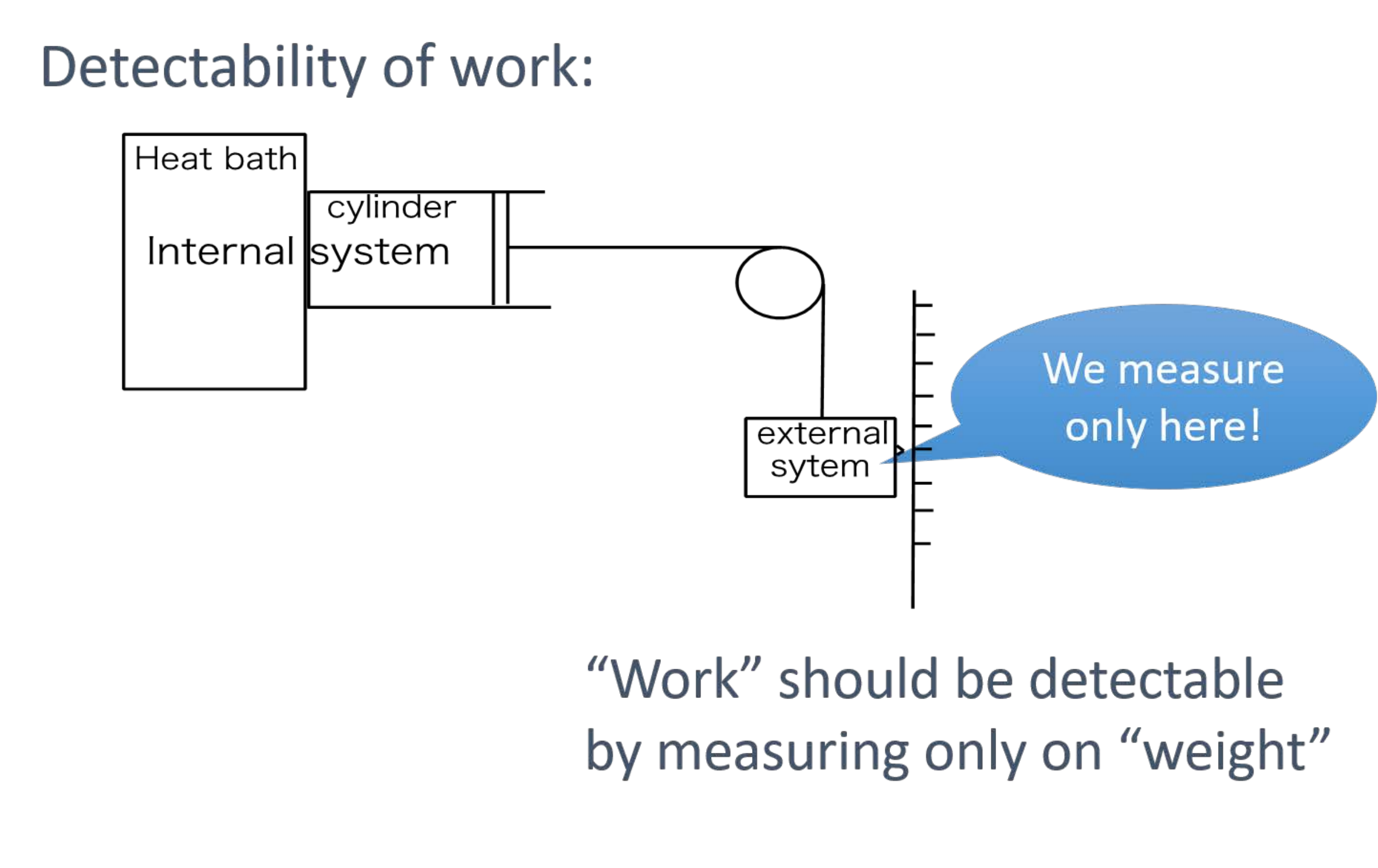}
\end{center}
\caption{The concepts of the detectability of work.
}
\Label{detectwork}
\end{figure}

\begin{figure}[htbp]
\begin{center}
\includegraphics[scale=0.35]{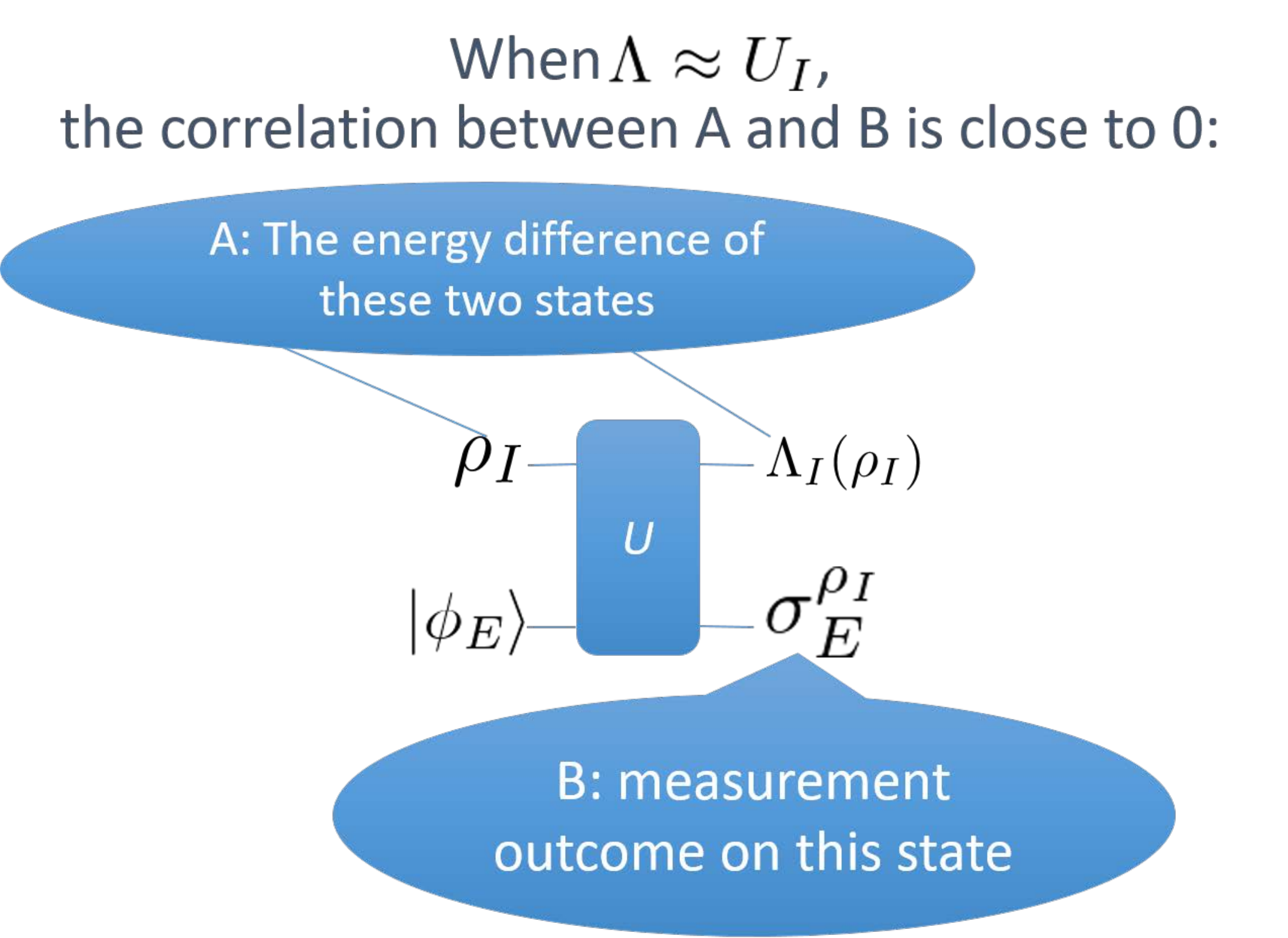}
\end{center}
\caption{The concept of our trade-off relation.
Here, $\sigma^{\rho_{I}}_{E}:=\Tr_{I}[U(\rho_{I}\otimes|\phi\rangle\langle\phi|_{E})U^{\dagger}]$ }
\Label{tradeofffig1}
\end{figure}

{
In order to show the trade-off relation, we employ the purification 
$|\Phi\rangle$ of the state $\rho_I$ with the reference system $R$,
which satisfies $\Tr_{I}[|\Phi\rangle\langle\Phi|]=\Tr_{E}[|\Phi\rangle\langle\Phi|]=\rho_{I}$.
Then, we can interpret $R$ as a kind of memory, and can translate the energy loss of $I$ during the time evolution $\Lambda$ into the energy difference between $I$ and $R$ after the time evolution. (Fig.\ref{detect3})
We also employ the initial state $\rho_{E}$ on the external system $E$. 
Here, we take the external system large that
the time evolution on the joint system of $I$ and $E$ can be regarded as a unitary $U$ on $IE$
and the state $\rho_{E}$ is a pure state $|\phi_{E}\rangle$.
}
In the following, we denote the initial state 
$|\Phi,\phi_{E}\rangle\langle \Phi,\phi_{E}|$
of the total system $\cH_I\otimes \cH_R \otimes \cH_{E}$
by $\rho_{IRE}$.
Then, we denote the resultant state 
$U|\Phi,\phi_{E}\rangle\langle \Phi,\phi_{E}|U^\dagger$
by $\rho'_{IRE}$. 
\begin{figure}[htbp]
\begin{center}
\includegraphics[scale=0.32]{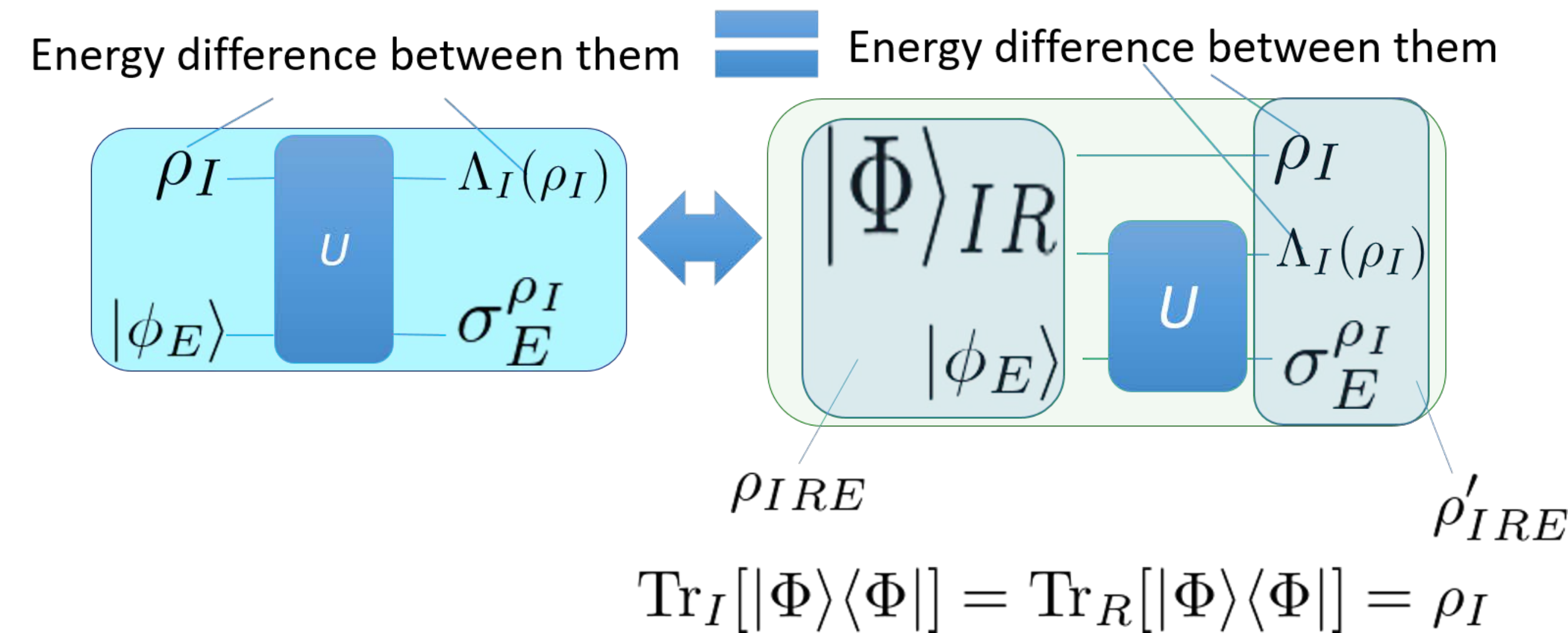}
\end{center}
\caption{We can translate the energy loss of $I$ during the time evolution $\Lambda$ into the energy difference between $I$ and $R$ }
\Label{detect3}
\end{figure}

To determine the amount of energy lost from the internal system, we consider a measurement of
the observable $\hat{H}_I-\hat{H}_R$ on the joint system $I$ and $R$ after the time evolution $\Lambda$.(Fig. \ref{detect4})
That is, by using the spectral decomposition $\hat{H}_I-\hat{H}_R=\sum_{z}z F_z$,
we define the final state $\rho''_{ZE}$ on $\cH_Z\otimes \cH_{E} $ by
\begin{align}
\rho''_{ZE}:=
\sum_{z}|z\rangle_Z~_Z \langle z| \otimes \Tr_{IR} F_z \rho'_{IRE}. 
\end{align}
Here, the random variable $Z$ takes the value $z$ with probability
$P_Z(z):= \Tr \rho'_{IRE} F_z
=\Tr \Lambda(|\Phi \rangle \langle \Phi|) F_z$.
When the outcome $Z$ is $z$, the resultant state on $E$ is 
$\rho_{E|Z=z}''
:=\frac{1}{P_Z(z)} \Tr_{IR} \rho'_{IRE} F_z $.
So, the correlation between the external system $E$ and the amount of energy loss
can be measured by the correlation on the state $\rho_{Z,E}$.
Although this scenario is based on a virtual measurement,
this interpretation will be justified by 
considering the situation without the purification in 
the following lemma.

\begin{figure}[htbp]
\begin{center}
\includegraphics[scale=0.3]{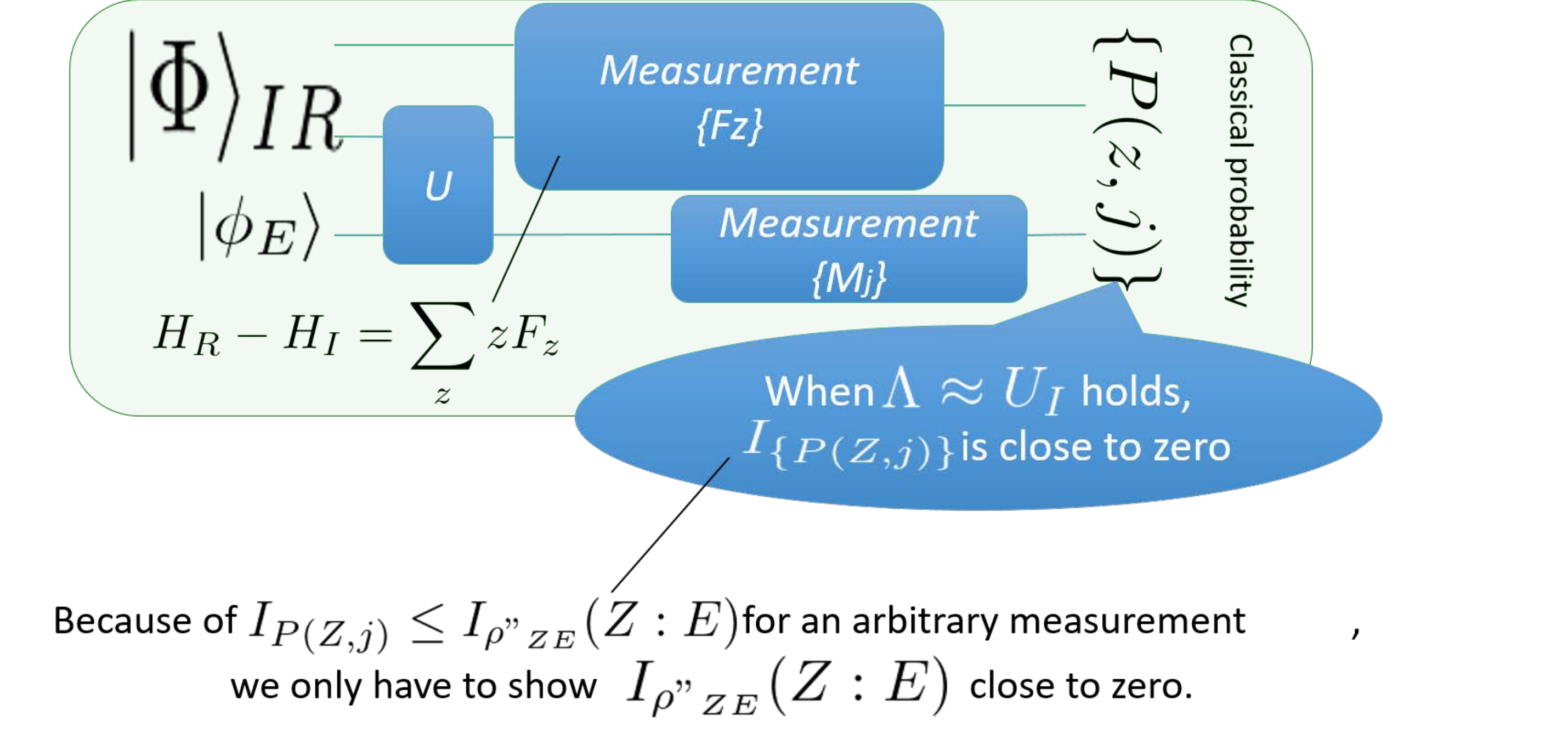}
\end{center}
\caption{The schematic diagram of the formulation.}
\Label{detect4}
\end{figure}

\begin{lemma}\Label{3-15-3L}
We consider the case when 
the eigenstate $|\psi_{I,a} \rangle$ is generated 
with probability $P_A(a)$.
Then, measuring the Hamiltonian $\hat{H}_I=\sum_h h P_h$ after the time evolution $\Lambda$, 
we obtain the conditional distribution $
P_{H|A}(h|a):=\Tr P_h 
\Lambda(|\psi_{I,a} \rangle \langle \psi_{I,a} |)$. 
Then, we chose the state $\rho_I$ to be
\begin{align}
\rho_I=\sum_a P_A(a) |\psi_{I,a} \rangle \langle \psi_{I,a}| .
\Label{3-15-1eq}
\end{align}
Under the joint distribution 
$P_{HA}(h,a):=P_{H|A}(h|a)P_A(a) $,
the amount $\langle \psi_{I,a} | \hat{H}_I |\psi_{I,a} \rangle -h$ of energy lost takes values 
$z$ with the probability $P_{Z}(z)$.
That is,
\begin{align}
\sum_{h,a:\langle \psi_{I,a} | \hat{H}_I |\psi_{I,a} \rangle-h=z }P_{HA}(h,a) =P_{Z}(z).
\end{align}
\end{lemma}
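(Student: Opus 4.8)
The plan is to realize the virtual purification picture by an \emph{ensemble-adapted} purification, in which the reference $R$ orthogonally records the label $a$ of the prepared eigenstate, and then to show that the decoherence of this record reproduces exactly the classical joint law $P_{HA}$. First I would construct the purification explicitly. Given the ensemble $\{P_A(a),|\psi_{I,a}\rangle\}$ realizing $\rho_I$ as in \eqref{3-15-1eq}, introduce orthonormal reference states $\{|a\rangle_R\}$ and set
\begin{align}
|\Phi\rangle=\sum_a \sqrt{P_A(a)}\,|\psi_{I,a}\rangle_I\otimes |a\rangle_R ,
\end{align}
so that $\Tr_R|\Phi\rangle\langle\Phi|=\rho_I$. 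I would choose the reference Hamiltonian to record the initial energy of the drawn eigenstate, $\hat{H}_R|a\rangle_R=h_a|a\rangle_R$ with $h_a:=\langle\psi_{I,a}|\hat{H}_I|\psi_{I,a}\rangle$, so that the spectral projections $F_z$ of the energy-loss observable $\hat{H}_R-\hat{H}_I$ (identified with the paper's $\hat{H}_I-\hat{H}_R=\sum_z zF_z$ up to the overall sign convention inherited from $W$) collect precisely the events with $h_a-h=z$. Since $U$, and hence $\Lambda$, acts trivially on $R$, the quantity in the statement reduces, exactly as recorded in the setup, to $P_Z(z)=\Tr_{IR}[(\Lambda\otimes\mathrm{id}_R)(|\Phi\rangle\langle\Phi|)F_z]$.

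The crux is the decoherence of the reference record. Expanding $|\Phi\rangle\langle\Phi|$ over the pair $(a,a')$, applying $\Lambda$ on $I$, and pairing against $F_z=\sum_{h,\mu:\,\mu-h=z}P_h\otimes Q_\mu$, where $Q_\mu$ are the spectral projections of $\hat{H}_R$, the reference trace factor is $\langle a'|Q_\mu|a\rangle_R=\delta_{a,a'}\,\delta_{\mu,h_a}$. This is the one step that really matters: the orthonormality of $\{|a\rangle_R\}$ annihilates every off-diagonal term $a\neq a'$ and simultaneously pins $\mu=h_a$, so the coherences carried by $\Lambda(|\psi_{I,a}\rangle\langle\psi_{I,a'}|)$ for $a\neq a'$ never contribute. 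What survives is
\begin{align}
P_Z(z)=\sum_a P_A(a)\!\!\sum_{h:\,h_a-h=z}\!\!\Tr\big[P_h\,\Lambda(|\psi_{I,a}\rangle\langle\psi_{I,a}|)\big]
=\!\!\sum_{h,a:\,h_a-h=z}\!\!P_{HA}(h,a),
\end{align}
where I used $\Tr[P_h\Lambda(|\psi_{I,a}\rangle\langle\psi_{I,a}|)]=P_{H|A}(h|a)$ together with $P_A(a)P_{H|A}(h|a)=P_{HA}(h,a)$. This is exactly the asserted identity.

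The main obstacle is conceptual rather than computational: one must argue that the purification appropriate to this operational scenario is the one in which $R$ \emph{orthogonally} records $a$, so that measuring the reference-diagonal difference of $\hat{H}_I$ and $\hat{H}_R$ washes out the inter-ensemble coherences. Were the $|\psi_{I,a}\rangle$ non-orthogonal and $R$ instead chosen as a naive symmetric copy, the reference would not faithfully decohere the label and the matching with $P_{HA}$ could fail; the orthonormal record is precisely what encodes that the preparation device ``knows'' $a$. A minor bookkeeping point is the sign of $\hat{H}_R$ relative to $\hat{H}_I$, which I fix so that the measured difference returns the loss $h_a-h$, in parallel with the $-h_x/h_E$ convention in the isometry $W$.
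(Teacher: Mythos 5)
Your proof is correct, and it supplies exactly the direct verification that the paper omits: the authors declare Lemma \ref{3-15-3L} trivial and skip its proof, and your computation (purify with a reference recording the preparation label, use that $\Lambda$ acts only on $I$, and let orthogonality of the record kill the off-diagonal $(a,a')$ terms) is the natural way to fill that in. One remark on what you call the main conceptual obstacle: the orthogonal label record is a sufficient choice, but nothing actually needs to be argued about which purification is ``appropriate,'' and your worry that the matching ``could fail'' for non-orthogonal $|\psi_{I,a}\rangle$ is moot under the lemma's hypotheses. Since every $|\psi_{I,a}\rangle$ is an eigenstate of $\hat H_I$, states with distinct eigenvalues are automatically orthogonal, and by linearity $\sum_{a:\,h_a=\mu} P_A(a)\,|\psi_{I,a}\rangle\langle \psi_{I,a}| = P_\mu\,\rho_I\, P_\mu$, the energy-pinched component of $\rho_I$ as in \eqref{3-15-1eq}. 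Consequently, for \emph{any} purification $|\Phi\rangle$ satisfying the energy-matching condition $(\hat H_I\otimes \hat 1)|\Phi\rangle=(\hat 1\otimes \hat H_R)|\Phi\rangle$ --- your label-record purification and the canonical energy-basis copy alike --- one may measure $\hat H_R$ first (it commutes with everything $\Lambda$ does on $I$), which steers $I$ to the same conditional states $P_\mu\rho_I P_\mu/\Tr[P_\mu\rho_I]$; hence $P_Z$ is identical across all such purifications, and only the energy $\mu$, not the label $a$, needs to be decohered. This purification-independence is presumably why the authors regarded the statement as trivial, and adding that one line would make your argument airtight against the objection that the paper's $P_Z$ was defined with some other fixed purification. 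Your sign bookkeeping relative to the paper's $\hat H_I-\hat H_R=\sum_z z F_z$ is fine; the paper itself is loose about that convention.
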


Since the proof of Lemma \ref{3-15-3L} is trivial, we skip its proof.
Lemma \ref{3-15-3L} guarantees that the probability distribution $P_{Z}(z)$ is the same as the work distribution defined in \cite{TLH} under its assumption.
When $\rho_I$ commutes with $\hat{H}_I$, we can take the above 
eigenstate $|\psi_{I,a} \rangle$ and a probability distribution $P_A$
satisfying the above condition \eqref{3-15-1eq}.
Also, when the state $\rho_I$ is a Gibbs state or a mixture of Gibbs states,
the state $\rho_I$ satisfies the condition in Lemma \ref{3-15-3L}.
Hence, it is suitable to consider the distribution $P_{Z}$ of the amount of energy lost in this case.

Here, we employ two measures of the correlation.
One of the measures is the mutual information:
\begin{align}
& I_{\rho''_{ZE}}(Z;E) 
:= D(\rho''_{ZE} \| \rho_{Z}'' \otimes \rho_{E}'' )\nonumber\\
=&
S(\rho_{E}'' )- \sum_{z} P_Z(z) S(\rho_{E|Z=z}'' )
\stackrel{(a)}{\ge } S(\rho_{E}'' )
=S(\Lambda(|\Phi \rangle \langle \Phi |)),
\Label{3-8-11}
\end{align}
where $D(\tau\|\sigma):= \Tr \tau (\log \tau -\log \sigma)$.
Here, equality in $(a)$ holds 
if and only if the state $\rho_{E|Z=z}''$ 
is pure for any value $z$ with non-zero probability $P_Z(z)$.
The information processing inequality 
for the map $|z\rangle \langle z| \mapsto \rho_{E|Z=z}''$ yields
\begin{align}
I_{\rho''_{ZE}}(Z;E) \le S(P_Z).
\Label{3-11-1eq}
\end{align}
In particular, equality in \eqref{3-11-1eq} holds
in the ideal case, i.e., in the case when 
the states $\{ \rho_{E|Z=z}''\}_z$
are distinguishable, i.e.,
$\Tr \rho_{E|Z=z}'' \rho_{E|Z=z'}''=0$ for $z\neq z'$
with non-zero probabilities $P_Z(z)$ and $P_Z(z')$.
For example, when $U$ satisfies the energy conservation law \eqref{FQ2},
the initial state of the internal system commutes with the internal Hamiltonian $\hat{H}_I$,
and 
the initial state of external system is an energy eigenstate,
this conditions holds because the energy of the final state of external system precisely reflects the loss of energy of the internal system.
So, the imperfectness of the correlation for the decrease of the energy can be measured by the difference
\begin{align}
\Delta I_{\rho''_{ZE}}(Z;E) 
:= S(P_Z)- I_{\rho''_{ZE}}(Z;E) .
\Label{3-8-13}
\end{align}

\subsection{Trade-off with imperfectness of correlation}\Label{s5c}
Many papers studied trade-off relations between 
the approximation of a pure state on the bipartite system 
and the correlation with the third party $E$.
In particular, since this kind of relation plays an important role in the security analysis in quantum key distribution (QKD),
it has been studied mainly from several researchers in QKD and the related areas with various formulations 
\cite[Section V-C]{Schumacher}
\cite[(21)]{Hamada}
\cite[Theorem 1]{Koashi}\cite[Lemma 2]{Hayashi1}\cite[Theorem 2]{Hayashi2}\cite{Miyadera}\cite{Buscemi2008}.

However, these trade-off relations are not suitable for our situation.
So, we derive two kinds of trade-off relations 
with the imperfectness of correlation, which are more suitable for our purpose.
\begin{theorem}\Label{3-15-2T}
The amount of decoherence $S_e(\Lambda,\rho_I)$
and 
the amount of imperfectness of correlation $\Delta {I}_{\rho''_{ZE}}(Z;E) $
satisfy the following trade-off relation:
\begin{align}
S_e(\Lambda,\rho_I)
+\Delta {I}_{\rho_{ZE}''}(Z;E) 
&\ge S(P_Z) \Label{3-8-9}.
\end{align}
The equality holds 
if and only if the state $\rho_{E|Z=z}''$ 
is a pure state $| \psi_{E|Z=z}\rangle$
for any value $z$ with non-zero probability $P_Z(z)$.
In this case, we have
\begin{align}
S_e(\Lambda,\rho_I)
&= {I}_{\rho''_{ZE}}(Z;E) 
= S\Big(\sum_{z}P_Z(z) 
|\psi_{E|Z=z}\rangle \langle \psi_{E|Z=z}|
\Big)
\Label{3-11-7}.
\end{align}
\end{theorem}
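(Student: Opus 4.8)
The plan is to notice that the two $S(P_Z)$ terms in \eqref{3-8-9} cancel, reducing the whole statement to the single inequality $S_e(\Lambda,\rho_I)\ge I_{\rho''_{ZE}}(Z;E)$. Indeed, inserting the definition $\Delta I_{\rho''_{ZE}}(Z;E)=S(P_Z)-I_{\rho''_{ZE}}(Z;E)$ from \eqref{3-8-13} into the left-hand side of \eqref{3-8-9}, the claim becomes $S_e(\Lambda,\rho_I)+S(P_Z)-I_{\rho''_{ZE}}(Z;E)\ge S(P_Z)$, i.e. $S_e(\Lambda,\rho_I)\ge I_{\rho''_{ZE}}(Z;E)$. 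Everything then rests on one identity together with an elementary entropy bound.

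First I would prove the identity $S_e(\Lambda,\rho_I)=S(\rho''_E)$. By \eqref{3-8-12} and the definition of $\Lambda$, $S_e(\Lambda,\rho_I)=S(\Lambda(|\Phi\rangle\langle\Phi|))=S(\rho'_{IR})$ with $\rho'_{IR}:=\Tr_E\rho'_{IRE}$. The crucial observation is that, because $|\Phi,\phi_E\rangle$ is pure and $U$ is unitary, the global state $\rho'_{IRE}=U|\Phi,\phi_E\rangle\langle\Phi,\phi_E|U^\dagger$ is pure; hence across the $IR\,|\,E$ cut $S(\rho'_{IR})=S(\rho'_E)$. Moreover the virtual measurement of $\hat{H}_I-\hat{H}_R=\sum_z z F_z$ acts only on $IR$ and its projections are complete, $\sum_z F_z=\hat{1}_{IR}$, so the $E$-marginal is undisturbed: $\rho''_E=\sum_z\Tr_{IR}F_z\rho'_{IRE}=\Tr_{IR}\rho'_{IRE}=\rho'_E$. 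Combining gives $S_e(\Lambda,\rho_I)=S(\rho'_E)=S(\rho''_E)$.

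Next I would use the classical-quantum structure $\rho''_{ZE}=\sum_z P_Z(z)\,|z\rangle\langle z|\otimes\rho''_{E|Z=z}$ to expand the mutual information as $I_{\rho''_{ZE}}(Z;E)=S(\rho''_E)-\sum_z P_Z(z)\,S(\rho''_{E|Z=z})$, exactly the Holevo-type decomposition appearing in \eqref{3-8-11}. Since each conditional entropy $S(\rho''_{E|Z=z})\ge 0$, this immediately gives $I_{\rho''_{ZE}}(Z;E)\le S(\rho''_E)=S_e(\Lambda,\rho_I)$, which is the reduced inequality and hence \eqref{3-8-9}.

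Finally the equality discussion drops out for free. Equality holds precisely when $\sum_z P_Z(z)\,S(\rho''_{E|Z=z})=0$, i.e. when $\rho''_{E|Z=z}$ is pure for every $z$ with $P_Z(z)>0$, which is the stated condition. Writing $\rho''_{E|Z=z}=|\psi_{E|Z=z}\rangle\langle\psi_{E|Z=z}|$ in that case, the vanishing of the conditional entropies yields $I_{\rho''_{ZE}}(Z;E)=S(\rho''_E)$, while $S_e(\Lambda,\rho_I)=S(\rho''_E)$ and $\rho''_E=\sum_z P_Z(z)|\psi_{E|Z=z}\rangle\langle\psi_{E|Z=z}|$ together give \eqref{3-11-7}. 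The one step carrying all the content is the identity $S_e(\Lambda,\rho_I)=S(\rho''_E)$: it requires recognizing that purifying $\rho_I$ by $R$ and taking $\rho_E$ pure renders $\rho'_{IRE}$ globally pure, and that the $\hat{H}_I-\hat{H}_R$ measurement touches only $IR$; after that, the trade-off is just subadditivity and bookkeeping.
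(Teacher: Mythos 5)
Your proposal is correct and follows essentially the same route as the paper: both rest on the Holevo-type decomposition $I_{\rho''_{ZE}}(Z;E)=S(\rho''_{E})-\sum_z P_Z(z)S(\rho''_{E|Z=z})$ together with the identity $S_e(\Lambda,\rho_I)=S(\rho''_{E})$, reducing \eqref{3-8-9} to the nonnegativity of the conditional entropies, with the same equality analysis. The only difference is that you spell out the justification of $S(\rho''_{E})=S(\Lambda(|\Phi\rangle\langle\Phi|))$ (global purity of $\rho'_{IRE}$ across the $IR\,|\,E$ cut, plus the fact that the complete measurement $\{F_z\}$ on $IR$ leaves the $E$-marginal untouched), a step the paper asserts implicitly inside \eqref{3-8-11} without proof.
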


\begin{proof}
Since \eqref{3-8-11}, \eqref{3-8-12} and \eqref{3-8-13} imply
\begin{align}
&S_e(\Lambda,\rho_I)
+\Delta {I}_{\rho''_{ZE}}(Z;E) 
=S(P_Z)- I_{\rho''_{ZE}}(Z;E)+S(\rho_{E}'')\nonumber\\
&=S(P_Z)+ \sum_{z} P_Z(z) S(\rho_{E|Z=z}'' )
\ge S(P_Z).
\end{align}
Equality holds if and only if
the state $\rho_{E|Z=z}''$ 
is pure for any value $z$ with non-zero probability $P_Z(z)$.
In this case, we have $\rho_{E}''
=\sum_{z}P_Z(z) 
|\psi_{E|Z=z}\rangle \langle \psi_{E|Z=z}|$,
which implies \eqref{3-11-7}.
\end{proof}

Due to the relation \eqref{3-8-9},
when the imperfectness $\Delta {I}_{\rho''_{ZE}}(Z;E)$
of the correlation is close to zero,
the amount of decoherence $S_e(\Lambda,\rho_I)$ is far from zero.
So, the coherence cannot be kept in this work extraction process.
That is, the unitary $U_I$ cannot be approximated by the actual time evolution $\Lambda$.
On the other hand, when the amount of decoherence
$S_e(\Lambda,\rho_I)$ is close to zero, 
the imperfectness $\Delta {I}_{\rho''_{ZE}}(Z;E)$ 
of the correlation is far from zero.
In this case, the perfect approximation is realized
although the external system has almost no correlation.

{In the current discussion, we deal with a general framework for the ensemble of initial states of the internal system.
In practice, it is quite difficult to realize 
an arbitrary distribution of the initial state.
However, without requiring to realize an arbitrary distribution,
we have a realistic  example to derive a contradiction for the semi-classical model.
That is, in such a desired example,
if the coherence is kept, which is the requirement of the semi-classical model, we cannot detect the amount of extracted work precisely.
To give such an example, it is sufficient to consider an ensemble whose entropy is greater than $\log 2$.
For example, 
a useful internal state (in which, work extraction is possible) occurs with probability $\frac{1}{2}$, 
and a useless internal state (in which, work extraction is impossible) occurs with probability $\frac{1}{2}$, i.e.,
with probability $\frac{1}{2}$, we cannot extract work from the initial state.
In this example, when the amount of decoherence
$S_e(\Lambda,\rho_I)$ is close to zero, 
the imperfectness $\Delta {I}_{\rho''_{ZE}}(Z;E)$ 
of the detection of the amount of extracted work 
is close to $\log 2$, which is far from zero.
Since such an ensemble is possible in the real world, 
this example shows that the semi-classical model is not suitable for the model of a heat engine
because we cannot detect  the amount of extracted work, precisely.}

\subsection{Shift-invariant model}\Label{s5d}
Next, we consider how to realize the case 
when the amount of the decoherence $S_e(\Lambda,\rho_I)$ is close to zero.
The following theorem is the solution to this problem, which will be shown in Appendix \ref{s5d-s}.
\begin{theorem}\Label{th11-23}
We assume the shift-invariant model without an additional external system.
When
\begin{align}
|\psi_{E}\rangle&=\sum_j \sqrt{P_J(j)}|j\rangle \\
P_J(j)&=
\left\{
\begin{array}{ll}
\frac{1}{2m+1} & \hbox{ if } |j| \le m \\
0 & \hbox{ otherwise.}
\end{array}
\right. \\
P_Z(h_E j)& =0  \hbox{ if } |j| \ge l,
\end{align}
we have
\begin{align}
 S_e(\Lambda,\rho_I) 
\le & 
h\big(2\frac{l}{2m+1}- (\frac{l}{2m+1})^2\big)\nonumber \\
&+\big(2\frac{l}{2m+1}- (\frac{l}{2m+1})^2\big)
\log (d_I^2-1).
\end{align}
\end{theorem}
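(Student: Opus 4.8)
The plan is to evaluate the entropy exchange directly as the von Neumann entropy of the final state of the work register, and then to bound that entropy by comparing the register state with the unshifted window $|\psi_E\rangle$. First I would record the reduction coming from the fact that the reference $R$ purifies $\rho_I$ while $U$ acts trivially on $R$: since $U|\Phi\rangle|\psi_E\rangle$ is pure on $IRE$,
\begin{align}
S_e(\Lambda,\rho_I)=S\big((\Lambda\otimes{\rm id}_R)(|\Phi\rangle\langle\Phi|)\big)=S(\sigma^{\rho_I}_E),\qquad
\sigma^{\rho_I}_E=\Tr_I U(\rho_I\otimes|\psi_E\rangle\langle\psi_E|)U^\dagger,
\end{align}
the last equality by complementarity of the two reductions of a pure state. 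The same complementarity gives $\mathrm{rank}\,\sigma^{\rho_I}_E=\mathrm{rank}\,\sigma_{IR}\le d_I\cdot\mathrm{rank}\,\rho_I\le d_I^2$. Using $U=F[U_I]$, i.e.\ $U(|x'\rangle|j\rangle)=\sum_x u_{x,x'}|x\rangle\,V_{E1}^{\,n_{x'}-n_x}|j\rangle$ with $n_x:=h_x/h_E$, I obtain the closed form
\begin{align}
\sigma^{\rho_I}_E=\sum_{x,x',x''}\langle x'|\rho_I|x''\rangle\,u_{x,x'}\overline{u_{x,x''}}\,
V_{E1}^{\,n_{x'}-n_x}|\psi_E\rangle\langle\psi_E|V_{E1}^{-(n_{x''}-n_x)} .
\end{align}

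In the physically relevant regime $[\rho_I,\hat{H}_I]=0$ (Gibbs states and their mixtures, as stressed just before the theorem) only the terms with $n_{x'}=n_{x''}$ survive, so $\sigma^{\rho_I}_E$ collapses to a convex mixture of shifted flat windows whose weights are exactly the work distribution,
\begin{align}
\sigma^{\rho_I}_E=\sum_{|s|\le l-1}P_Z(h_E s)\,|\phi_s\rangle\langle\phi_s|,\qquad |\phi_s\rangle:=V_{E1}^{\,s}|\psi_E\rangle,
\end{align}
where the hypothesis $P_Z(h_E j)=0$ for $|j|\ge l$ truncates the sum to $|s|\le l-1$. Writing $\alpha:=l/(2m+1)$ and $g(s):=\langle\psi_E|V_{E1}^{\,s}|\psi_E\rangle=1-|s|/(2m+1)$, and using $g(s)\ge 1-\alpha$ for $|s|\le l-1$, the crucial overlap with the unshifted window is
\begin{align}
\langle\psi_E|\sigma^{\rho_I}_E|\psi_E\rangle=\sum_{|s|\le l-1}P_Z(h_E s)\,g(s)^2\ge(1-\alpha)^2 .
\end{align}
The convex (probabilistic) structure is essential here: because the $g(s)^2$ are averaged against a probability distribution, there is no accumulation over the $\sim l$ values of $s$, and the bound stays at $(1-\alpha)^2$.

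With $P:=|\psi_E\rangle\langle\psi_E|$, $P^\perp:=\hat1_E-P$ and $q:=1-\langle\psi_E|\sigma^{\rho_I}_E|\psi_E\rangle\le 1-(1-\alpha)^2=2\alpha-\alpha^2$, I would invoke that pinching does not decrease entropy together with the rank-one $P$-block:
\begin{align}
S(\sigma^{\rho_I}_E)\le S\big(P\sigma^{\rho_I}_E P+P^{\perp}\sigma^{\rho_I}_E P^{\perp}\big)=h(q)+q\,S(\omega)\le h(q)+q\log(d_I^2-1),
\end{align}
where $\omega:=P^{\perp}\sigma^{\rho_I}_E P^{\perp}/q$. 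Since $|\psi_E\rangle=|\phi_0\rangle\in\mathrm{supp}\,\sigma^{\rho_I}_E$ when $P_Z(0)>0$, projecting by $P^{\perp}$ drops the support dimension by one, giving $\mathrm{rank}\,\omega\le d_I^2-1$ and hence $S(\omega)\le\log(d_I^2-1)$. Because $t\mapsto h(t)+t\log(d_I^2-1)$ is increasing on $[0,1-d_I^{-2}]$ and $q\le 2\alpha-\alpha^2$ lies in this range for the relevant parameters, the claimed inequality follows.

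The main obstacle is the fully general $\rho_I$ carrying coherence between distinct energy eigenspaces. Then the cross terms $|\phi_{s'}\rangle\langle\phi_{s''}|$ with $s'\neq s''$ survive, $\sigma^{\rho_I}_E$ is no longer a convex mixture, and the overlap becomes the quadratic form $\langle\psi_E|\sigma^{\rho_I}_E|\psi_E\rangle=\Tr\rho_I B^\dagger B$ with $B_{x,x'}=u_{x,x'}g(n_{x'}-n_x)$. Lower bounding its least eigenvalue by $(1-\alpha)^2$ amounts to controlling $\|U_I\circ(\hat1-G)\|$, and decomposing this Hadamard product into energy-transfer sectors $\sum_h P_{h-h_E s}U_I P_h$ reintroduces a spurious factor $\sim l$ from the triangle inequality, so the clean $(1-\alpha)^2$ is lost; I therefore expect the intended setting to be $[\rho_I,\hat{H}_I]=0$, exactly the case emphasized after Lemma \ref{3-15-3L}. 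Two minor points remain to be dispatched within the commuting case: handling $P_Z(0)=0$ (by choosing a window $|\phi_{s_0}\rangle$ in the support as reference, at the cost of a slightly degraded overlap) and verifying $q\le 1-d_I^{-2}$ so that the monotonicity step applies.
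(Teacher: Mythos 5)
Your proposal is correct in the setting the paper actually intends, and it takes a genuinely different route. The paper proves Theorem \ref{th11-23} by passing through the fidelity machinery: in the shift-invariant model the conditional external states are the shifted windows $V^j|\psi_E\rangle$, the equality case of Theorem \ref{3-15-3T} applies, and the closed form \eqref{3-11-b} gives $F_e(\Lambda,U_I,\rho_I)\ge 1-\frac{l}{2m+1}$, after which the quantum Fano inequality \eqref{3-15-12} converts the fidelity bound into the entropy bound. You instead compute the environment state $\sigma^{\rho_I}_E$ explicitly as the $P_Z$-mixture of shifted flat windows (using $S_e=S(\sigma^{\rho_I}_E)$ by complementarity, which the paper itself notes in \eqref{3-8-11}), lower-bound its overlap with $|\psi_E\rangle$ by $(1-\alpha)^2$, and convert overlap into entropy by pinching plus a rank count --- in effect an inline proof of the Fano-type step, bypassing Theorem \ref{3-15-3T} and the Appendix \ref{as1} fidelity lemmas entirely. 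The quantitative core (the window overlap $1-|s|/(2m+1)$, the quantity $2\alpha-\alpha^2$, and $\log(d_I^2-1)$ from an effective dimension $d_I^2$) is the same in both arguments. Your route buys self-containedness and makes the structure of the environment state visible; the paper's route buys the exact identity \eqref{3-11-b}, an equality of independent interest. Your diagnosis that the statement must be read with $[\rho_I,\hat{H}_I]=0$ is accurate and applies equally to the paper: its step ``$|\psi_{E|Z=h_Ej}\rangle=V^j|\psi_E\rangle$'' presupposes the energy-eigenbasis purification, i.e.\ exactly the commuting case emphasized after Lemma \ref{3-15-3L}, and the paper needs the equality case because the inequality \eqref{3-8-10B} alone bounds $-\log F_e$ from below, the wrong direction.

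Two remarks on your flagged loose ends. For $P_Z(0)=0$ you should not switch the reference to some $|\phi_{s_0}\rangle$ in the support: that degrades the overlap to roughly $(1-2\alpha)^2$ and loses the theorem's constant. Keep $|\psi_E\rangle$ as the reference --- your bound $q\le 2\alpha-\alpha^2$ never used $P_Z(0)>0$ --- and recover $\log(d_I^2-1)$ by noting that the support of $P_Z$ lies in the set of nonzero achievable energy differences $h_x-h_{x'}$, whose cardinality is at most $N(N-1)\le d_I^2-1$ (as in Lemma \ref{4-8-1L}); since distinct shifted flat windows are linearly independent, $\mathrm{rank}\,\sigma^{\rho_I}_E\le d_I^2-1$ directly, with no rank-drop trick needed. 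As for the monotonicity check $2\alpha-\alpha^2\le 1-d_I^{-2}$: the paper's proof makes the same silent substitution of $F_e^2\ge(1-\alpha)^2$ into \eqref{3-15-12}, so this caveat is shared with the published argument rather than a defect specific to yours.
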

So, when $m$ is sufficiently large,
the quality of approximation is very small under 
the shift-invariant model without an additional external system.
That is, a large superposition enables us to keep coherence.
This consequence is qualitatively consistent with the conclusion in \cite[Proposition 2 of Supplement]{catalyst}, 
which assess the quality of approximation with the trace norm
unlike Theorem \ref{th11-23}.

\subsection{Trade-off under CP-work extraction}\Label{s5e}
In Subsection \ref{s5c}, we discussed the trade-off relation between the 
imperfectness of correlation and the quality of approximation under a FQ-work extraction.
In this subsection, we discuss the trade-off relation under a CP-work extraction 
${\cal G}:=\{{\cal E}_j,w_j\}_j$ on the internal system ${\cal H}_I$ with Hamiltonian $\hat{H}_I$.
To discuss the trade-off relation, in the internal system ${\cal H}_I$,
we consider the internal unitary $U_I$ to be approximated and
the initial mixted state $\rho_I$, which is assumed to commute with $\hat{H}_I$.
To quantify the approximation, we employ the measure
$S_e(\sum_j {\cal E}_j,\rho_I)$.
To evaluate the imperfectness of correlation, 
we consider the purification $|\Phi\rangle$ of $\rho_I$,
and introduce the joint distribution $P_{ZW}$ as
\begin{align}
P_{ZW}(z,w):=\sum_{j:w_j=w} \Tr_{R I} {\cal E}_j(|\Phi\rangle \langle \Phi|) F_z,
\end{align}
where the projection $F_z$ is defined in the same way as in Subsection \ref{s5b}.
Then, we employ the measure
$\Delta I_{P_{ZW}}(Z;W) :=S(P_Z)-I_{P_{ZW}}(Z;W)$.

Since the measurement outcome precisely reflects decrease in energy of the internal system
in a level-4 energy CP-work extraction $\{{\cal E}_j,w_j\}_j$, 
we have the following lemma.

\begin{lemma}\Label{L5-11}
For a level-4 energy CP-work extraction $\{{\cal E}_j,w_j\}_j$, 
the relation $\Delta I_{P_{ZE}}(Z;E)=0$ holds for any internal state $\rho_I$ that commutes with $\hat{H}_I$.
\end{lemma}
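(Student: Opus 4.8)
The plan is to show that, for a level-4 CP-work extraction acting on a state $\rho_I$ that commutes with $\hat{H}_I$, the energy difference $Z$ and the extracted work $W$ become perfectly correlated, so that $I_{P_{ZW}}(Z;W)=S(P_Z)$ and hence $\Delta I_{P_{ZW}}(Z;W)=0$. The conceptual reason is that the reference system $R$ stores the initial energy of $I$, while the level-4 condition forces the final energy of $I$ to equal the initial energy minus $w_j$ on outcome $j$; since $R$ is never touched, the observable $\hat{H}_I-\hat{H}_R$ reads off precisely $-w_j$ with certainty.

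First I would use the commutation hypothesis to diagonalize $\rho_I=\sum_x p_x \Pi_x$ in the energy eigenbasis $\{|x\rangle\}$ of $\hat{H}_I$, and correspondingly write the purification as $|\Phi\rangle=\sum_x \sqrt{p_x}\,|x\rangle_I\otimes|e_x\rangle_R$, choosing the reference Hamiltonian $\hat{H}_R=\sum_x h_x\,|e_x\rangle\langle e_x|$ so that $|\Phi\rangle$ lies in the kernel of $\hat{H}_I-\hat{H}_R$. This alignment of the Schmidt basis with the energy eigenbasis is exactly what the commutation assumption buys, and it is what makes the reference register a faithful record of the initial $I$-energy.

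Next I would invoke the Kraus structure established in Lemma \ref{3-14-1L}: every Kraus operator of ${\cal E}_j$ can be written as $A_{j,l}=\sum_h A_{j,l,h}$ with $A_{j,l,h}$ mapping $\im P_h$ into $\im P_{h-w_j}$, so that $A_{j,l}|x\rangle\in\im P_{h_x-w_j}$. Substituting the purification into $P_{ZW}(z,w)=\sum_{j:w_j=w}\Tr_{RI}{\cal E}_j(|\Phi\rangle\langle\Phi|)F_z$ and expanding $F_z=\sum_{h-h'=z}P_h\otimes P_{h'}^R$, I would compute the trace term by term. The main technical point is that the off-diagonal contributions $|x\rangle\langle x'|\otimes|e_x\rangle\langle e_{x'}|$ with $x\neq x'$ drop out: because $F_z$ is diagonal in the energy basis of $R$ and the $|e_x\rangle$ are distinct energy eigenstates, the partial trace over $R$ produces $\langle e_{x'}|P_{h'}^R|e_x\rangle=\delta_{h',h_x}\delta_{x,x'}$, killing every coherence. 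Only the diagonal $x=x'$ terms survive, and for these the $I$-factor lies in $\im P_{h_x-w_j}$ while the $R$-factor sits at energy $h_x$, so $\hat{H}_I-\hat{H}_R$ equals $(h_x-w_j)-h_x=-w_j$. Consequently $F_z$ selects $z=-w_j=-w$, yielding $P_{ZW}(z,w)=\delta_{z,-w}\sum_{j:w_j=w}\sum_x p_x\,\Tr{\cal E}_j(\Pi_x)$, whose total mass is $1$ since $\sum_j{\cal E}_j$ is CPTP.

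With $P_{ZW}$ supported entirely on the line $z=-w$, the variable $Z$ is a deterministic function of $W$, so the conditional entropy of $Z$ given $W$ is zero and $I_{P_{ZW}}(Z;W)=S(P_Z)$; the definition $\Delta I_{P_{ZW}}(Z;W):=S(P_Z)-I_{P_{ZW}}(Z;W)$ then gives $0$. I expect the only real obstacle to be the vanishing of the off-diagonal coherences against $F_z$; everything else is bookkeeping. Equivalently, one can bypass the Kraus computation by appealing directly to the characterization \eqref{3-5-1}, $P_{K|X}(k|x)=\delta_{k,0}$: since $R$ records the initial label $x$ and the final energy of $I$ equals $h_x-w_j$ with certainty on outcome $j$, the pair $(Z,W)$ concentrates on $z=-w$, which is the same conclusion.
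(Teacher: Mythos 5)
Your proof is correct, and it is worth noting that the paper itself states Lemma \ref{L5-11} \emph{without proof}: the only justification offered is the one-line remark that ``the measurement outcome precisely reflects decrease in energy of the internal system'' in a level-4 extraction. Your computation is a complete formalization of exactly that remark. The key steps are all sound: the Kraus structure $A_{j,l}=\sum_h A_{j,l,h}$ with $A_{j,l,h}:\im P_h \to \im P_{h-w_j}$ (extracted from the proof of Lemma \ref{3-14-1L}, and valid since \eqref{CP2} forces $A_{j,l}|x\rangle\in\im P_{h_x-w_j}$ for every basis eigenstate, whence by linearity on each eigenspace); the energy-aligned purification, which is precisely what $[\rho_I,\hat{H}_I]=0$ buys; and the vanishing of the cross terms $|x\rangle\langle x'|\otimes|e_x\rangle\langle e_{x'}|$ against $F_z$, since $P^R_{h'}$ is diagonal in $\{|e_x\rangle\}$. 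The conclusion that $P_{ZW}$ is supported on a single line, hence $S(Z|W)=0$, $I_{P_{ZW}}(Z;W)=S(P_Z)$ and $\Delta I_{P_{ZW}}(Z;W)=0$, follows. The closest thing to a proof in the paper is the appendix lemma for FQ-work extractions, which shows the conditional external states $\{\rho''_{E|Z=z}\}_z$ are distinguishable via $\Tr P_{w_0+z'}\,\rho''_{E|Z=z}=\delta_{z,z'}$; your argument is the direct CP-level version of the same mechanism and has the advantage of not passing through an FQ realization or an external Hamiltonian at all. Two cosmetic points: (i) you obtain $z=-w$ under the literal observable $\hat{H}_I-\hat{H}_R$ with the mirrored reference Hamiltonian, whereas the paper's Lemma \ref{3-15-3L} adopts the convention that $Z$ is the energy \emph{lost}, i.e.\ $z=+w$ under level-4 by \eqref{3-5-1}; this sign ambiguity originates in the paper, and since mutual information is invariant under the relabeling $Z\mapsto -Z$ it does not affect the result; (ii) the notation $\Delta I_{P_{ZE}}(Z;E)$ in the lemma statement is a slip for the quantity $\Delta I_{P_{ZW}}(Z;W)$ defined in that subsection, which is the quantity you correctly compute.
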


This lemmas shows that a level-4 energy CP-work extraction satisfies our requirement explained in introduction.
As a corollary of Theorem \ref{3-15-2T}, we have the following by virtually considering the indirect measurement.
\begin{corollary} \Label{4-10-1C}
Given a CP-work extraction $\{{\cal E}_j,w_j\}_j$.
The amount of decoherence $S_e(\sum_j {\cal E}_j,\rho_I)$
and the amount of imperfectness of correlation $\Delta {I}_{P_{ZE}}(Z;E) $
satisfy the following trade-off relation:
\begin{align}
S_e(\sum_j {\cal E}_j,\rho_I)
+\Delta I_{P_{ZW}}(Z;W) 
&\ge S(P_Z) \Label{3-8-9x}.
\end{align}
\end{corollary}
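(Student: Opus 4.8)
The plan is to reduce the statement to Theorem \ref{3-15-2T} by realizing the abstract CP-work extraction as a fully quantum one and then applying the data-processing inequality to the external register. First I would invoke the indirect-measurement realization quoted after Definition \ref{FQ-normal} (Ozawa \cite{Ozawa1984}, \cite[Theorem 5.7]{HIKKO}): there exists an FQ-work extraction ${\cal F}=({\cal H}_{E},\hat{H}_{E},U,\rho_{E})$ with a \emph{pure} external state $\rho_{E}=|\phi_{E}\rangle\langle\phi_{E}|$ such that $CP({\cal F})=\{{\cal E}_{j},w_{j}\}_{j}$. For this realization the overall channel coincides with the total CP map, $\sum_{j}{\cal E}_{j}=\Lambda$ with $\Lambda(\rho_I)=\Tr_{E}U(\rho_I\otimes\rho_{E})U^\dagger$, so the two decoherence measures agree, $S_e(\sum_{j}{\cal E}_{j},\rho_I)=S_e(\Lambda,\rho_I)$, and the purification $|\Phi\rangle$ of $\rho_I$ used in Subsections \ref{s5b} and \ref{s5e} is the same (with $R$ disjoint from $E$).

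Next I would check that the two energy-loss distributions coincide. Writing $\rho'_{IRE}=U|\Phi,\phi_{E}\rangle\langle\Phi,\phi_{E}|U^\dagger$ and measuring $\hat{H}_I-\hat{H}_R=\sum_{z}z F_z$ on $IR$, the marginal $P_Z(z)=\Tr F_z\,\Lambda(|\Phi\rangle\langle\Phi|)$ of Subsection \ref{s5b} equals $\sum_{w}P_{ZW}(z,w)$, because ${\cal E}_j(\rho)=\Tr_{E}U(\rho\otimes\rho_{E})U^\dagger(\hat{1}_I\otimes P_{E,j})$ and $\sum_j P_{E,j}=\hat{1}_E$. Hence the quantity $S(P_Z)$ appearing on the right-hand sides of \eqref{3-8-9} and \eqref{3-8-9x} is identical, and no discrepancy can enter through the target of the bound.

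The decisive step is to express $W$ as a post-processing of the quantum external system $E$. Using the same realization identity one obtains $P_{ZW}(z,w)=\sum_{j:w_j=w}\Tr[(F_z\otimes P_{E,j})\rho'_{IRE}]$; that is, $W$ is produced from the state $\rho''_{ZE}$ of Subsection \ref{s5b} by first performing the projective measurement $\{P_{E,j}\}$ on $E$ and then applying the deterministic relabeling $j\mapsto w_j$. Since both operations act on the $E$ register alone, the data-processing inequality gives $I_{P_{ZW}}(Z;W)\le I_{\rho''_{ZE}}(Z;E)$, hence $\Delta I_{P_{ZW}}(Z;W)\ge\Delta I_{\rho''_{ZE}}(Z;E)$. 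Combining this with Theorem \ref{3-15-2T} applied to ${\cal F}$, namely $S_e(\Lambda,\rho_I)+\Delta I_{\rho''_{ZE}}(Z;E)\ge S(P_Z)$, yields \eqref{3-8-9x}. I expect the only delicate point to be verifying that the coarse-grained work variable $W$ genuinely factors through a measurement on $E$, so that the data-processing inequality runs in the correct direction; the remainder is bookkeeping confirming that the indirect-measurement realization preserves $\Lambda$, $|\Phi\rangle$, and $P_Z$.
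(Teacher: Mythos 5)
Your proposal is correct and follows essentially the same route as the paper's proof: realize the CP-work extraction via a Stinespring/indirect-measurement dilation with a pure external state (noting that Theorem \ref{3-15-2T} requires no energy conservation law), apply Theorem \ref{3-15-2T} to that realization, and then use the information-processing inequality for the measurement $\{P_{E,j}\}$ followed by the relabeling $j\mapsto w_j$ to obtain $I_{P_{ZW}}(Z;W)\le I_{\rho''_{ZE}}(Z;E)$. Your additional bookkeeping (verifying that $\Lambda$, $|\Phi\rangle$, and $P_Z$ are preserved under the dilation) is implicit in the paper's argument and is verified correctly.
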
 

\begin{proofof}{Corollary \ref{4-10-1C}}
Notice that Theorem \ref{3-15-2T} does not assume any energy conservation law.
Then, we take a Stinespring representation 
$({\cal H}_E, U, \rho_E)$ with a pure state $\rho_E$ of 
$\{{\cal E}_{j},w_{j}\}_{j\in {\cal J}}$ as follows.
\begin{align}
{\cal E}_{j}(\rho_I)=
\Tr_{E} U_{IE}
(\rho_I \otimes \rho_{E}) U_{IE}^\dagger 
(\hat{1}_{I} \otimes E_j),
\end{align}
where $\{E_j\}$ is a projection valued measure on ${\cal H}_E$.
Notice that $({\cal H}_E, U, \rho_E)$ is an FQ-work extraction.
Here, we do not care about whether the FQ-work extraction satisfies any energy conservation law.
Then, we apply Theorem \ref{3-15-2T}.
Since the information processing inequality for the relative entropy yields
$I_{\rho''_{ZE}}(Z;E) \ge I_{P_{ZE}}(Z;E)$,
the relation \eqref{3-8-9} implies
\eqref{3-8-9x}.
\end{proofof}

Further, we have the following corollary.
\begin{corollary}
When $\Delta I_{P_{ZE}}(Z;E)=0$, we have
\begin{align}
S_e\Big(\sum_j {\cal E}_j,\rho_I\Big)
&\ge S(P_Z) \Label{3-8-9z}.
\end{align}
\end{corollary}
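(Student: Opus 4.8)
The plan is to obtain the statement as an immediate specialization of Corollary \ref{4-10-1C}, which has already been established, so no new machinery is required. First I would recall that Corollary \ref{4-10-1C} supplies, for an arbitrary CP-work extraction $\{{\cal E}_j,w_j\}_j$ and any internal state $\rho_I$ commuting with $\hat{H}_I$, the trade-off relation
\begin{align}
S_e\Big(\sum_j {\cal E}_j,\rho_I\Big) + \Delta I_{P_{ZE}}(Z;E) \ge S(P_Z),
\end{align}
which holds with no energy-conservation hypothesis (it descends from Theorem \ref{3-15-2T} via the Stinespring dilation used in the proof of Corollary \ref{4-10-1C}). The two terms on the left are, respectively, the entropy-exchange measure of decoherence and the imperfectness of the correlation between energy loss and the external meter.

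The second step is simply to impose the hypothesis $\Delta I_{P_{ZE}}(Z;E)=0$. Substituting this into the trade-off relation annihilates the second term on the left-hand side, leaving $S_e(\sum_j {\cal E}_j,\rho_I) \ge S(P_Z)$, which is exactly \eqref{3-8-9z}. Since the derivation is a one-line specialization, there is no genuine technical obstacle; the only point requiring care is the notational consistency between $\Delta I_{P_{ZE}}(Z;E)$ and $\Delta I_{P_{ZW}}(Z;W)$, which denote the same imperfectness-of-correlation quantity in this section, the work outcome $W$ being read off the external meter $E$.

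It is worth recording the physical reading that motivates stating this as a separate corollary. By Lemma \ref{L5-11}, the hypothesis $\Delta I=0$ is met automatically by any level-4 CP-work extraction, so in that regime the corollary asserts that the entropy exchange $S_e$ — the quantitative measure of departure from the ideal internal unitary $U_I$ — is bounded below by the full work-distribution entropy $S(P_Z)$. In other words, perfect detectability of the extracted work forces a decoherence of at least $S(P_Z)$, so the semi-classical ideal $\Lambda \approx U_I$ cannot hold whenever $S(P_Z)$ is appreciable, which is precisely the conclusion the preceding discussion is driving at.
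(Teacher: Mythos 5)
Your proposal is correct and coincides with the paper's (implicit) argument: the corollary is nothing more than the one-line specialization of Corollary \ref{4-10-1C} to the hypothesis $\Delta I_{P_{ZE}}(Z;E)=0$, which is exactly how you derive it. Your remark that $\Delta I_{P_{ZE}}(Z;E)$ and $\Delta I_{P_{ZW}}(Z;W)$ denote the same imperfectness-of-correlation quantity correctly resolves the paper's notational inconsistency, and your physical reading via Lemma \ref{L5-11} matches the paper's stated conclusion that a level-4 CP-work extraction is far from any internal unitary.
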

The combination of this corollary and Lemma \ref{L5-11}
says that the dynamics of a level-4 CP-work extraction is far from any internal unitary.

\section{Remark: other consequences of our formulation and relation to other formulation}
Finally, we discuss the relation to other formulations.
While we have introduced four kinds of energy conservation laws for measurement-based work extraction,
a level-1 CP-work extraction can be mathematically regarded as an average work extraction
\cite{Car2} as follows.
In \cite[Section II-A-2]{Optimal}, we formulated the average work extraction 
by using the TP-CP map on the internal system in an implicit treatment of the external work storage. 
For a given level-1 CP-work extraction $\{{\cal E}_j,w_j \}_{j\in {\cal J}}$, 
we define a TP-CP map $\sum_j {\cal E}_j$, which gives an average work extraction with implicit treatment of the external work storage. 
Hence, any level-1 CP-work extraction can be treated as an average work extraction.
Thus, any model in our paper is mathematically a part of average work extraction
with implicit treatment of the external work storage.
Since the optimal efficiency of an average work extraction 
asymptotically equals the Carnot efficiency \cite{Popescu2014,Optimal,Popescu2015},
the efficiency of any our model does not exceed the Carnot efficiency.

Now, we discuss the detailed relation with existing models.
While our measurement-based model for work extraction
treats energy transfer from a collection of microscopic systems to
a macroscopic system,
our model can treat energy transfer in the microscopic scale as follows.
As shown in Lemma \ref{3-13-4L},
a level-4 CP-work extraction can be extended to
an energy-conserving and shift-invariant FQ-work extraction, which is the same as in \cite{catalyst}.
Conversely, as shown in Lemma \ref{4-2-1L}, 
an energy-conserving FQ-work extraction can be converted to a level-2 CP-work extraction.
In particular, 
the combination of Theorem \ref{3-14-8L} and Lemma \ref{3-13-4L} show that
a shift invariant model up to the second order with an initial energy-eigenstate $\rho_E$ on the external system is equivalent to a level-4 CP-work extraction.

Another paper by the same authors \cite{Optimal} derives
the higher order expansion of the optimal efficiency
under average work extraction with implicit treatment of the external work storage
while its first order equals the Carnot efficiency.
Furthermore, it was shown that the optimal efficiency can be attained by a shift invariant model
with an initial energy-eigenstate $\rho_E$ on the external system 
up to the second order.
In summary, even in any of our four models,
the optimal efficiency asymptotically has the same value up to higher order,
whereas the first order coefficient is the Carnot efficiency.
This fact shows the adequacy of our models.

As another problem, one might consider a serious increase of the entropy due to the measurement.
However, the increase is not so serious as follows.
In section 4 of Ref.\cite{Optimal}, it has been shown that we can give a concrete protocol which extract energy with negligibly small increase of entropy, while the protocol attains the optimal efficiency.
In section 4 of Ref.\cite{Optimal}, we translate the implicit formulation of an average work extraction into the explicit formulation with an external work storage by using the translation between the direct measurement and the indirect measurement.
Then, we have calculated the ratio of the energy gain to the entropy gain in the external work storage, and have shown that the entropy-energy ratio of the work storage goes to 0 in the macroscopic limit.

\section{Conclusion}
In the present article, to expand the area of quantum thermodynamics, 
we have discussed quantum heat engines as work extraction processes in terms of quantum measurement.
That is, we have formulated work extraction as a CP-instrument
when discernible energy is transfered from a collection of quantum systems to a macroscopic system.
Our formulation is so general as to include any work extraction that has an equipment to assess the amount of the extracted work
when we extract energy to a macroscopic system.
We also have clarified the relationships between the fully quantum work extraction and the CP-work extraction in our context.

Moreover, to clarify the problem in the semi-classical scenario,
we have given a trade-off relation for the coherence and information acquisition for the amount of extracted work.
The trade-off relation means that we have to demolish the coherence of the thermodynamic system in order to know the amount of the extracted energy.
Further, we have pointed out that 
our shift-invariant unitary is a natural quantum extension of the classical standard formulation rather than the semi-classical scenario in the end of Section \ref{s4b}.
In summary, these results imply the incompatibility of the coherence of the internal system
and information acquisition for the amount of extracted work.
That is, when the time evolution of the internal system is close to unitary, we cannot know the amount of the extracted work. 

In Appendix, we also show the reduction to the classical model.
When the initial state of the internal system commutes with the 
Hamiltonian,
any CP-work extraction can be simulated by a classical work extraction
in the non-asymptotic sense,
whose detailed definition is shown in Appendix.
This conversion is helpful for analyzing the performance of the work extraction process.
This property will be employed for derivation of optimal efficiency 
with the asymptotic setting in another paper \cite{Optimal}.

A reader might raise the following question for our formulation as follows.
Although this paper requires the distinguishability of 
the amount of extracted work,
it is sufficient to recover this property only within the thermodynamic limit
because the identification of the amount of extracted work
is a classical task.
Hence, we do not need to employ measurement to formulate the quantum heat engine.
However, this idea is not correct due to the following two reasons when 
we extract energy from a collection of microscopic system to a macroscopic system.

Firstly, our trade off relation between the coherence and the distinguishability holds independently of the size of the system.
So, even though we take the thermodynamic limit,
we cannot resolve this trade-off.
So, to keep the distinguishability of 
the amount of extracted work even with the thermodynamic limit,
we need to give up the description by the internal unitary.
Second, to avoid the distinguishability,
we can employ a scenario that the extracted work is autonomously stored in a quantum storage \cite{Car2,Popescu2014,oneshot2,Horodecki,oneshot1,oneshot3,Egloff,Brandao,Max2,Malabarba}.
(The word ``autonomous" is used in Ref.\cite{Max2,Malabarba}, for example.)
This scenario works well when we extract energy from a microscopic system to another microscopic system.
However, when we extract energy from a collection of microscopic system to a macroscopic system,
even in this scenario, we ultimately consume the work stored in the quantum storage.
In the consuming stage, we have to consider the distinguishability of the 
amount of work extracted from the quantum storage.
Therefore, we cannot avoid to distinguish the amount of extracted work.
That is, the solution depends on the situation, 
and the measurement is inevitably essential for a proper formulation of quantum heat engine in our situation.

\section*{Acknowledgments}
The authors are grateful to Professor Mio Murao, Professor Schin-ichi Sasa,
and Dr. Paul Skrzypczyk for helpful comments.
MH is thankful to Mr Kosuke Ito for explaining the derivation \eqref{10-24-3}. 
The authors are grateful to the referees.
HT was partially supported by the Grants-in-Aid for Japan Society for Promotion of Science (JSPS) Fellows (Grant No. 24.8116).
MH is partially supported by a MEXT Grant-in-Aid for Scientific Research (A) No. 23246071
and the National Institute of Information and Communication Technology (NICT), Japan.
The Centre for Quantum Technologies is funded by the
Singapore Ministry of Education and the National Research Foundation
as part of the Research Centres of Excellence programme. 

\appendix

\section{Organization of Appendix}
Before starting Appendix, we briefly explain the organization of Appendix.
In Appendix \ref{s5-s}, we give a similar discussion to Section \ref{s5} of the main body
based on the fidelity instead of the entropic quantity.
Although  the fidelity requires more complicated discussions, 
it brings tighter evaluation for the coherence of the dynamics.
Further, using the discussion based on the fidelity, we give a proof of 
Theorem \ref{th11-23} of the main body.
Since this discussion needs several technical lemmas, 
they are shown in Appendix \ref{as1}.

In Appendix \ref{s4-3}, we extend the discussion for shift-invariant model 
in Section \ref{s4b}
to the case with non-lattice Hamiltonian. 
In Appendix \ref{s6}, 
we discuss the relation between classical work extraction
and quantum work extraction.
Indeed, the model of quantum work extraction is more complicated than 
that of classical work extraction.
This difficulty is mainly caused by the effect of coherence.
Hence, 
recently some people \cite{ULK,LKJR} arose how the classical model is close to the quantum model.
Since this problem is not directly related to the main issue of this paper nevertheless its importance,
Appendix \ref{s6} of the supplement gives the answer to this question in the non-asymptotic setting.

\section{Approximation to internal unitary}\Label{s5-s}
\subsection{Approximation and coherence}\Label{s5a-s}
In this section, 
as another measure of coherence,
we focus on the entanglement fidelity:
\begin{align}
F_e(\Lambda, U_I, \rho_I )^2:=
\langle \Phi | U_I^\dagger 
\Lambda (|\Phi \rangle \langle \Phi |)
U_I|\Phi \rangle.
\end{align}
When the initial state $|\psi_a\rangle$ is generated with the probability $P_A(a)$,
namely when $\rho_{I}=\sum_{a}P_{A}(a)|\psi_{I,a}\rangle\langle\psi_{I,a}|$ holds,
the entanglement fidelity $F_e(\Lambda, U_I, \rho_I )$
characterizes any average fidelity as
\begin{align}
F_e(\Lambda, U_I, \rho_I )^2 
\le 
\sum_a P_{A}(a) \langle \psi_a| U_I^\dagger \Lambda(|\psi_a\rangle \langle \psi_a|)  U_I |\psi_a\rangle .
\end{align}
Since this value is zero in the ideal case,
this value can be regarded as the amount of the disturbance by the time-evolution $\Lambda$.
This quantity satisfies the quantum Fano inequality \cite{Schumacher},\cite[(8.51)]{23}
\begin{align}
&S_e(\Lambda,\rho_I)
= S_e(\Lambda_{U_I^\dagger}\circ \Lambda,\rho_I)\nonumber\\
&\le
h( F_e(\Lambda, U_I, \rho_I )^2 )
+
(1-F_e(\Lambda, U_I, \rho_I )^2 )
\log (d_I^2-1),\Label{3-15-12}
\end{align}
where $\Lambda_{U_I^\dagger}(\rho):= U_I^\dagger \rho U_I$.
Since this value is also zero in the ideal case,
we consider that this value is another measure of the disturbance by the time-evolution $\Lambda$.

\subsection{Correlation with external system}\Label{s5b-s}
Next, we discuss the correlation with external system by using the fidelity.
Now, we notice another expression of $I_{\rho''_{ZE}}(Z;E)$ as
\begin{align}
I_{\rho''_{ZE}}(Z;E) 
=&
\min_{\sigma_{E}}
D(\rho''_{ZE} \|\rho_{Z}'' \otimes \sigma_{E} ).
\end{align}
Following this expression, we consider the fidelity-type mutual information as follows
\begin{align}
I_{F,\rho''_{ZE}}(Z;E) 
:=&
-\log \max_{\sigma_{E}}
F(\rho''_{ZE} ,\rho_{Z}'' \otimes \sigma_{E} ).
\end{align}
We can show the following theorem.
\begin{theorem}\Label{3-16-2T}
The relations
\begin{align}
{I}_{F,\rho''_{ZE}}(Z;E) 
&=
-\log \sum_{z,z}P_Z(z)P_Z(z') F(\rho_{E|Z=z}'',\rho_{E|Z=z'}'')\nonumber\\
&\le S_2(P_Z)\Label{3-11-2eq}
\end{align}
hold, where $S_2(P_Z):= - \log \sum_z P_Z(z)^2$.
The equality holds when 
the states $\{ \rho_{E|Z=z}''\}_z$
are distinguishable.
\end{theorem}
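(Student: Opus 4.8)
The plan is to strip off the classical register $Z$, reduce the whole statement to a single optimization over $\sigma_E$, and evaluate that optimization with Uhlmann's theorem. Both $\rho''_{ZE}=\sum_z P_Z(z)\,|z\rangle\langle z|_Z\otimes\rho_{E|Z=z}''$ and $\rho_Z''\otimes\sigma_E=\sum_z P_Z(z)\,|z\rangle\langle z|_Z\otimes\sigma_E$ are block diagonal in the orthonormal family $\{|z\rangle_Z\}$, so $\sqrt{\rho''_{ZE}}\,(\rho_Z''\otimes\sigma_E)\,\sqrt{\rho''_{ZE}}=\sum_z P_Z(z)^2\,|z\rangle\langle z|_Z\otimes\sqrt{\rho_{E|Z=z}''}\,\sigma_E\,\sqrt{\rho_{E|Z=z}''}$, and taking the operator square root and the trace block by block gives, with the (root) fidelity $F(\rho,\sigma)=\Tr\sqrt{\sqrt{\rho}\,\sigma\sqrt{\rho}}$,
\begin{align}
F(\rho''_{ZE},\rho_Z''\otimes\sigma_E)=\sum_z P_Z(z)\,F(\rho_{E|Z=z}'',\sigma_E).
\end{align}
Since $I_{F,\rho''_{ZE}}(Z;E)$ is $-\log$ of the maximal such fidelity, the theorem reduces to the optimization identity $\big(\max_{\sigma_E}F(\rho''_{ZE},\rho_Z''\otimes\sigma_E)\big)^2=\sum_{z,z'}P_Z(z)P_Z(z')F(\rho_{E|Z=z}'',\rho_{E|Z=z'}'')$, whose logarithm is the asserted first equality.

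For the optimization I would use the orthonormal purifications $|\Psi_z\rangle:=P_Z(z)^{-1/2}(F_z\otimes\hat{1}_E)\,U|\Phi,\phi_E\rangle$ of the conditional states. Because the $F_z$ are orthogonal projectors, $\langle\Psi_z|\Psi_{z'}\rangle=\delta_{z,z'}$ and $\rho_{E|Z=z}''=\Tr_{IR}|\Psi_z\rangle\langle\Psi_z|$, with $\cH_{IR}$ the purifying system and $\cH_E$ the kept system. For the upper bound, fix any $\sigma_E$ and a purification $|\xi\rangle$ of it on $\cH_{IR}\otimes\cH_E$. By Uhlmann's theorem there are unitaries $V_z$ on $\cH_{IR}$ with $F(\rho_{E|Z=z}'',\sigma_E)=\langle\Psi_z|(V_z\otimes\hat{1}_E)|\xi\rangle$, the phase being absorbed into $V_z$ so that each term is real and nonnegative. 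Then $\sum_z P_Z(z)F(\rho_{E|Z=z}'',\sigma_E)=\langle\eta|\xi\rangle$ with $|\eta\rangle:=\sum_z P_Z(z)(V_z^\dagger\otimes\hat{1}_E)|\Psi_z\rangle$, and Cauchy--Schwarz together with a second application of Uhlmann to each cross term yields
\begin{align}
\Big(\sum_z P_Z(z)F(\rho_{E|Z=z}'',\sigma_E)\Big)^2\le\|\eta\|^2\le\sum_{z,z'}P_Z(z)P_Z(z')F(\rho_{E|Z=z}'',\rho_{E|Z=z'}''),
\end{align}
where the last step uses $|\langle\Psi_z|(V_zV_{z'}^\dagger\otimes\hat{1}_E)|\Psi_{z'}\rangle|\le F(\rho_{E|Z=z}'',\rho_{E|Z=z'}'')$ and $P_Z(z)P_Z(z')\ge0$.

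For the matching lower bound I would exhibit the optimal $\sigma_E^\ast$: choose $\{V_z\}$ maximizing $\big\|\sum_z P_Z(z)(V_z^\dagger\otimes\hat{1}_E)|\Psi_z\rangle\big\|$, set $|\xi^\ast\rangle$ proportional to that vector, and take $\sigma_E^\ast=\Tr_{IR}|\xi^\ast\rangle\langle\xi^\ast|$; then $F(\rho_{E|Z=z}'',\sigma_E^\ast)\ge|\langle\Psi_z|\xi^\ast\rangle|$ returns the bound, provided the joint maximum over $\{V_z\}$ equals $\sum_{z,z'}P_Z(z)P_Z(z')F(\rho_{E|Z=z}'',\rho_{E|Z=z'}'')$. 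This simultaneous Uhlmann alignment of all pairs with consistent phases is the delicate point, and I expect it to be the real work; it is precisely what the technical lemmas of Appendix \ref{as1} are designed to supply. Combining the two directions gives the optimization identity and hence the first equality.

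The bound $I_{F,\rho''_{ZE}}(Z;E)\le S_2(P_Z)$ is then immediate from the first equality. Since $0\le F(\rho_{E|Z=z}'',\rho_{E|Z=z'}'')\le1$ and $F(\rho_{E|Z=z}'',\rho_{E|Z=z}'')=1$, dropping the nonnegative off-diagonal terms gives $\sum_{z,z'}P_Z(z)P_Z(z')F(\rho_{E|Z=z}'',\rho_{E|Z=z'}'')\ge\sum_z P_Z(z)^2$, so $-\log$ of the left-hand side is at most $-\log\sum_z P_Z(z)^2=S_2(P_Z)$. Equality forces every off-diagonal fidelity with $P_Z(z),P_Z(z')>0$ to vanish, i.e. the conditional states $\{\rho_{E|Z=z}''\}_z$ are mutually distinguishable, which is exactly the stated equality condition.
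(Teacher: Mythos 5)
Your block-diagonal computation $F(\rho''_{ZE},\rho_Z''\otimes\sigma_E)=\sum_z P_Z(z)\,F(\rho''_{E|Z=z},\sigma_E)$ is correct, your Uhlmann-plus-Cauchy--Schwarz argument rigorously establishes the direction $\big(\max_{\sigma_E}\sum_z P_Z(z)F(\rho''_{E|Z=z},\sigma_E)\big)^2\le\sum_{z,z'}P_Z(z)P_Z(z')F(\rho''_{E|Z=z},\rho''_{E|Z=z'})$, and your derivation of the $S_2(P_Z)$ bound and the distinguishability condition from the first equality is fine (the residual factor of two --- you prove a statement about $(\max_\sigma F)^2$ while the theorem takes $-\log$ of the double sum --- is inherited from the paper's own conflation of $F$ and $F^2$; compare the stray square in the proof of Lemma \ref{3-16-1L}). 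The genuine gap is the one you flag yourself: the attainability direction. Deferring it to ``the technical lemmas of Appendix \ref{as1}'' is circular, since the paper proves Theorem \ref{3-16-2T} precisely by that machinery (Lemma \ref{3-16-7L} via Lemma \ref{3-16-1L}); a blind proof must itself establish $\max_{\{V_z\}}\big\|\sum_z P_Z(z)(V_z^\dagger\otimes\hat{1}_E)|\Psi_z\rangle\big\|^2=\sum_{z,z'}P_Z(z)P_Z(z')F(\rho''_{E|Z=z},\rho''_{E|Z=z'})$, and your proposal contains no construction for it.

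Moreover, your suspicion that this is ``the real work'' is more than justified: the step is not merely delicate but fails in general. Attaining every cross term requires $\langle\Psi_z|(V_zV_{z'}^\dagger\otimes\hat{1}_E)|\Psi_{z'}\rangle=F(\rho''_{E|Z=z},\rho''_{E|Z=z'})$ for all pairs simultaneously, with alignment unitaries constrained to the coboundary form $V_zV_{z'}^\dagger$; a nontrivial Bargmann invariant obstructs this for three or more conditional states. Concretely, take $P_Z$ uniform on three values with conditional states $|0\rangle$, $(|0\rangle+|1\rangle)/\sqrt2$, $(|0\rangle+i|1\rangle)/\sqrt2$ and $F(\rho,\sigma)=\Tr\sqrt{\sqrt{\rho}\sigma\sqrt{\rho}}$: concavity of the objective plus the threefold symmetry puts an optimal $\sigma_E$ on the Bloch axis $(1,1,1)/\sqrt3$, giving $\big(\max_{\sigma_E}\sum_z P_Z(z)F(\rho''_{E|Z=z},\sigma_E)\big)^2=(1+1/\sqrt3)/2\approx0.789$, whereas $\sum_{z,z'}P_Z(z)P_Z(z')F(\rho''_{E|Z=z},\rho''_{E|Z=z'})=(3+6/\sqrt2)/9\approx0.805$. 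So only the inequality you proved is generally valid, and the missing half of your proof coincides exactly with the unjustified step in the paper's own proof of Lemma \ref{3-16-7L}: there, the claim that diagonal phases $e^{i\theta_a}$ can rotate every cross term $\langle\phi_{C|a'}|\phi_{C|a}\rangle$ to its modulus, and, in the mixed case, that purifications can be fixed which attain all pairwise fidelities simultaneously. In short: your upper-bound half is sound, but the attainment direction is a real gap in your proposal --- and one that the paper's argument does not legitimately fill either.
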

Theorem \ref{3-16-2T} follows from 
a more general argument, Lemma \ref{3-16-7L} in Appendix \ref{as1} 

Since the equality in \eqref{3-11-2eq} holds in the ideal case,
we introduce another measure of the imperfectness of the correlation for the decrease of the energy as
\begin{align}
\Delta {I}_{F,\rho''_{ZE}}(Z;E) 
:= S_2(P_Z)- {I}_{F,\rho''_{ZE}}(Z;E) .
\end{align}

\begin{lemma}
Let ${\cal F}=({\cal H}_{E}, \hat{H}_{E}, U, \rho_{E})$
be an FQ-work extraction.
Assume that $\rho_I$ is commutative with $\hat{H}_I$.
When the CP-work extraction $CP({\cal F})$ is a level-4 CP-work extraction,
the states $\{ \rho_{E|Z=z}''\}_z$
are distinguishable.
\end{lemma}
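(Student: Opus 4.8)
The plan is to realize the conditional external states explicitly and to show that the level-4 property confines each $\rho_{E|Z=z}''$ to a single eigenspace of $\hat{H}_{E}$ whose eigenvalue is fixed by $z$; distinctness of these eigenspaces then yields distinguishability. First I would use that $\rho_I$ commutes with $\hat{H}_I$ to diagonalize it as $\rho_I=\sum_x p_x\Pi_{x}$ in an energy eigenbasis, and take the purification $|\Phi\rangle=\sum_x\sqrt{p_x}|x\rangle_I|x\rangle_R$ with $|x\rangle_R$ an eigenvector of $\hat{H}_R$ of eigenvalue $h_x$, so that $|\Phi\rangle$ lies in the zero eigenspace of $\hat{H}_I-\hat{H}_R$, as required for the energy-loss reading of $Z$. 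Writing $|\chi_x\rangle:=U(|x\rangle\otimes|\phi_E\rangle)$ and applying the spectral projection $F_z$ of $\hat{H}_I-\hat{H}_R$ to $(U\otimes\hat{1}_R)|\Phi,\phi_E\rangle$, the orthonormality of $\{|x\rangle_R\}$ lets me trace out $R$ term by term, giving
\begin{align}
\rho_{E|Z=z}''=\frac{1}{P_Z(z)}\sum_x p_x\,\Tr_I|\chi_{x,z}\rangle\langle\chi_{x,z}|,\quad
|\chi_{x,z}\rangle:=(P_{h_x+z}\otimes\hat{1}_E)|\chi_x\rangle,
\end{align}
so that it suffices to locate the $E$-support of each $|\chi_{x,z}\rangle$.

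The key step is to turn the level-4 law into a statement about the vectors $|\chi_x\rangle$. Since $\rho_E=|\phi_E\rangle\langle\phi_E|$ is pure, $\mathcal{E}_j(\Pi_{x})=\Tr_E\,(\hat{1}_I\otimes P_{E,j})|\chi_x\rangle\langle\chi_x|(\hat{1}_I\otimes P_{E,j})$, whose support on $\cH_I$ is the span of the internal Schmidt vectors of $(\hat{1}_I\otimes P_{E,j})|\chi_x\rangle$. The level-4 condition \eqref{CP2} forces this support into $\im P_{h_x-w_j}$, which is equivalent to the vector identity
\begin{align}
(\hat{1}_I\otimes P_{E,j})|\chi_x\rangle=(P_{h_x-w_j}\otimes P_{E,j})|\chi_x\rangle\qquad\text{for all }x,j.\Label{star-local}
\end{align}
Here $w_j=h_{E,j}-\bar{h}_E$ with $\bar{h}_E:=\Tr\hat{H}_E\rho_E$, so the $j$-block of $|\chi_x\rangle$ carries external energy $h_{E,j}$ and internal energy $h_x-w_j$.

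Inserting $\hat{1}_E=\sum_j P_{E,j}$ into $|\chi_{x,z}\rangle$ and using \eqref{star-local}, the $j$-block becomes $(P_{h_x+z}P_{h_x-w_j}\otimes P_{E,j})|\chi_x\rangle$, which vanishes unless $h_x+z=h_x-w_j$, that is unless $w_j=-z$, equivalently $h_{E,j}=\bar{h}_E-z$. Hence, for every $x$, the $E$-part of $|\chi_{x,z}\rangle$ lies in the eigenspace of $\hat{H}_E$ with eigenvalue $\bar{h}_E-z$, and therefore so does the support of $\rho_{E|Z=z}''$. For $z\neq z'$ the eigenvalues $\bar{h}_E-z$ and $\bar{h}_E-z'$ differ, so the two conditional states have orthogonal supports and $\Tr\rho_{E|Z=z}''\rho_{E|Z=z'}''=0$, i.e.\ the family $\{\rho_{E|Z=z}''\}_z$ is distinguishable. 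I emphasize that this argument uses \eqref{CP2} directly and needs neither the FQ energy conservation law \eqref{FQ2} nor that $\rho_E$ be an energy eigenstate.

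The main obstacle I anticipate is the passage from \eqref{CP2} to the vector identity \eqref{star-local}: one must justify that the support constraint on the reduced operator $\mathcal{E}_j(\Pi_{x})$ lifts to the joint vector $(\hat{1}_I\otimes P_{E,j})|\chi_x\rangle$ (a standard but necessary marginal-support argument), and one must choose the purification and $\hat{H}_R$ consistently in the presence of degeneracies of $\hat{H}_I$. The surrounding bookkeeping---the termwise partial trace over $R$ and the energy matching that singles out one external eigenspace---is routine once \eqref{star-local} is established.
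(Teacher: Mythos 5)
Your proof is correct, and it reaches the conclusion by a genuinely different, more self-contained route than the paper. The paper's proof is a one-liner: it takes for granted that $\rho_{E}$ is supported in a single eigenspace of $\hat{H}_{E}$ with eigenvalue $w_0$ (a hypothesis which, by Theorem \ref{3-14-8L}, is tied to the level-4 law only under the FQ energy conservation law \eqref{FQ2}, i.e., in the paper's standard setting) and then simply observes that $\Tr P_{w_0+z'}\rho_{E|Z=z}''=\delta_{z,z'}$ because the final external energy deterministically tracks the internal energy change. You never invoke \eqref{FQ2} or eigenstate-ness of $\rho_{E}$: using only the purity of $\rho_{E}$ and a marginal-support argument, you lift the level-4 condition \eqref{CP2} to the block identity $(\hat{1}_I\otimes P_{E,j})|\chi_x\rangle=(P_{h_x-w_j}\otimes P_{E,j})|\chi_x\rangle$, and the energy matching against $F_z$ then confines each conditional state $\rho_{E|Z=z}''$ to the single $\hat{H}_{E}$-eigenspace with eigenvalue $\bar{h}_E-z$ (your sign convention reads $Z$ off $\hat{H}_I-\hat{H}_R$ literally, so $z$ is the internal energy gain; the paper's energy-loss convention gives $w_0+z$ instead, which changes nothing for distinguishability). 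The two steps you flagged as delicate are indeed sound: if $\Tr_E |v\rangle\langle v|$ is supported in a subspace $S\subseteq \cH_I$, then $|v\rangle\in S\otimes\cH_E$ by the Schmidt decomposition, so \eqref{CP2} does lift to the vector identity (including trivially when the block vanishes); and the termwise trace over $R$ uses the orthonormality of the reference basis correctly, with $[\rho_I,\hat{H}_I]=0$ guaranteeing the eigenbasis purification in the zero eigenspace of $\hat{H}_I-\hat{H}_R$. What each approach buys: the paper's argument is immediate but covers only initial external states of sharp energy, whereas yours proves the lemma as literally stated — for any pure $\rho_{E}$ and any, possibly non-energy-conserving, $U$ for which $CP({\cal F})$ is level-4 — showing that the eigenspace confinement of the conditional external states is a consequence of \eqref{CP2} itself rather than an extra input; the only residual restriction, shared with the setup of Section \ref{s5b}, is the purity of $\rho_{E}$.
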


\begin{proof}
Let $w_0$ be the eigenvalue of the Hamiltonian associated with the state $\rho_{E}$.
Then,
\begin{align}
\Tr P_{w_0+z'} \rho_{E|Z=z}''=\delta_{z,z'}.
\end{align}
Hence, the states $\{ \rho_{E|Z=z}''\}_z$ are distinguishable.
\end{proof}

\begin{remark}
We should remark the relation between 
the fidelity-type mutual information $I_{F,\rho''_{ZE}}(Z;E) $
and an existing mutual information measure.
Recently, a new type of quantum R\'{e}nyi relative entropy
$\tilde{D}_\alpha(\tau\|\sigma)$ was introduced \cite{WWY,MDSFT}.
When the order is $\frac{1}{2}$, it is written as
\begin{align}
\tilde{D}_{\frac{1}{2}}(\tau\|\sigma)
=-2 \log F(\tau,\sigma).
\end{align}
Using this relation,
the papers \cite{GW,Beigi,WWY} introduced 
the quantity $\tilde{I}_{\alpha,\rho}(Z;E)$ with general order $\alpha$
by 
\begin{align}
\tilde{I}_{\alpha,\rho}(Z;E)
:=
\min_{\sigma_E} \tilde{D}_{\alpha}(\rho\|\rho_Z \otimes \sigma_E),
\end{align}
whose operational meaning was clarified by \cite{HT}. 
So, our fidelity-type mutual information $I_{F,\rho}(Z;E) $
is written as
\begin{align}
2 I_{F,\rho}(Z;E)= \tilde{I}_{\frac{1}{2},\rho}(Z;E).
\end{align}
\end{remark}

\subsection{Trade-off with imperfectness of correlation}\Label{s5c-s}
We derive the fidelity version of the trade-off relation with imperfectness of correlation.
To give this kind of trade-off relation, we
prepare the isometry $V_{IR Z}$ from $\cH_{I}\otimes \cH_R$
to $\cH_{I}\otimes \cH_R\otimes \cH_Z$:
\begin{align}
V_{IR Z}:=\sum_{z} |z\rangle \otimes F_z.
\end{align}
Then, 
we define the distribution 
\begin{align}
\tilde{P}_Z(z):= \langle \Phi| U_I^\dagger F_z U_I |\Phi \rangle ,\Label{3-15-11eq}
\end{align}
and
the pure states 
$|\psi_{IRE|Z=z}''\rangle$ 
and $|\tilde{\psi}_{IR|Z=z}''\rangle$ 
by
\begin{align}
V_{IR Z} U_I |\Phi\rangle &= 
\sum_{z}\sqrt{\tilde{P}_Z(z)} | \tilde{\psi}_{IR|Z=z},z\rangle \\
V_{IR Z} U |\Phi,\psi_{E}\rangle &= 
\sum_{z}\sqrt{P_Z(z)} |\psi_{IRE|Z=z}'',z\rangle.
\end{align}

\begin{theorem}\Label{3-15-3T}
The quality of approximation
$F_e(\Lambda, U_I,\rho_I)$ and 
the amount of imperfectness of correlation 
$\Delta {I}_{F,\rho''_{ZE}}(Z;E)$
satisfy the following trade-off relation:
\begin{align}
-\log F_e(\Lambda, U_I,\rho_I)
+\Delta I_{F,\rho''_{ZE}}(Z;E)
& \ge S_2(P_Z)
\Label{3-8-10},
\end{align}
i.e.,
\begin{align}
-\log F_e(\Lambda, U_I,\rho_I)
\ge I_{F,\rho''_{ZE}}(Z;E)
\Label{3-8-10B}.
\end{align}
The equality holds if and only if $\tilde{P}_Z=P_Z$
and there exist pure states
$|\psi_{E|Z=z}''\rangle $
such that
$|\psi_{IRE|Z=z}''\rangle=
|\tilde{\psi}_{IR|Z=z}'',\psi_{E|Z=z}''\rangle$ 
and 
$\langle \psi_{E|Z=z'}''| \psi_{E|Z=z}''\rangle \ge 0$.
\begin{align}
&-\log F_e(\Lambda, U_I, \rho_I)
=
I_{F,\rho''_{ZE}}(Z;E) \nonumber\\
&=
-\log \sum_{z,z'}
P_Z(z)P_Z(z')
\langle \psi_{E|Z=z}| \psi_{E|Z=z'}\rangle
\Label{3-11-6}.
\end{align}
\end{theorem}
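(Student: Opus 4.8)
The plan is to first reduce \eqref{3-8-10} to \eqref{3-8-10B}: since by definition $\Delta I_{F,\rho''_{ZE}}(Z;E)=S_2(P_Z)-I_{F,\rho''_{ZE}}(Z;E)$, relation \eqref{3-8-10} is \eqref{3-8-10B} with $S_2(P_Z)$ added to both sides. So it suffices to bound the entanglement fidelity from above by $\exp(-I_{F,\rho''_{ZE}}(Z;E))$, which by Theorem \ref{3-16-2T} equals $\sum_{z,z'}P_Z(z)P_Z(z')F(\rho''_{E|Z=z},\rho''_{E|Z=z'})$.

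The starting point is a purified expression for $F_e$. Writing $|\beta\rangle:=U|\Phi,\psi_{E}\rangle$ for the global pure state on $\cH_I\otimes\cH_R\otimes\cH_{E}$, we have $\Lambda(|\Phi\rangle\langle\Phi|)=\Tr_{E}|\beta\rangle\langle\beta|$, hence $F_e(\Lambda,U_I,\rho_I)^2=\|(\langle U_I\Phi|\otimes\hat{1}_E)|\beta\rangle\|^2$; equivalently $F_e$ is the fidelity between $\Lambda(|\Phi\rangle\langle\Phi|)$ and the pure ideal state $|U_I\Phi\rangle\langle U_I\Phi|$ on $\cH_I\otimes\cH_R$, whose optimal purification on $\cH_{E}$ is recorded by Uhlmann's theorem. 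I would then push this fidelity through the energy-difference isometry $V_{IRZ}$, which preserves fidelity, sending $U_I|\Phi\rangle$ to $V_{IRZ}U_I|\Phi\rangle=\sum_z\sqrt{\tilde P_Z(z)}\,|\tilde\psi_{IR|Z=z},z\rangle$ and $|\beta\rangle$ to $\sum_z\sqrt{P_Z(z)}\,|\psi''_{IRE|Z=z},z\rangle$, exactly the vectors introduced before the theorem.

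The heart of the argument is the following computation. Inserting $\hat{1}_{IR}=\sum_z F_z$ and using that the ranges of the $F_z$ are orthogonal (so the $Z$-labels are orthonormal and $F_z|\tilde\psi_{IR|Z=z}\rangle=|\tilde\psi_{IR|Z=z}\rangle$), one obtains
\begin{align}
(\langle U_I\Phi|\otimes\hat{1}_E)|\beta\rangle
=\sum_z\sqrt{\tilde P_Z(z)\,P_Z(z)}\;
\langle\tilde\psi_{IR|Z=z}|\psi''_{IRE|Z=z}\rangle ,
\end{align}
a vector in $\cH_{E}$ whose $z$-th summand is the partial overlap of the purification $|\psi''_{IRE|Z=z}\rangle$ with the fixed reference vector $|\tilde\psi_{IR|Z=z}\rangle$. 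Taking the squared norm produces cross terms $\langle\psi''_{IRE|Z=z}|(\,|\tilde\psi_{IR|Z=z}\rangle\langle\tilde\psi_{IR|Z=z'}|\otimes\hat{1}_E)|\psi''_{IRE|Z=z'}\rangle$, and the key estimate is to bound each by $F(\rho''_{E|Z=z},\rho''_{E|Z=z'})$: since $|\tilde\psi_{IR|Z=z}\rangle\langle\tilde\psi_{IR|Z=z'}|$ is a contraction on $\cH_I\otimes\cH_R$, this is immediate from the variational form $F(\rho,\sigma)=\max_{\|X\|\le1}|\langle\psi_\rho|(X\otimes\hat{1}_E)|\psi_\sigma\rangle|$ of Uhlmann's theorem applied to $|\psi''_{IRE|Z=z}\rangle$ and $|\psi''_{IRE|Z=z'}\rangle$. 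Because the standing condition \eqref{3-8-1} makes $\Lambda$ reproduce the diagonal statistics of $U_I$, one has $\tilde P_Z=P_Z$, so the estimate collapses to $F_e^2\le\sum_{z,z'}P_Z(z)P_Z(z')F(\rho''_{E|Z=z},\rho''_{E|Z=z'})=\exp(-I_{F,\rho''_{ZE}}(Z;E))$; taking logarithms and invoking Theorem \ref{3-16-2T} yields \eqref{3-8-10B}, and \eqref{3-8-10} follows from the equivalence above.

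Finally, the equality analysis tracks the two inequalities used. Equality in the norm estimate requires the $z$-summands to be aligned in phase, and equality in the contraction bound forces the optimal contraction to be $|\tilde\psi_{IR|Z=z}\rangle\langle\tilde\psi_{IR|Z=z'}|$; together with $\tilde P_Z=P_Z$ these are precisely the conditions that each $|\psi''_{IRE|Z=z}\rangle$ factorizes as $|\tilde\psi_{IR|Z=z}\rangle\otimes|\psi_{E|Z=z}''\rangle$ with $\langle\psi_{E|Z=z'}''|\psi_{E|Z=z}''\rangle\ge0$, as stated, and substituting them back and computing the norm gives \eqref{3-11-6}. I expect the main obstacle to be this cross-term step: one must correctly apply the variational characterization of fidelity to replace the reference-projected overlaps by the pairwise fidelities $F(\rho''_{E|Z=z},\rho''_{E|Z=z'})$, and then propagate the two equality conditions consistently through both this bound and the phase-alignment step.
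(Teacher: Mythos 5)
Your proposal is correct and follows essentially the same route as the paper: the paper proves the theorem by isometry-invariance of the fidelity under $V_{ZIR}$ followed by an appeal to Lemma \ref{3-16-8L} (itself the combination of the Uhlmann partial-overlap computation of Lemma \ref{3-16-1L}, the cross-term estimate \eqref{3-16-9eq}, and the formula of Lemma \ref{3-16-7L}, i.e.\ Theorem \ref{3-16-2T}), and your argument simply inlines these three steps, your variational-contraction bound $|\langle\psi''_{IRE|Z=z}|(|\tilde{\psi}_{IR|Z=z}\rangle\langle\tilde{\psi}_{IR|Z=z'}|\otimes\hat{1}_E)|\psi''_{IRE|Z=z'}\rangle|\le F(\rho''_{E|Z=z},\rho''_{E|Z=z'})$ being exactly the content of \eqref{3-16-9eq}. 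The only nuance is that you derive $\tilde{P}_Z=P_Z$ from the standing condition \eqref{3-8-1} (valid in the intended setting where $\rho_I$ commutes with $\hat{H}_I$ and $R$ records the energy), whereas the paper imports the same collapse of weights as the hypothesis $P_A=\tilde{P}_A$ of Lemma \ref{3-16-8L} and carries $\tilde{P}_Z=P_Z$ inside the equality condition — a presentational difference, not a different proof.
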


\begin{proof}
Since 
$F( U_I|\Phi\rangle \langle \Phi| U_I^\dagger ,
\rho_{IR}'')
=
F( V_{ZIR} U_I|\Phi\rangle \langle \Phi| U_I^\dagger V_{ZIR}^\dagger  ,
V_{ZIR}\rho_{IR}''V_{ZIR}^\dagger )$,
the inequality \eqref{3-8-10} is equivalent with 
\begin{align}
&F( V_{ZIR} U_I|\Phi\rangle \langle \Phi| U_I^\dagger 
V_{ZIR}^\dagger  ,
V_{ZIR} \rho_{IR}''V_{ZIR}^\dagger )\nonumber\\
&\le \max_{\sigma_E}
F(\rho_{ZE}'', \rho_{Z}'' \otimes \sigma_E ).
\end{align}
So, the desired argument follows from 
Lemma \ref{3-16-8L} in Appendix \ref{as1}. 
\end{proof}

Here, it is better to explain the difference between Theorem \ref{3-15-3T} and Theorem \ref{3-15-2T} of the main body.
Theorem \ref{3-15-2T} of the main body gives the relation between
the decoherence $S_e(\Lambda,\rho_I)$
and 
the imperfectness of correlation $\Delta {I}_{\rho''_{ZE}}(Z;E) $,
it does not require any relation with the internal unitary $U_I$.
To derive a relation with the approximation, 
we employ the quantum Fano inequality \eqref{3-15-12}.
However, Theorem \ref{3-15-3T} directly gives the relation between
the approximation $-\log F_e(\Lambda, U_I,\rho_I)$
and the imperfectness of correlation $\Delta {I}_{F,\rho''_{ZE}}(Z;E)$.
Hence, it requires the condition \eqref{3-15-11eq}.

\subsection{Shift-invariant model}\Label{s5d-s}
Next, we consider how to realize the case 
when the amount of the decoherence 
$-\log F_e(\Lambda, U_I,\rho_I)$
is close to zero,
and show Theorem \ref{th11-23} of the main body.
For simplicity, we consider this problem
under the shift-invariant model without an additional external system.
In the shift-invariant model, 
once we fix the energy-conservative unitary operator $F[U_I]$ on
$\cH_I \otimes \cH_E$ according to \eqref{3-12-10eq} of the main body,
the assumption of Theorem \ref{3-15-3T} holds 
and
the distribution $P_Z$ depends only on the initial state $\rho_I$ on the system $I$.
That is, $P_Z$ does not depend on the initial state $|\psi_{E}\rangle $ on the external system $E$.
In this case, we have
\begin{align}
|\psi_{E|Z=h_Ej}\rangle=
V^j|\psi_{E}\rangle.
\end{align}
In particular, when $|\psi_{E}\rangle=\sum_j \sqrt{P_J(j)}|j\rangle$,
the equality condition holds in both inequality \eqref{3-8-10}.
Theorem \ref{3-15-3T} implies that 
\begin{align}
&-\log F_e(\Lambda, U_I,\rho_I)
=
I_{F,\rho''_{ZE}}(Z;E) \nonumber\\
=&
-\log 
\sum_{z,z'}
P_Z(z)P_Z(z')
\sum_j \sqrt{P_J(j)}\sqrt{P_J(j+z-z')}
\Label{3-11-b}.
\end{align}
For example, when 
\begin{align}
P_J(j)&=
\left\{
\begin{array}{ll}
\frac{1}{2m+1} & \hbox{ if } |j| \le m \\
0 & \hbox{ otherwise.}
\end{array}
\right. \\
P_Z(h_E j)& =0  \hbox{ if } |j| \ge l,
\end{align}
we have
\begin{align}
&-\log F_e(\Lambda, U_I,\rho_I)
=
I_{F,\rho''_{ZE}}(Z;E) \nonumber\\
\le &
-\log (1-\frac{l}{2m+1})
\le 
\frac{l}{2m+1}.
\end{align}
Hence, applying \eqref{3-15-12}, 
we have
\begin{align}
S_e(\Lambda,\rho_I)
\le & h((1-\frac{l}{2m+1})^2)\nonumber \\
&+(1-(1-\frac{l}{2m+1})^2)
\log (d_I^2-1) \nonumber \\
=&
h\big(2\frac{l}{2m+1}- (\frac{l}{2m+1})^2\big)\nonumber \\
& +\big(2\frac{l}{2m+1}- (\frac{l}{2m+1})^2\big)
\log (d_I^2-1).
\end{align}
Hence, we obtain Theorem \ref{th11-23} of the main body.

\subsection{Trade-off under CP-work extraction}\Label{s5e-s}
In Appendix \ref{s5c-s}, we discuss the trade-off relation between the 
imperfectness of correlation and the quality of approximation
under a FQ-work extraction.
In this subsection, we discuss the trade-off relation under a CP-work extraction 
${\cal G}:=\{{\cal E}_j,w_j\}_j$ on the internal system ${\cal H}_I$ with the Hamiltonian $\hat{H}_I$.
To discuss the trade-off relation, in the internal system ${\cal H}_I$,
we consider the internal unitary $U_I$ to be approximated and
the initial mixture state $\rho_I$, which is assumed to be commutative with $\hat{H}_I$.
To qualify the approximation, we employ the measure
$F_e(\sum_j {\cal E}_j,U_I,\rho_I)$.
To evaluate the imperfectness of correlation, 
we consider the purification $|\Phi\rangle$ of $\rho_I$,
and introduce the joint distribution $P_{ZW}$ as
\begin{align}
P_{ZW}(z,w):=\sum_{j:w_j=w} \Tr_{R I} {\cal E}_j(|\Phi\rangle \langle \Phi|) E_z,
\end{align}
where the projection $E_z$ is defined in the same way as in Subsection \ref{s5b} of the main body.
Then, we employ the measure as
$\Delta I_{F,P_{ZW}}(Z;W)=S_2(P_Z) -I_{F,P_{ZW}}(Z;W)$.

As corollary of Theorem \ref{3-15-3T}, we have the following.
\begin{corollary} \Label{4-10-2C}
Given a CP-work extraction $\{{\cal E}_j,w_j\}_j$.
The quality of approximation
$F_e(\sum_j {\cal E}_j, U_I,\rho_I)$ and 
the amount of imperfectness of correlation 
$\Delta {I}_{F,P_{ZE}}(Z;E)$
satisfy the following trade-off relation:
\begin{align}
-\log F_e(\sum_j {\cal E}_j, U_I,\rho_I)
+\Delta I_{F,P_{ZE}}(Z;E)
& \ge S_2(P_Z)
\Label{3-8-10x}.
\end{align}
\end{corollary}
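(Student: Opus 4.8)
The plan is to transcribe the proof of Corollary \ref{4-10-1C} to the fidelity setting, replacing Theorem \ref{3-15-2T} by Theorem \ref{3-15-3T} and the monotonicity of the relative entropy by the monotonicity of the fidelity.

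First, since $\sum_j {\cal E}_j$ is the overall TP-CP map and Theorem \ref{3-15-3T} requires no energy conservation law, I would take a Stinespring (indirect measurement) representation $({\cal H}_E, U, \rho_E)$ with a pure state $\rho_E$ and a projection valued measure $\{E_j\}$ on ${\cal H}_E$ such that
\begin{align}
{\cal E}_{j}(\rho_I)=
\Tr_{E} U
(\rho_I \otimes \rho_{E}) U^\dagger
(\hat{1}_{I} \otimes E_j),
\end{align}
whose existence is guaranteed by the Ozawa representation invoked earlier. This quartet is an FQ-work extraction, so that $\Lambda(\rho_I)=\Tr_E U(\rho_I\otimes\rho_E)U^\dagger$ with $\Lambda=\sum_j {\cal E}_j$, and applying Theorem \ref{3-15-3T} in the form \eqref{3-8-10B} gives
\begin{align}
-\log F_e\Big(\sum_j {\cal E}_j, U_I,\rho_I\Big)
\ge I_{F,\rho''_{ZE}}(Z;E).
\end{align}

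The one genuinely new step is the information processing inequality $I_{F,\rho''_{ZE}}(Z;E) \ge I_{F,P_{ZE}}(Z;E)$, the fidelity analog of the relative-entropy bound $I_{\rho''_{ZE}}(Z;E)\ge I_{P_{ZE}}(Z;E)$ used for Corollary \ref{4-10-1C}. I would read this off from the variational expression $I_{F,\rho''_{ZE}}(Z;E) = -\log \max_{\sigma_E} F(\rho''_{ZE}, \rho''_Z \otimes \sigma_E)$ together with the monotonicity of the fidelity under TP-CP maps. The relevant channel measures the external system in the PVM $\{E_j\}$ and records the work value, acting only on the $E$ factor; it fixes the $Z$-marginal $\rho''_Z$, sends $\rho''_{ZE}$ to the classical joint state $P_{ZE}$, and carries each product $\rho''_Z\otimes\sigma_E$ to a product of the same form. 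Since the fidelity cannot decrease under this channel, the maximum over the auxiliary state can only increase and its negative logarithm can only decrease, which yields the claimed inequality.

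Chaining the two displays gives $-\log F_e(\sum_j {\cal E}_j, U_I,\rho_I) \ge I_{F,P_{ZE}}(Z;E)$; substituting the definition $\Delta I_{F,P_{ZE}}(Z;E)=S_2(P_Z)-I_{F,P_{ZE}}(Z;E)$ then produces \eqref{3-8-10x}. I expect the monotonicity step to be the main obstacle: one must verify that optimizing over the auxiliary state commutes appropriately with the channel acting on $E$ and that the $Z$-marginal is genuinely preserved, so that $I_F$ is truly data-processing monotone. Everything else is a direct transcription of the argument for Corollary \ref{4-10-1C}.
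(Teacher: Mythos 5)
Your proposal is correct and follows essentially the same route as the paper's own proof: the same Stinespring (indirect-measurement) realization of $\{{\cal E}_j,w_j\}_j$ as an FQ-work extraction without any energy-conservation requirement, the same application of Theorem \ref{3-15-3T} in the form \eqref{3-8-10B}, and the same fidelity data-processing inequality $I_{F,\rho''_{ZE}}(Z;E) \ge I_{F,P_{ZE}}(Z;E)$, which the paper merely cites while you spell out via the variational expression for $I_F$ and monotonicity of the fidelity under the measurement channel on $E$ that records the work value and preserves the $Z$-marginal. This extra verification of the data-processing step is a sound elaboration of the paper's one-line justification, not a different argument.
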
 

\begin{proofof}{Corollary \ref{4-10-2C}}
Notice that Theorem \ref{3-15-3T} does not assume any energy conservation law.
Then, we take a Stinespring representation 
$({\cal H}_E, U, \rho_E)$ with a pure state $\rho_E$ of 
$\{{\cal E}_{j},w_{j}\}_{j\in {\cal J}}$ as follows.
\begin{align}
{\cal E}_{j}(\rho_I)=
\Tr_{E} U_{IE}
(\rho_I \otimes \rho_{E}) U_{IE}^\dagger 
(\hat{1}_{I} \otimes E_j),
\end{align}
where $\{E_j\}$ is a projection valued measure on ${\cal H}_E$.
Notice that $({\cal H}_E, U, \rho_E)$ is an FQ-work extraction.
Here, we do not care about whether the FQ-work extraction satisfies any energy conservation law.
Then, we apply Theorem \ref{3-15-3T}.
Since information processing inequality for the fidelity
yields that 
$I_{F,\rho''_{ZE}}(Z;E) \ge I_{F,P_{ZE}}(Z;E)$,
the relation \eqref{3-8-10} derives
\eqref{3-8-10x}.
\end{proofof}

Further, we have the following corollary.
\begin{corollary}
Assume that $\rho_I$ is commutative with $\hat{H}_I$.
For a level-4 CP-work extraction $\{{\cal E}_j,w_j\}_j$,
$\Delta I_{P_{ZE}}(Z;E) =\Delta I_{F,P_{ZE}}(Z;E) =0$.
So, we have
\begin{align}
-\log F_e(\sum_j {\cal E}_j, U_I,\rho_I)
& \ge S_2(P_Z)
\Label{3-8-10y}.
\end{align}
\end{corollary}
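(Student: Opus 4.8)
The plan is to derive the displayed inequality \eqref{3-8-10y} from the trade-off relation \eqref{3-8-10x} of Corollary \ref{4-10-2C}, by showing that for a level-4 CP-work extraction with $\rho_I$ commuting with $\hat{H}_I$ both imperfectness measures vanish. The entropic half is immediate: since $\rho_I$ commutes with $\hat{H}_I$, Lemma \ref{L5-11} asserts exactly $\Delta I_{P_{ZE}}(Z;E)=0$, so no further work is needed there.

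For the fidelity half I would first realize $\{{\cal E}_j,w_j\}_j$ as an FQ-work extraction ${\cal F}=({\cal H}_E,U,\rho_E)$ with pure $\rho_E$, exactly as in the proof of Corollary \ref{4-10-2C}, so that $CP({\cal F})$ is the given level-4 extraction. Morally, by the characterization \eqref{3-5-1} the energy loss $Z$ then equals the extracted work with certainty, so the loss is pinned to the energy shift recorded on $E$; the technical tool, however, is the distinguishability lemma of Subsection \ref{s5b-s}, which gives $\Tr P_{w_0+z'}\rho_{E|Z=z}''=\delta_{z,z'}$. Hence the conditional states $\{\rho_{E|Z=z}''\}_z$ are perfectly distinguishable and $F(\rho_{E|Z=z}'',\rho_{E|Z=z'}'')=0$ for $z\neq z'$. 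Feeding this into the equality clause of Theorem \ref{3-16-2T} yields $I_{F,\rho''_{ZE}}(Z;E)=-\log\sum_z P_Z(z)^2=S_2(P_Z)$, hence $\Delta I_{F,\rho''_{ZE}}(Z;E)=0$.

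The step requiring genuine care is the passage from the fully quantum quantity $\Delta I_{F,\rho''_{ZE}}(Z;E)$ to the CP quantity $\Delta I_{F,P_{ZE}}(Z;E)$ that actually appears in \eqref{3-8-10x}. Here I would observe that the discriminating family $\{P_{w_0+z}\}_z$ is precisely the energy measurement on $E$ generating the classical outcome recorded in $P_{ZE}$; consequently the classical conditionals $P_{E|Z=z}$ have pairwise disjoint supports, so $F(P_{E|Z=z},P_{E|Z=z'})=0$ for $z\neq z'$. Applying the classical (diagonal) case of Theorem \ref{3-16-2T} then gives $I_{F,P_{ZE}}(Z;E)=S_2(P_Z)$, i.e. $\Delta I_{F,P_{ZE}}(Z;E)=0$. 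I expect checking that distinguishability survives this measurement, and that the equality case of the fidelity bound genuinely holds at the classical level, to be the only nontrivial obstacle.

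Finally I would substitute $\Delta I_{F,P_{ZE}}(Z;E)=0$ into \eqref{3-8-10x}, which collapses to $-\log F_e(\sum_j {\cal E}_j,U_I,\rho_I)\ge S_2(P_Z)$, establishing \eqref{3-8-10y}.
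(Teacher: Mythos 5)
Your proposal reconstructs essentially the route the paper intends (the paper states this corollary without an explicit proof, leaving it to follow from Lemma \ref{L5-11}, the distinguishability lemma of Appendix \ref{s5b-s}, the equality case of Theorem \ref{3-16-2T}, and substitution into \eqref{3-8-10x} of Corollary \ref{4-10-2C}), and you correctly identify the one subtle point: data processing gives $I_{F,P_{ZE}}(Z;E)\le I_{F,\rho''_{ZE}}(Z;E)$, so vanishing of the \emph{quantum} imperfectness does not by itself force vanishing of the \emph{classical} one, and the classical statement must be argued directly. One joint still needs tightening, though. You invoke the distinguishability lemma for the bare Stinespring dilation taken ``exactly as in the proof of Corollary \ref{4-10-2C}''; but that dilation carries no Hamiltonian structure on ${\cal H}_E$ and no eigenstate condition on $\rho_E$, whereas the lemma's proof begins by taking $w_0$ to be the eigenvalue of $\hat{H}_E$ associated with $\rho_E$, i.e.\ it presupposes that the support of $\rho_E$ lies in an eigenspace of $\hat{H}_E$. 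For the same reason your key identification in the classical step --- that the discriminating projections $P_{w_0+z}$ \emph{are} the energy measurement generating the classical record --- is only meaningful in an energy-structured realization. The repair is to replace the bare dilation by the realization supplied by Lemma \ref{3-13-4L}: every level-4 CP-work extraction admits an energy-conserving, shift-invariant FQ realization with $\rho_E$ supported in an eigenspace, and for that realization your argument goes through verbatim (the energy measurement on $E1$ is the work record, and the conditionals given $Z=z$ sit on the pairwise disjoint eigenvalues $w_0+z$).

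In fact you can dispense with dilations altogether, which is cleaner: the level-4 characterization \eqref{3-5-1} together with $[\rho_I,\hat{H}_I]=0$ gives, directly from the definition of $P_{ZW}$ in Appendix \ref{s5e-s}, that $P_{ZW}(z,w)=P_Z(z)\delta_{z,w}$; the classical conditionals are then point masses at distinct points, so Lemma \ref{3-16-7L} applied to this diagonal cq-state yields $I_{F,P_{ZW}}(Z;W)=-\log\sum_z P_Z(z)^2=S_2(P_Z)$, and the entropic counterpart $I_{P_{ZW}}(Z;W)=S(P_Z)$ follows in the same stroke, giving both vanishing statements of the corollary at once; substituting into \eqref{3-8-10x} then collapses it to \eqref{3-8-10y}. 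Your quantum detour through $\Delta I_{F,\rho''_{ZE}}(Z;E)=0$ is correct but, as this shows, not needed for the conclusion.
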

This corollary implies that the dynamics of a level-4 CP-work extraction is far
from any internal unitary.

\section{Shift-invariant model with non-lattice Hamiltonian}
\Label{s4-3}
Now, we show that our discussion for shift-invariant model in Section \ref{s4b}
can be extended to the case with non-lattice Hamiltonians $\hat{H}_I$.
In this case, 
we cannot employ the space $L^2(\mathbb{Z})$
for the non-degenerate external system $E1$.
One idea is to replace the space $L^2(\mathbb{Z})$
by the space $L^2(\mathbb{R})$.
However, in this method,
to satisfy the condition \eqref{3-5-1}
with the measurement of the Hamiltonian $\hat{H}_{E1}$,
we need to prepare the state whose wave function is a delta function.
To avoid such a mathematical difficulty, we employ another construction of the non-degenerate external system $E1$.

Let $\{h_i\}$ be the set of eigenvalues of $\hat{H}_I$.
We choose the set $\{h_{E,l}\}_{l=1}^L$ satisfying the following.
(1) When rational numbers $t_1, \ldots, t_L$ satisfy $\sum^{L}_{l=1} t_l h_{E,l}=0$, the equality  $t_l=0$ holds for all $l$.
(2) $\{h_i-h_j\}_{i,j} \subset \{ \sum^{L}_{l=1}n_{l} h_{E,l} \}_{n_{l} \in \mathbb{Z}}$.
Then, we choose $\cH_{E1}$ to be $L^2(\mathbb{Z})^{\otimes L}$
and the Hamiltonian $\hat{H}_{E1}$ to be
$\sum_{j_1, \ldots, j_L} 
h_{E,1}j_1+ \ldots + h_{E,L}j_L
|j_1, \ldots, j_L\rangle\langle j_1, \ldots, j_L|$.

As in the lattice case, the external system $E1$ does not have a ground state. 
Because $E1$ in the present case is composed of $L$ ladder systems, we can reconstruct the property of $E1$ in $L$ pairs of harmonic oscillators\cite[Section IV in Supplement]{catalyst}.

Then, a shift-invariant unitary is defined as follows.
We define $L$ displacement operators 
$V_{E1,l}:=\sum_{j}|j+1 \rangle_E ~_E\langle j| $
on the $l$-th space $L^2(\mathbb{Z})$.

\begin{definition}[Shift-invariant unitary]\Label{memoryless2}
A unitary $U$ on $\cH_I\otimes \cH_{E1}$ is called shift-invariant when
\begin{align}
U V_{E1,l}=V_{E1,l} U \Label{10-24}
\end{align}
for $l=1, \ldots, L$.
\end{definition}

Then, the definition of $F[U_I]$ is changed to 
\begin{align}
F[U_I]:=
\sum_{j_1, \ldots, j_L} 
(\otimes^{L}_{l=1}V_{E1,l}^{j_l})
W U_I W^\dagger 
(\otimes^{L}_{l=1}V_{E1,l}^{j_l})^\dagger
\Label{3-12-10eq2}.
\end{align}
Under this replacement, we have Lemma \ref{3-13-1L}.
Other definitions and lemmas 
in Section \ref{s4b} work with this replacement.


\section{Classical work extraction}\Label{s6}
\subsection{Formulation}\Label{s6a}
Here, we consider the relation with work extraction from a classical system, again.
Since the original Jarzynski's formulation \cite{Jarzynski}
considers only the deterministic time evolution, 
we need to give a formulation of probabilistic time evolution in the classical system
as an extension of Jarzynski's formulation.
That is, our formulation can be reduced into the probabilistic work extractions from the classical systems, which is defined as follows;
\begin{definition}[Classical work extraction]
We consider a classical system ${\cal X}$ and its Hamiltonian which is given as a real-valued function $h_X$ on ${\cal X}$.
We also consider a probabilistic dynamics $T(x|x')$ on ${\cal X}$, which is a probability transition matrix, i.e., $\sum_{x} T(x|x')=1$.
We refer to the triplet $({\cal X},h_X,T)$ as a classical work extraction.   
\end{definition}

This model includes the previous fully classical scenario\cite{Jarzynski,Croocks,Sagawa2010,Ponmurugan2010,Horowitz2010,Horowitz2011,Sagawa2012,Ito2013}, in which we extract work from classical systems.  
For example, the set up of the Jarzynski equality \cite{Jarzynski} is a special case that $T(x|x')$ is 
{\it invertible and deterministic},
i.e.,
$T(x|x')$ is given as $\delta_{x,f(x')}$ with an invertible function $f$.
We call such a transition matrix invertible and deterministic.
In this case,
the transition matrix $T$ is simply written as $f_*$.
That is, for a distribution $P$, we have
\begin{align}
f_*(P)(x)= P(f^{-1}(x)).\Label{4-19-4eq}
\end{align}
Notice that the relation 
\begin{align}
(g \circ f)_* (P) = g_*(  f_* (P))
\end{align}
holds.
When the transition matrix $T$ is {\it bi-stochastic}, i.e., $\sum_x T(x|x')=1$,
there exist a set of invertible functions $f_l$ and a distribution $P_L(l)$ such that $T=\sum_l P_L(l) f_{l*}$, i.e., 
\begin{align}
T(x|x')=\sum_l P_L(l) \delta_{x,f_l(x')}.\Label{3-13-1eq}
\end{align}
This relation means that 
a bi-stochastic transition matrix $T$
can be realized by a randomized combination of invertible and deterministic dynamics.

Under a classical work extraction $({\cal X},h_X,T)$,
because of the energy conservation,
the amount of the extracted work is given as 
\begin{align}
w_{x,x'}=h_X(x)-h_X(x') \Label{3-13-2eq}
\end{align}
when the initial and final states are $x$ and $x'$. 
More generally,
the initial state is given as a probability distribution $P_X$ on ${\cal X}$.
In this case, the amount of the extracted work is $w$
with the probability
\begin{align}
\sum_{x',x: w= h_{X}(x)-h_{X}(x')} P(x)T(x'|x).
\Label{3-13-2eqb}
\end{align}

For the entropy of the amount of extracted work, 
we can show the following lemma in the same way as Lemma \ref{4-8-1L} of the main body.
\begin{lemma}\Label{4-8-5L}
We denote the random variable describing the amount of extracted work by $W$. 
The entropy of $W$ is evaluated as
\begin{align}
S[W] \le 2 \log N,\Label{4-8-1eqx}
\end{align}
where $N$ is the number of elements of the set 
$ \{h_X(x)\}_{x \in {\cal X}}$.
\end{lemma}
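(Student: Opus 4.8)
The plan is to mirror the argument used for Lemma \ref{4-8-1L} of the main body, which is purely a support-counting estimate on the work random variable and carries over verbatim to the classical setting. First I would observe that, by the energy-conservation relation \eqref{3-13-2eq} together with the probability formula \eqref{3-13-2eqb}, a value $w$ occurs with positive probability only if there is a pair $(x,x')\in {\cal X}\times {\cal X}$ with $P(x)T(x'|x)>0$ and $w = h_X(x)-h_X(x')$. Consequently, every element of the support of $W$ is a difference of two elements of the set $\{h_X(x)\}_{x\in {\cal X}}$.

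Next I would count: since this set has exactly $N$ distinct elements, the number of differences $h_X(x)-h_X(x')$ is at most $N^2$, so the support of $W$ contains at most $N^2$ distinct values. Finally I would invoke the elementary fact that the Shannon entropy of a discrete random variable is bounded above by the logarithm of the cardinality of its support, which yields $S[W]\le \log N^2 = 2\log N$, establishing \eqref{4-8-1eqx}.

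There is no genuine obstacle here; the bound is a crude cardinality estimate and uses nothing about the structure of the transition matrix $T$ beyond the fact that, under classical work extraction, the extracted work is pinned to the energy difference \eqref{3-13-2eq}. The only minor point to state cleanly is that one counts differences of energy values rather than ordered pairs of states, but since distinct states $x$ may share the same energy $h_X(x)$, collapsing to the $N$ distinct energies is exactly what gives the $N^2$ rather than a $|{\cal X}|^2$ bound; the looseness of $N^2$ makes any finer bookkeeping unnecessary.
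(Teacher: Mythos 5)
Your proposal is correct and is essentially the paper's own argument: the paper proves this lemma ``in the same way as Lemma \ref{4-8-1L}'', namely that by \eqref{3-13-2eq} every possible work value is a difference $h_X(x)-h_X(x')$ of two of the $N$ distinct energy values, so the support of $W$ has at most $N^2$ elements and $S[W]\le 2\log N$. Your additional remark about counting energy differences rather than state pairs is the same point that makes the bound $2\log N$ instead of $2\log |{\cal X}|$.
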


Indeed, our CP-work extraction 
decides the amount of extracted work probabilistically dependently of 
the initial state on the internal system.
So, our CP-work extraction can be regarded as a natural quantum extension of
a probabilistic classical work extraction.

\subsection{Relation with CP-work extraction}\Label{s6b}
Next, we discuss the relation between CP-work extractions and classical work extractions. 
A CP-work extraction can be converted to a classical work extraction 
under a suitable condition as follows.
\begin{definition}[classical description]\Label{cdes1}
For a level-4 CP-work extraction $\{{\cal E}_{j}, w_{j}\}$,
we define the probability transition matrix
$T_{\{{\cal E}_{j}, w_{j}\}}$ as
\begin{align}
T_{\{{\cal E}_{j}, w_{j}\}}(y|x):=
\sum_{j}
\langle y|{\cal E}_{j}(\Pi_{x})|y\rangle ,\label{definitionT}
\end{align}
and the function $h_X(x)$ is given as the eigenvalue of the Hamiltonian $\hat{H}_I$ associated with the eigenstate $|x\rangle$.
Then, we refer to the triplet
$({\cal X},h,T_{\{{\cal E}_{j},w_{j}\}})$
as the classical description of the level-4 CP-work extraction $\{{\cal E}_{j}, w_{j}\}$,
and denote it by ${\cal T}(\{{\cal E}_{j}, w_{j}\})$.
\end{definition}

Hence, when the support of the initial state $\rho_I$ of 
a FQ-work extraction 
${\cal F}=({\cal H}_{E}, \hat{H}_{E}, U, \rho_{E})$ 
belongs to an eigenspace of the Hamiltonian $\hat{H}_{E}$,
its classical description is given as
${\cal T}(CP({\cal F}))$.
The above classical description gives the behavior of 
a given level-4 CP-work extraction $\{{\cal E}_{j}, w_{j}\}$.
When the initial state on the internal system $I$
is the eigenstate $|x\rangle$,
due to the condition \eqref{CP2} of the main body,
the amount of the extracted  work is $w$ with the probability
\begin{align}
\sum_{j:w=w_{j}}\sum_{y:w_j=h_x-h_y}
\langle y|{\cal E}_{j}(\Pi_{x})|y\rangle 
=
\sum_{y:w=h_x-h_y}\!\!
T_{\{{\cal E}_{j}, w_{j}\}}(y|x).
\end{align}
Hence, the classical description 
${\cal T}(\{{\cal E}_{j}, w_{j}\})$
gives the stochastic behavior in this case.
More generally, we have the following theorem.

\begin{theorem}\Label{Th4}
Assume that ${\cal P}_{\hat{H}_I}(\rho_I)$ is written as 
$\sum_{x} P_X(x)\Pi_{x}$.
(For the definition of ${\cal P}_{\hat{H}_I}(\rho_I)$, see \eqref{4-8-9eqx} of the main body.)
When we apply a level-4 CP-work extraction $\{{\cal E}_{j}, w_{j}\}$
to the system $I$ with the initial state $\rho_I$,
the amount of the extracted work is $w$ with the probability
\begin{align}
\sum_{x,y:w=h_x-h_y}
T_{\{{\cal E}_{j}, w_{j}\}}(y|x) P_X(x).
\end{align}
\end{theorem}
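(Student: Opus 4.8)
The plan is to start from the definition of the work distribution. For the initial state $\rho_I$, outcome $j$ occurs with probability $\Tr {\cal E}_j(\rho_I)$, so the measured work equals $w$ with probability $\sum_{j:w_j=w}\Tr {\cal E}_j(\rho_I)$. My goal is to rewrite this in terms of the classical transition matrix $T_{\{{\cal E}_j,w_j\}}$ and the diagonal weights $P_X(x)$, so the whole argument reduces to reorganizing this single sum.

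First I would reduce the problem to a diagonal initial state. The key input is Lemma \ref{3-14-1L}, which for a level-4 CP-work extraction gives ${\cal P}_{\hat{H}_I}({\cal E}_j(\rho_I)) = {\cal E}_j({\cal P}_{\hat{H}_I}(\rho_I))$. Since the pinching ${\cal P}_{\hat{H}_I}$ is trace-preserving (because $\sum_h P_h = \hat{1}_I$), I obtain $\Tr {\cal E}_j(\rho_I) = \Tr {\cal P}_{\hat{H}_I}({\cal E}_j(\rho_I)) = \Tr {\cal E}_j({\cal P}_{\hat{H}_I}(\rho_I))$. Using the hypothesis ${\cal P}_{\hat{H}_I}(\rho_I) = \sum_x P_X(x)\Pi_x$ together with linearity of ${\cal E}_j$, this becomes $\Tr {\cal E}_j(\rho_I) = \sum_x P_X(x)\Tr {\cal E}_j(\Pi_x)$. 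This step is the crux: it is exactly what makes the off-diagonal coherences of $\rho_I$ drop out, so that only the populations $P_X(x)$ in the energy eigenbasis matter.

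Then I would unfold the single-eigenstate contributions using the level-4 condition \eqref{CP2}. Writing $\Tr {\cal E}_j(\Pi_x) = \sum_y \langle y|{\cal E}_j(\Pi_x)|y\rangle$ and invoking ${\cal E}_j(\Pi_x) = P_{h_x-w_j}{\cal E}_j(\Pi_x)P_{h_x-w_j}$, the matrix element $\langle y|{\cal E}_j(\Pi_x)|y\rangle$ vanishes unless $h_y = h_x - w_j$, i.e. unless $w_j = h_x - h_y$. Substituting into $\sum_{j:w_j=w}\sum_x P_X(x)\Tr {\cal E}_j(\Pi_x)$ and using this support constraint, for each pair $(x,y)$ with $w = h_x - h_y$ the label $j$ may be summed freely over all $j$ rather than only those with $w_j=w$, since once $h_y = h_x - w$ is fixed every term with $w_j\neq w$ contributes zero to $\langle y|{\cal E}_j(\Pi_x)|y\rangle$. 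Recognizing $\sum_j \langle y|{\cal E}_j(\Pi_x)|y\rangle = T_{\{{\cal E}_j,w_j\}}(y|x)$ from Definition \ref{cdes1} then collapses the whole expression to $\sum_{x,y:w=h_x-h_y} T_{\{{\cal E}_j,w_j\}}(y|x)P_X(x)$, as claimed.

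The main obstacle is the second paragraph, namely the passage from $\rho_I$ to its pinched version. Everything else is bookkeeping, but this reduction is what makes the result valid for a genuinely non-diagonal $\rho_I$, and it relies essentially on the level-4 structure through Lemma \ref{3-14-1L}; without that commutation property the work distribution would in general retain a dependence on the coherences of $\rho_I$.
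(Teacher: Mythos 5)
Your proposal is correct and takes essentially the same route as the paper's proof: both insert the trace-preserving pinching and apply Lemma \ref{3-14-1L} to replace $\rho_I$ by $\sum_x P_X(x)\Pi_x$, then use the level-4 condition \eqref{CP2} to force $w_j=h_x-h_y$ in the diagonal matrix elements, and finally free the sum over $j$ to recognize $T_{\{{\cal E}_{j}, w_{j}\}}(y|x)$ from Definition \ref{cdes1}. The paper merely records the same bookkeeping with Kronecker deltas where you argue in words.
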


\begin{proof}
Due to Lemma \ref{3-14-1L} of the main body,
the probability of the amount of the extracted work $w$ is
\begin{align}
&\sum_{j:w_{j}=w}\Tr {\cal E}_{j}(\rho)=\sum_{j}\Tr {\cal P}_{\hat{H}_I}({\cal E}_{j}(\rho))\delta_{w,w_{j}}\nonumber\\
=&\sum_{j}\Tr {\cal E}_{j}({\cal P}_{\hat{H}_I}(\rho))\delta_{w,w_{j}}
\nonumber\\
=&
\sum_{j}\sum_{x,y} P_X(x)\langle y| {\cal E}_{j}(\Pi_{x})|y\rangle\delta_{w,w_{j}}\nonumber\\
\stackrel{(a)}{=}&\sum_{j}\sum_{x,y} P_X(x)\langle y| {\cal E}_{j}(\Pi_{x})|y\rangle\delta_{w,w_{j}}\delta_{w_{j},h_{x}-h_{y}} \nonumber\\
=&
\sum_{j}\sum_{x,y} P_X(x)\langle y| {\cal E}_{j}(\Pi_{x})|y\rangle\delta_{w,h_{x}-h_{y}}\delta_{w_{j},h_{x}-h_{y}}\nonumber\\
\stackrel{(b)}{=}&\sum_{j}\sum_{x,y} P_X(x)\langle y| {\cal E}_{j}(\Pi_{x})|y\rangle\delta_{w,h_{x}-h_{y}}
\nonumber\\
\stackrel{(c)}{=} &
\sum_{x,y:w_j=h_x-h_y}
T_{\{{\cal E}_{j}, w_{j}\}}(y|x) P_X(x).
\end{align}
where $(a)$ and $(b)$ follow from \eqref{CP2} of the main body, and $(c)$ follows from \eqref{definitionT}.
\end{proof}

Due to this theorem, in order to discuss the amount of extracted work in the level-4 CP-work extraction,
it is sufficient to handle the classical description.
Theorem \ref{Th4} is written as a general form and contains the case when 
the initial state $\rho_I$ is commutative with the Hamiltonian.
In this commutative case, 
the amount of extracted work can be simulated by the classical model.

\begin{lemma}
Given a level-4 CP-work extraction $\{{\cal E}_{j}, w_{j}\}$,
when the CP-work extraction $\{{\cal E}_{j}, w_{j}\}$ is unital,
the transition matrix $T_{\{{\cal E}_{j}, w_{j}\}} $
is bi-stochastic. 

\end{lemma}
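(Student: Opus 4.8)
The plan is to read off the two marginal normalizations of $T_{\{{\cal E}_{j}, w_{j}\}}$ directly from \eqref{definitionT}, using only that $\{|x\rangle\}$ is an orthonormal eigenbasis (so that $\sum_x \Pi_{x}=\hat{1}_I$ and $\sum_y|y\rangle\langle y|=\hat{1}_I$) together with the two global properties of the channel $\sum_j {\cal E}_j$: trace preservation, which holds for every CP-work extraction by Definition \ref{CP-normal}, and unitality \eqref{3-14-10eq}, which is the extra hypothesis here.

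First I would check that $T_{\{{\cal E}_{j}, w_{j}\}}$ is a genuine transition matrix, i.e. that $\sum_y T_{\{{\cal E}_{j}, w_{j}\}}(y|x)=1$ for each input $x$. Summing the definition over $y$ turns $\sum_y\langle y|\cdot|y\rangle$ into a trace, so $\sum_y T_{\{{\cal E}_{j}, w_{j}\}}(y|x)=\sum_j \Tr {\cal E}_j(\Pi_{x})=\Tr\big(\sum_j {\cal E}_j\big)(\Pi_{x})=\Tr\Pi_{x}=1$, where the last two equalities use that $\sum_j {\cal E}_j$ is CPTP (hence trace preserving) and $\Tr\Pi_{x}=1$. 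This step needs no hypothesis beyond the CP-work extraction axioms and merely records that $T_{\{{\cal E}_{j}, w_{j}\}}$ is column-stochastic.

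The substantive step is the opposite marginal, $\sum_x T_{\{{\cal E}_{j}, w_{j}\}}(y|x)=1$ for each output $y$. Here I would push the sum over $x$ inside the linear maps ${\cal E}_j$ and reassemble the identity using completeness of the eigenbasis,
\[
\sum_x T_{\{{\cal E}_{j}, w_{j}\}}(y|x)
=\sum_j \langle y|{\cal E}_j\big(\textstyle\sum_x \Pi_{x}\big)|y\rangle
=\langle y|\big(\textstyle\sum_j {\cal E}_j\big)(\hat{1}_I)|y\rangle ,
\]
and then invoke the unital condition \eqref{3-14-10eq}, $\sum_j {\cal E}_j(\hat{1}_I)=\hat{1}_I$, to conclude $\sum_x T_{\{{\cal E}_{j}, w_{j}\}}(y|x)=\langle y|\hat{1}_I|y\rangle=1$. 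Combining the two marginals gives bi-stochasticity.

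I do not anticipate any real obstacle: the argument is a two-line computation whose only subtlety is keeping straight which index is being summed, since column-stochasticity is automatic for any instrument while row-stochasticity is precisely the content of unitality. The level-4 hypothesis enters only to make the definition \eqref{definitionT} of $T_{\{{\cal E}_{j}, w_{j}\}}$ legitimate; the real work is done by the unital assumption.
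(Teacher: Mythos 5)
Your proposal is correct and coincides in substance with the paper's own proof: the paper feeds the maximally mixed state $\hat{1}_I/|{\cal X}|$ into $\sum_j{\cal E}_j$ and reads off $\sum_x T_{\{{\cal E}_{j}, w_{j}\}}(y|x)=1$ from unitality, which is exactly your row-sum computation up to the normalization factor $1/|{\cal X}|$. Your additional check of column-stochasticity via trace preservation is left implicit in the paper but adds nothing different in approach.
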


\begin{proof}
Since
\begin{align}
\sum_j 
\langle y|{\cal E}_{j}(\frac{\hat{1}}{|{\cal X}|})|y\rangle 
=
T_{\{{\cal E}_{j}, w_{j}\}}(y|x) 
\frac{1}{|{\cal X}|},
\end{align}
when the CP-work extraction $\{{\cal E}_{j}, w_{j}\}$ is unital,
the transition matrix $T_{\{{\cal E}_{j}, w_{j}\}} $
is bi-stochastic. 
\end{proof}

\begin{lemma}\Label{3-17-2L}
Given a classical work extraction $({\cal X},h_X,T)$,
the transition matrix $T$ is bi-stochastic 
if and only if 
there exists a standard FQ-work extraction ${\cal F}$ such that
$({\cal X},h_X,T)= {\cal T}(CP({\cal F}))$.
\end{lemma}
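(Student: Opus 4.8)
The plan is to prove the two implications separately: the reverse one follows quickly from results already established, whereas the forward one requires an explicit construction. For the direction assuming such an ${\cal F}$ exists, note that a standard FQ-work extraction ${\cal F}$ yields, by definition, a standard CP-work extraction $CP({\cal F})$, which is simultaneously level-4 and unital (this is what Theorems~\ref{3-14-8L} and~\ref{3-13-5L} provide). The lemma immediately preceding this one then shows that $T_{CP({\cal F})}$ is bi-stochastic, and since the hypothesis $({\cal X},h_X,T)={\cal T}(CP({\cal F}))$ gives $T=T_{CP({\cal F})}$, we conclude that $T$ is bi-stochastic.

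For the forward direction, assume $T$ is bi-stochastic and identify $\cH_I$ with $\ell^2({\cal X})$ and $\hat H_I=\sum_x h_X(x)\Pi_x$ (working in the lattice case with span $h_E$; the non-lattice case is handled as in Appendix~\ref{s4-3}). By the decomposition \eqref{3-13-1eq} I would write $T(y|x)=\sum_l P_L(l)\,\delta_{y,f_l(x)}$ with invertible $f_l$ and a probability distribution $P_L$. The essential idea is to store the randomizing index $l$ in a fully degenerate register. Concretely I would take $\cH_{E2}$ with orthonormal basis $\{|l\rangle\}$ and $\hat H_{E2}=0$, let $U_{IE2}:=\sum_l P_{f_l}\otimes|l\rangle\langle l|$ be the controlled permutation $|x,l\rangle\mapsto|f_l(x),l\rangle$, set $\rho_{E2}:=\sum_l P_L(l)|l\rangle\langle l|$, and, following the recipe of Lemma~\ref{3-13-4L}, lift $U_{IE2}$ to $U:=F[U_{IE2}]$ on $\cH_I\otimes\cH_{E2}\otimes\cH_{E1}$ via \eqref{3-12-10eq}. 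This gives the candidate ${\cal F}=(\cH_{E1}\otimes\cH_{E2},\,\hat H_{E1},\,U,\,|0\rangle\langle0|\otimes\rho_{E2})$.

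It then remains to verify that ${\cal F}$ is standard and reproduces $T$. Because $\hat H_{E2}=0$, the isometry $W$ factorizes through $E2$ as $W=W_0\otimes\hat{1}_{E2}$, so $U$ is block diagonal in $l$, namely $U=\sum_l F[P_{f_l}]\otimes|l\rangle\langle l|$. From this form energy conservation and shift-invariance are inherited from each $F[P_{f_l}]$ by Lemma~\ref{3-13-1L}, and the support of $\rho_E$ lies in the zero-eigenspace of $\hat H_E=\hat H_{E1}$ since $\hat H_{E1}|0\rangle=0$. The one condition needing genuine work is the stationary condition \eqref{3-14-12eq}: as $\rho_{E2}$ is diagonal in $l$ and each $F[P_{f_l}]$ is unitary on $\cH_I\otimes\cH_{E1}$, the off-diagonal-in-$l$ terms drop out and $\Tr_{IE1}U(\rho_I\otimes|0\rangle\langle0|\otimes\rho_{E2})U^\dagger=\sum_l P_L(l)|l\rangle\langle l|=\rho_{E2}$ for every $\rho_I$. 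Finally, tracing the whole external system gives $(\sum_j{\cal E}_j)(\Pi_x)=\sum_l P_L(l)\Pi_{f_l(x)}$, whence $T_{CP({\cal F})}(y|x)=\sum_l P_L(l)\delta_{y,f_l(x)}=T(y|x)$ and ${\cal T}(CP({\cal F}))=({\cal X},h_X,T)$.

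The main obstacle I expect is exactly this simultaneous satisfaction of all four defining properties of a standard FQ-work extraction; the device that makes it work is encoding the classical randomness of the bi-stochastic decomposition into a \emph{degenerate} register prepared diagonally in the $\{|l\rangle\}$ basis, which is precisely what lets the stationary condition survive the shift-invariant lifting.
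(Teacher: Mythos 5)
Your proof is correct and follows essentially the same route as the paper: the paper decomposes the bi-stochastic $T$ as a mixture of unitaries $T(y|x)=\sum_a P_A(a)|U_{I;a;x,y}|^2$, stores the index $a$ in a fully degenerate register $E2$ prepared in the diagonal state $\sum_a P_A(a)|a\rangle_{E2}\,{}_{E2}\langle a|$, and takes $U=\sum_a F[U_{I;a}]\otimes|a\rangle_{E2}\,{}_{E2}\langle a|$, which is exactly the block-diagonal operator your $F[U_{IE2}]$ reduces to once $W=W_0\otimes \hat{1}_{E2}$, with your Birkhoff permutations $P_{f_l}$ a special case of the $U_{I;a}$. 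If anything, your write-up is more complete than the paper's terse proof, which leaves the stationary condition and the converse (bi-stochasticity from the standard conditions via Theorems \ref{3-14-8L} and \ref{3-13-5L}) implicit, both of which you verify explicitly.
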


Lemma \ref{3-17-2L} will be shown after Lemma \ref{3-17-3L}.
Since the set of standard FQ-work extractions
is considered as the set of preferable work extraction,
it is sufficient to optimize the performance 
under the set of classical work extractions with a bi-stochastic transition matrix.
That is, both models yield the same distribution of the amount of extracted work. 
So, our model can be applied to the discussions for the tail probability and the variance 
for the amount of work extraction as well as the expectation.

As a subclass of bi-stochastic matrices, we consider the set of uni-stochastic matrices. 
A bi-stochastic matrix $T$ is called {\it uni-stochastic} 
when there exists a unitary matrix $U$ such that
$T(x|x')=|U_{x,x'}|^2$.
According to the discussion in Section \ref{s4b} and Appendix \ref{s4-3} of the main body,
we can consider a FQ-work extraction ${\cal F}=({\cal H}_{E1}, \hat{H}_E, F[U_I], \rho_E) $, where $\rho_E$ is a pure eigenstate of $\hat{H}_E$.
As mentioned in the end of Section \ref{s4b} of the main body,
since the corresponding CP-work extraction $CP({\cal F})$ depends only on 
the internal unitary $U_I$,
the CP-work extraction is denoted by $\hat{CP}(U_I)$.
Then, we have the following lemma.

\begin{lemma}\Label{3-17-3L}
Given a classical work extraction $({\cal X},h_X,T)$,
when 
the transition matrix $T$ is a uni-stochastic matrix
satisfying $T(x|x')=|U_{I;x,x'}|^2$ with a internal unitary $U_I$
then 
\begin{align}
({\cal X},h_X,T)= {\cal T}(\hat{CP}(U_I)).
\Label{4-8-8eq}
\end{align}
In the corresponding FQ-work extraction ${\cal F}=({\cal H}_{E1}, \hat{H}_E, F[U_I], \rho_E)$,
the entropy of the final state of external system is given as
\begin{align}
S[W] \ge S(\Tr_I F[U_I] (\rho_I\otimes \rho_E) F[U_I]^\dagger )
\Label{4-8-9eq}
\end{align}
for any pure eigenstate $\rho_E$ of $\hat{H}_E$.
\end{lemma}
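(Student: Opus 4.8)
The plan is to establish the two displayed identities separately. For the equality \eqref{4-8-8eq}, I would compute the transition matrix $T_{\hat{CP}(U_I)}(y|x)=\sum_j \langle y|{\cal E}_j(\Pi_x)|y\rangle$ of the classical description directly from Definition \ref{cdes1}, where $\{{\cal E}_j,w_j\}=\hat{CP}(U_I)=CP({\cal F})$ with ${\cal F}=({\cal H}_{E1},\hat{H}_E,F[U_I],\rho_E)$ and $\rho_E=|j_0\rangle\langle j_0|$ a pure eigenstate. Writing $U_I=\sum_{x,x'}u_{x,x'}|x\rangle\langle x'|$, I would use the explicit shift form of $F[U_I]$ to evaluate $F[U_I]|x,j_0\rangle=\sum_a u_{a,x}|a\rangle\otimes|j_0+(h_x-h_a)/h_E\rangle$, so that each internal jump $x\mapsto a$ is accompanied by a definite displacement of the external ladder index.

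Since $P_{E,j}=|j\rangle\langle j|$ for the non-degenerate system $E1$, the definition ${\cal E}_j(\Pi_x)=\Tr_E F[U_I](\Pi_x\otimes\rho_E)F[U_I]^\dagger(\hat{1}_I\otimes P_{E,j})$ gives, for the diagonal matrix element, $\langle y|{\cal E}_j(\Pi_x)|y\rangle=|u_{y,x}|^2\,\delta_{j,\,j_0+(h_x-h_y)/h_E}$: only the term $a=a'=y$ survives the $|y\rangle\langle y|$ projection, and the external ket then sits at the single index $j_0+(h_x-h_y)/h_E$. Summing over $j$ collapses the Kronecker delta and yields $T_{\hat{CP}(U_I)}(y|x)=|u_{y,x}|^2=|U_{I;y,x}|^2=T(y|x)$. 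Combined with the fact that the classical Hamiltonian function attached to ${\cal T}(\hat{CP}(U_I))$ is by definition the eigenvalue $h_x$ of $\hat{H}_I$, this establishes \eqref{4-8-8eq}.

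For the entropy bound \eqref{4-8-9eq}, I would reuse the pinching argument already appearing in the proof of Lemma \ref{4-8-2L}. Writing $\sigma_E:=\Tr_I F[U_I](\rho_I\otimes\rho_E)F[U_I]^\dagger$ for the final external state, the probability of outcome $j$ is $\Tr {\cal E}_j(\rho_I)=\Tr_E[\sigma_E P_{E,j}]=\langle j|\sigma_E|j\rangle$. Because $E1$ is non-degenerate, the map $j\mapsto w_j=h_E(j-j_0)$ is injective, so $W$ is distributed exactly as the energy-basis measurement of $\sigma_E$, and therefore $S[W]=S({\cal P}_{\hat{H}_E}(\sigma_E))$, where ${\cal P}_{\hat{H}_E}$ denotes the pinching in the eigenbasis of $\hat{H}_E$. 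Since pinching cannot decrease the von Neumann entropy, $S({\cal P}_{\hat{H}_E}(\sigma_E))\ge S(\sigma_E)$, which is precisely \eqref{4-8-9eq}.

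The routine but error-prone step, and the main obstacle, is the index bookkeeping behind \eqref{4-8-8eq}: one must track how the displacement exponent $(h_x-h_a)/h_E$ carried by $F[U_I]$ interacts with the projection $P_{E,j}$, so that all off-diagonal terms $a\neq a'$ are killed after tracing over $E$ and the surviving diagonal contribution is pinned to a single value of $j$. Once this is verified, the entropy inequality follows immediately from the monotonicity of entropy under pinching, exactly as in Lemma \ref{4-8-2L}.
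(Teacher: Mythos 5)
Your proposal is correct and follows essentially the same route as the paper: the paper's proof simply states that \eqref{4-8-8eq} follows from the definition of $\hat{CP}(U_I)$ and that \eqref{4-8-9eq} is proved as in Lemma \ref{4-8-2L}, and your explicit computation of $\langle y|{\cal E}_j(\Pi_x)|y\rangle=|u_{y,x}|^2\,\delta_{j,\,j_0+(h_x-h_y)/h_E}$ together with the pinching inequality $S[W]=S({\cal P}_{\hat{H}_E}(\sigma_E))\ge S(\sigma_E)$ is exactly the content of those two references made explicit. The index bookkeeping you flag is handled correctly, including the fact that the diagonal element $\langle y|\cdot|y\rangle$ kills all $a\neq a'$ cross terms even when $\hat{H}_I$ is degenerate.
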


\begin{proof}
The relation \eqref{4-8-8eq} can be shown from the definition of $\hat{CP}(U_I)$.
The relation \eqref{4-8-9eq} can be shown in the same way as \eqref{4-8-2L} of the main body.
\end{proof}

\begin{proofof}{Lemma \ref{3-17-2L}}
Given a bi-stochastic matrix $T$, there exist a probability distribution $P_A(a)$ and a unitary matrix $U_{I;a}$ such that
\begin{align}
T(y|x)=\sum_{u}P_A(a) |U_{I;a;x,y}|^2.
\end{align}
Then, we choose the fully degenerate system $\cH_{E2}$ spanned by 
$\{|a \rangle_{E2}\}$
and the initial state 
$\rho_{E2}:=\sum_a P_A(a)|a \rangle_{E2}~_{E2}\langle a|$.
We define the unitary 
$U:= \sum_a F[U_{I;a}] \otimes |a \rangle_{E2}~_{E2}\langle a|$.
So, we have
$({\cal X},h_X,T)= {\cal T}(
\cH_{E},\hat{H}_E,U,\rho_{E1}\otimes \rho_{E2})$
for any pure state $\rho_{E1}$ on $\cH_{E1}$.
\end{proofof}

As a special case of Lemma \ref{3-17-3L}, we have the following lemma. 
\begin{lemma}\Label{4-2-2L}
Given a classical work extraction $({\cal X},h_X,f_*)$ with a invertible function $f$,
the unitary 
\begin{align}
U_f:|x\rangle \mapsto |f(x)\rangle
\end{align}
satisfies $({\cal X},h_X,T)= {\cal T}({\cal H}_{E1}, \hat{H}_E, F[U_f], \rho_E)$.
\end{lemma}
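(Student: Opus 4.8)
The plan is to read this off as the special case of Lemma~\ref{3-17-3L} in which the witnessing internal unitary is the permutation unitary $U_f$ itself. First I would compute the matrix elements of $U_f$ in the energy eigenbasis $\{|x\rangle\}$ of $\hat{H}_I$: since $U_f|x'\rangle = |f(x')\rangle$, we have $\langle x|U_f|x'\rangle = \langle x|f(x')\rangle = \delta_{x,f(x')}$. Because $f$ is invertible, $U_f$ merely permutes the basis vectors, so it is genuinely a unitary on $\cH_I$ that maps energy eigenstates to energy eigenstates, and hence the construction $F[U_f]$ together with the standard FQ-work extraction $({\cal H}_{E1},\hat{H}_E,F[U_f],\rho_E)$ is legitimate and $\hat{CP}(U_f)$ is well defined.

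Next I would identify the transition matrix. An invertible deterministic $f_*$ has transition matrix $T(x|x')=\delta_{x,f(x')}$, consistently with the pushforward relation \eqref{4-19-4eq}. Since $\delta_{x,f(x')}$ takes only the values $0$ and $1$, squaring it changes nothing, so $T(x|x')=|\langle x|U_f|x'\rangle|^2=|U_{f;x,x'}|^2$. Thus $f_*$ is a uni-stochastic matrix whose witnessing unitary may be taken to be $U_f$, which is exactly the hypothesis of Lemma~\ref{3-17-3L}.

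Applying Lemma~\ref{3-17-3L} with $U_I=U_f$ then yields $({\cal X},h_X,f_*)={\cal T}(\hat{CP}(U_f))$, and by the definition of $\hat{CP}(U_f)$ as the standard CP-work extraction $CP({\cal F})$ arising from ${\cal F}=({\cal H}_{E1},\hat{H}_E,F[U_f],\rho_E)$, the right-hand side is precisely ${\cal T}({\cal H}_{E1},\hat{H}_E,F[U_f],\rho_E)$, which is the claimed equality. I expect essentially no obstacle here: the entire content is the observation that a deterministic permutation is the degenerate uni-stochastic matrix generated by a permutation unitary, so that the previously established correspondence specializes verbatim. The only point worth a sentence of care is that $\{|x\rangle\}$ must be the energy eigenbasis, so that $U_f$ respects the structure required by $F[\cdot]$ and by the classical description in Definition~\ref{cdes1}; this is automatic since $U_f$ simply permutes that eigenbasis.
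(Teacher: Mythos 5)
Your proposal is correct and matches the paper's own route exactly: the paper offers no separate proof, stating the lemma simply as a special case of Lemma~\ref{3-17-3L}, which is precisely your argument that $T(x|x')=\delta_{x,f(x')}=|\langle x|U_f|x'\rangle|^2$ exhibits $f_*$ as uni-stochastic with witnessing unitary $U_f$. Your added checks (invertibility of $f$ making $U_f$ a genuine permutation unitary, and the identification of ${\cal T}(\hat{CP}(U_f))$ with ${\cal T}({\cal H}_{E1},\hat{H}_E,F[U_f],\rho_E)$) are sound and fill in what the paper leaves implicit.
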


Therefore, any invertible and deterministic transition matrix $T$ 
can be simulated by a shift-invariant FQ-work extraction 
only with the non-degenerate external system.
So, the reduction to classical work extraction
will be helpful to analyze the heat engine.
That is, in the several settings, the analysis of heat engine can be essentially reduced to the analysis of classical work extraction.

\section{Relations among fidelities on tripartite system}\Label{as1}
In this appendix, we derive several useful relations 
among fidelities on a tripartite system $\cH_A,\cH_B,\cH_C$.
We consider 
the state 
$|\Psi\rangle:=
\sum_{a} \sqrt{\tilde{P}_A(a)}|a, \psi_{B|a}\rangle$
on $\cH_A \otimes \cH_B$,
and
the state 
$|\Phi\rangle:=
\sum_{a} \sqrt{{P}_A(a)}|a, \phi_{BC|a}\rangle$
on $\cH_A \otimes \cH_B\otimes \cH_C$,
Then, we denote $\rho:= |\Phi\rangle \langle\Phi|$.
We also define
$|\phi_{C|a}\rangle
:=\langle \psi_{B|a}| \phi_{BC|a}\rangle$.

In this case, we have the following lemma.
\begin{lemma}\Label{3-16-1L}
\begin{align}
&F(|\Psi\rangle \langle\Psi| ,\rho_{AB})\nonumber\\
&=
\sum_{a,a'} 
\sqrt{\tilde{P}_A(a) P_A(a)}
\sqrt{\tilde{P}_A(a') P_A(a')}
\langle \phi_{C|a'}| \phi_{C|a}\rangle 
\end{align}
\end{lemma}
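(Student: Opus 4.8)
The plan is to exploit that one of the two arguments of the fidelity, namely $|\Psi\rangle\langle\Psi|$, is pure, so that $F(|\Psi\rangle\langle\Psi|,\rho_{AB})$ collapses to the single overlap $\langle\Psi|\rho_{AB}|\Psi\rangle$; this reduces the entire claim to a direct evaluation of that expectation value, and it is exactly what makes the stated right-hand side appear with no outer square root. First I would write $\rho_{AB}=\Tr_C|\Phi\rangle\langle\Phi|$ explicitly from the given expansion of $|\Phi\rangle$, obtaining $\rho_{AB}=\sum_{a,a'}\sqrt{P_A(a)P_A(a')}\,|a\rangle\langle a'|\otimes \Tr_C[\,|\phi_{BC|a}\rangle\langle\phi_{BC|a'}|\,]$.

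Next I would sandwich this between $\langle\Psi|$ and $|\Psi\rangle$ using $|\Psi\rangle=\sum_b\sqrt{\tilde{P}_A(b)}\,|b,\psi_{B|b}\rangle$. Since $\{|a\rangle\}$ is an orthonormal basis of $\mathcal{H}_A$, the overlaps $\langle b|a\rangle=\delta_{b,a}$ and $\langle a'|b'\rangle=\delta_{a',b'}$ collapse the four $A$-sums into a single double sum over $a,a'$, leaving the quantity $\langle\Psi|\rho_{AB}|\Psi\rangle=\sum_{a,a'}\sqrt{\tilde{P}_A(a)P_A(a)}\,\sqrt{\tilde{P}_A(a')P_A(a')}\,\langle\psi_{B|a}|\,\Tr_C[\,|\phi_{BC|a}\rangle\langle\phi_{BC|a'}|\,]\,|\psi_{B|a'}\rangle$. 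It then remains to identify the $B$-contracted partial trace with $\langle\phi_{C|a'}|\phi_{C|a}\rangle$.

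For that last step I would insert an orthonormal basis $\{|c_k\rangle\}$ of $\mathcal{H}_C$, writing $\Tr_C$ as $\sum_k\langle c_k|\cdots|c_k\rangle$, and then commute the $B$-inner product with the $C$-contraction, which is legitimate because they act on different tensor factors. Using the definition $|\phi_{C|a}\rangle:=\langle\psi_{B|a}|\phi_{BC|a}\rangle$ one finds $\langle\psi_{B|a}|\,(\langle c_k|\phi_{BC|a}\rangle)=\langle c_k|\phi_{C|a}\rangle$ and $(\langle\phi_{BC|a'}|c_k\rangle)\,|\psi_{B|a'}\rangle=\langle\phi_{C|a'}|c_k\rangle$, so that summing over $k$ gives $\sum_k\langle\phi_{C|a'}|c_k\rangle\langle c_k|\phi_{C|a}\rangle=\langle\phi_{C|a'}|\phi_{C|a}\rangle$. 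Substituting this back reproduces the claimed expression verbatim. The main obstacle is entirely the bookkeeping here: one must keep straight the order in which the $B$-inner product and the $C$-partial trace are applied, and get the complex conjugation right, so that the primed index lands on the bra and the result is $\langle\phi_{C|a'}|\phi_{C|a}\rangle$ rather than its conjugate; the remaining manipulations are routine.
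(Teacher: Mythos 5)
Your proof is correct, and it takes a somewhat more elementary route than the paper's. The paper proves Lemma \ref{3-16-1L} via Uhlmann's theorem: it writes $F(|\Psi\rangle\langle\Psi|,\rho_{AB})$ as $\max_{|\psi_C\rangle} F\bigl(|\Psi\rangle\langle\Psi|\otimes|\psi_C\rangle\langle\psi_C|,\,|\Phi\rangle\langle\Phi|\bigr)^2$, evaluates the pure-state overlap through the identity $\langle\psi_{B|a},\psi_C|\phi_{BC|a}\rangle=\langle\psi_C|\phi_{C|a}\rangle$, and then performs the maximization over $|\psi_C\rangle$ explicitly, which yields the squared norm of the vector $\sum_a\sqrt{\tilde{P}_A(a)P_A(a)}\,|\phi_{C|a}\rangle$, i.e.\ the claimed double sum. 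You instead use the pure-versus-mixed overlap formula $F=\langle\Psi|\rho_{AB}|\Psi\rangle$ and compute the matrix element of $\Tr_C|\Phi\rangle\langle\Phi|$ directly; your partial-trace bookkeeping is exactly right, including the placement of the primed index on the bra, so that the $B$-contraction produces $\langle\phi_{C|a'}|\phi_{C|a}\rangle$ and not its conjugate. The two routes are equivalent in substance, since the Uhlmann maximum $\max_{\psi_C}|\langle\Psi,\psi_C|\Phi\rangle|^2$ equals $\bigl\|\langle\Psi|\Phi\rangle\bigr\|^2=\langle\Psi|\rho_{AB}|\Psi\rangle$, but your version avoids the purification machinery entirely and incidentally sidesteps some typographical glitches in the paper's own derivation (sums written over $z$ instead of $a$, and a norm-squared suppressed in the penultimate line). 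One point you handled well and that deserves emphasis: the right-hand side of the lemma carries no outer square root, so both your premise and the statement presuppose the squared-fidelity convention; the paper confirms this reading by inserting the exponent $2$ in the first line of its proof, even though elsewhere in the same appendix (e.g.\ in the proof of Lemma \ref{3-16-7L}, where $F(\rho_{C|a},\rho_{C|a'})=|\langle\phi_{C|a'}|\phi_{C|a}\rangle|$ for pure states) the square-root convention is used, so making the convention explicit rather than inheriting it silently was the right call.
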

 
\begin{proof}
\begin{align}
&F(|\Psi\rangle \langle\Psi| ,\rho_{AB}) \nonumber\\
=& \max_{|{\psi}_C\rangle}
F(|\Psi\rangle \langle\Psi|
|{\psi}_C\rangle \langle{\psi}_C|, 
|\Phi\rangle \langle\Phi|
)^2 \nonumber\\
=&\max_{|{\psi}_C\rangle}
|\sum_z 
\sqrt{\tilde{P}_A(a) P_A(a)}
\langle {\psi}_{B|a},{\psi}_C|
\phi_{BC|a}\rangle| \nonumber\\
=&
\max_{|{\psi}_C\rangle}
|\sum_z 
\sqrt{\tilde{P}_A(a) P_A(a)}
| \phi_{C|a}\rangle| \nonumber\nonumber\\
=&
\sum_{a,a'} 
\sqrt{\tilde{P}_A(a) P_A(a)}
\sqrt{\tilde{P}_A(a') P_A(a')}
\langle \phi_{C|a'}| \phi_{C|a}\rangle 
\end{align}
\end{proof}

\begin{lemma}\Label{3-16-7L}
When $\rho_{AC}$ is written as
$\sum_{a}P_A(a) |a\rangle \langle a| \otimes \rho_{C|a}$, 
we have
\begin{align}
\max_{\sigma_C} F(\rho_{AC}, \rho_A \otimes \sigma_C)
=\sum_{a,a'}P_A(a) P_A(a') F(\rho_{C|a},\rho_{C|a'})\Label{3-16-4eq}.
\end{align}
\end{lemma}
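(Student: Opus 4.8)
The plan is to strip off the classical register $A$ by a block-diagonal argument, turn the remaining maximization over $\sigma_C$ into a purification/overlap optimization via Uhlmann's theorem, and then read off the closed form from Lemma~\ref{3-16-1L}.

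First I would use that $\{|a\rangle\}$ is orthonormal, so both $\rho_{AC}$ and $\rho_A\otimes\sigma_C$ are block-diagonal for the orthogonal splitting $\cH_A\otimes\cH_C=\bigoplus_a(|a\rangle\otimes\cH_C)$, with $a$-th blocks $P_A(a)\rho_{C|a}$ and $P_A(a)\sigma_C$. Writing $f(\rho,\sigma):=\|\sqrt{\rho}\sqrt{\sigma}\|_1$ for the root fidelity, additivity over orthogonal blocks together with homogeneity gives
\begin{align}
f(\rho_{AC},\rho_A\otimes\sigma_C)=\sum_a P_A(a)\,f(\rho_{C|a},\sigma_C),
\end{align}
so the $A$-register decouples. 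Reading $F$ as the squared fidelity $f^2$ (consistent with the right-hand side of Lemma~\ref{3-16-1L}), the quantity $\max_{\sigma_C}F(\rho_{AC},\rho_A\otimes\sigma_C)$ becomes the square of $\max_{\sigma_C}\sum_a P_A(a) f(\rho_{C|a},\sigma_C)$, a maximization living entirely on $\cH_C$.

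Next I would purify. Fixing purifications $|\phi_{C|a}\rangle$ of each $\rho_{C|a}$ and a purification $|\psi_\sigma\rangle$ of $\sigma_C$ in $\cH_C\otimes\cH_B$, Uhlmann's theorem gives $f(\rho_{C|a},\sigma_C)=\max|\langle\phi_{C|a}|\psi_\sigma\rangle|$. Since $\sigma_C$ is arbitrary, $|\psi_\sigma\rangle$ ranges over all unit vectors of $\cH_C\otimes\cH_B$; absorbing phases into the $|\phi_{C|a}\rangle$ and then maximizing over $|\psi_\sigma\rangle$ by Cauchy--Schwarz collapses the objective to $\max_{\{\phi_{C|a}\}}\bigl\|\sum_a P_A(a)|\phi_{C|a}\rangle\bigr\|$, whose square is $\max_{\{\phi_{C|a}\}}\sum_{a,a'}P_A(a)P_A(a')\,\Re\langle\phi_{C|a'}|\phi_{C|a}\rangle$. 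This is exactly the quantity evaluated by Lemma~\ref{3-16-1L} with $\tilde P_A=P_A$, where the pure-state fidelity equals $\sum_{a,a'}P_A(a)P_A(a')\langle\phi_{C|a'}|\phi_{C|a}\rangle$ for $|\phi_{C|a}\rangle=\langle\psi_{B|a}|\phi_{BC|a}\rangle$. Choosing the free data $|\psi_{B|a}\rangle$ so that the induced overlaps saturate the pairwise fidelities then yields the claimed right-hand side $\sum_{a,a'}P_A(a)P_A(a')F(\rho_{C|a},\rho_{C|a'})$.

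The step I expect to be the main obstacle is precisely this last identification of the optimum. The bound $\Re\langle\phi_{C|a'}|\phi_{C|a}\rangle\le f(\rho_{C|a},\rho_{C|a'})$ from Uhlmann's theorem is immediate and already delivers the inequality ``$\le$''; the delicate direction is producing a single family $\{|\phi_{C|a}\rangle\}$ (equivalently one optimal $\sigma_C$) that saturates the relevant overlaps \emph{simultaneously}, which is a Gram-matrix realizability question for the pairwise fidelities. I would first settle the regime of interest---where the $\rho_{C|a}$ are (nearly) distinguishable, so the purifications are (nearly) orthogonal and the cross terms are controlled, giving equality with the diagonal value $\sum_a P_A(a)^2$---and then use the explicit optimal choice of $|\psi_{B|a}\rangle$ furnished by Lemma~\ref{3-16-1L} to pin down the general case.
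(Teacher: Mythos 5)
Your skeleton is essentially the paper's: reduce to overlaps of purifications via Uhlmann and read the answer off Lemma~\ref{3-16-1L}. Your preprocessing step is a cleaner variant of what the paper does --- you decouple $A$ first by block-additivity of the root fidelity, $f(\rho_{AC},\rho_A\otimes\sigma_C)=\sum_a P_A(a) f(\rho_{C|a},\sigma_C)$, whereas the paper purifies $\rho_{AC}$ and $\rho_A$ directly with phase freedom $e^{i\theta_a}$ and invokes Lemma~\ref{3-16-1L}; both reach the same objective $\max_{\{\phi_{C|a}\}}\sum_{a,a'}P_A(a)P_A(a')\,\Re\langle \phi_{C|a'}|\phi_{C|a}\rangle$, maximized over families of purifications. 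Your reading of the conventions is also right (squared fidelity on the left, root fidelity $F(\rho_{C|a},\rho_{C|a'})$ on the right, matching the paper's pure-case final line), and your ``$\le$'' direction, via $\Re\langle \phi_{C|a'}|\phi_{C|a}\rangle \le F(\rho_{C|a},\rho_{C|a'})$ pairwise, is complete and correct.

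The genuine gap is the one you yourself flagged and then deferred: the ``$\ge$'' direction requires a \emph{single} family of purifications (equivalently one $\sigma_C$) with $\Re\langle \phi_{C|a'}|\phi_{C|a}\rangle = F(\rho_{C|a},\rho_{C|a'})$ for \emph{all} pairs simultaneously, and your closing plan (``settle the nearly-distinguishable regime, then pin down the general case'') cannot work, because simultaneous saturation is genuinely obstructed. Already in the pure-block case the only freedom is phases, and saturation forces the cocycle condition $\arg\langle \phi_{C|a'}|\phi_{C|a}\rangle=\theta_a-\theta_{a'}$ for all pairs; this fails, e.g., for $|\phi_{C|1}\rangle=|0\rangle$, $|\phi_{C|2}\rangle=(|0\rangle+|1\rangle)/\sqrt{2}$, $|\phi_{C|3}\rangle=(|0\rangle+i|1\rangle)/\sqrt{2}$, where the third overlap carries phase $\pi/4$, so for this $\rho_{AC}$ the left-hand side of \eqref{3-16-4eq} is \emph{strictly} below the right-hand side, and no choice of $\sigma_C$ attains it. You should be aware that the paper's own proof does not fill this hole either: in the general case it simply ``fixes'' purifications $|\phi_{CD|a}\rangle$ with $|\langle \phi_{CD|a'}|\phi_{CD|a}\rangle|=F(\rho_{C|a},\rho_{C|a'})$ for every pair and aligns all cross terms with the free phases $e^{i\theta_a}$ --- i.e., it asserts by fiat exactly the Gram-realizability step you identified. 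So your diagnosis of the crux is accurate, but as a proof of the stated equality your proposal establishes only the upper bound together with the structured cases where saturation is actually available: two values of $a$; commuting $\rho_{C|a}$, where the purifications $\sum_x \sqrt{p_a(x)}\,|x,x\rangle$ give real nonnegative overlaps equal to all pairwise fidelities; and distinguishable states, where the cross terms vanish --- which are the situations in which the lemma is subsequently used (Theorem~\ref{3-16-2T} and the shift-invariant model, where $|\psi_{E|Z=h_E j}\rangle=V^j|\psi_E\rangle$ with nonnegative amplitudes).
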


\begin{proof}
We firstly show the case when the state $\rho_{C|a}$
is a pure state $|\phi_{C|a}\rangle$.
We choose 
the purification 
$|\Phi(\{e^{i\theta_a}\})\rangle:=
\sum_{a} e^{i\theta_a} \sqrt{\tilde{P}_A(a)}|a, a, \phi_{C|a}\rangle$
of $\rho_{AC}$
on $\cH_A \otimes \cH_B\otimes \cH_C$,
and the purification 
$|\Psi(\{e^{i\theta_a'}\})\rangle:=
\sum_{a} e^{i\theta_a'} \sqrt{\tilde{P}_A(a)}|a, a\rangle$
of $\rho_{A}$ on $\cH_A \otimes \cH_B$,
Applying Lemma \ref{3-16-1L},
we have
\begin{align}
& \max_{\sigma_C} F(\rho_{AC}, \rho_A \otimes \sigma_C) \nonumber\\
=
& \max_{\sigma_C}
\max_{\{e^{i\theta_a}\}, \{e^{i\theta_a'}\}}
F(\Pi_{\Psi(\{e^{i\theta_a'}\})} 
\otimes \sigma_C,
\Pi_{\Phi(\{e^{i\theta_a}\})}) \nonumber\\
=
& 
\max_{\{e^{i\theta_a}\}, \{e^{i\theta_a'}\}}
\max_{\sigma_C}
F(\Pi_{\Psi(\{e^{i\theta_a'}\})} 
\otimes \sigma_C,
\Pi_{\Phi(\{e^{i\theta_a}\})}) \nonumber\\
=&
\max_{\{e^{i\theta_a}\}, \{e^{i\theta_a'}\}}
F(\Pi_{\Psi(\{e^{i\theta_a'}\})},
\Tr_C
\Pi_{\Phi(\{e^{i\theta_a}\})}
)\nonumber\\
=&
\max_{\{e^{i\theta_a}\}, \{e^{i\theta_a'}\}}
\sum_{a,a'} 
e^{i(\theta_a-\theta_a')-i(\theta_a'-\theta_a)}
P_A(a) P_A(a')
\nonumber\\
& \cdot
\langle \phi_{C|a'}| \phi_{C|a}\rangle \nonumber\\
=&
\sum_{a,a'} 
e^{i(\theta_a-\theta_a')-i(\theta_a'-\theta_a)}
P_A(a) P_A(a')
|\langle \phi_{C|a'}| \phi_{C|a}\rangle| ,\Label{a4}
\end{align}
where we use the abbreviations
\begin{align}
\Pi_{\Psi(\{e^{i\theta_a'}\})}&:=|\Psi(\{e^{i\theta_a'}\})\rangle \langle\Psi(\{e^{i\theta_a'}\})|\\
\Pi_{\Phi(\{e^{i\theta_a}\})}&:=|\Phi(\{e^{i\theta_a}\})\rangle \langle\Phi(\{e^{i\theta_a}\})|
\end{align}
The equality \eqref{a4} implies \eqref{3-16-4eq}.

Now, we going to the general case.
We fix a purification $|\phi_{CD|a}\rangle $
of $\rho_{C|a}$ on $\cH_C\otimes \cH_D$
so that
$F(\rho_{C|a},\rho_{C|a'})
=|\langle \phi_{C|a'}| \phi_{C|a}\rangle| $.
We choose 
the purification 
$|\Phi(\{e^{i\theta_a}\})\rangle:=
\sum_{a} e^{i\theta_a} \sqrt{\tilde{P}_A(a)}|a, a, \phi_{CD|a}\rangle$
of $\rho_{AC}$
on $\cH_A \otimes \cH_B\otimes \cH_C\otimes \cH_D$,
and the purification 
$|\Psi(\{e^{i\theta_a'}\})\rangle:=
\sum_{a} e^{i\theta_a'} \sqrt{\tilde{P}_A(a)}|a, a\rangle$
of $\rho_{A}$ on $\cH_A \otimes \cH_B$.
Similarly, we have
\begin{align}
& \max_{\sigma_C} F(\rho_{AC}, \rho_A \otimes \sigma_C) \nonumber\\
=
& \max_{\sigma_{CD}}
\max_{\{e^{i\theta_a}\}, \{e^{i\theta_a'}\}}
F(\Pi_{\Psi(\{e^{i\theta_a'}\})} 
\otimes \sigma_{CD},
\Pi_{\Phi(\{e^{i\theta_a}\})}) \nonumber\\
=
& 
\max_{\{e^{i\theta_a}\}, \{e^{i\theta_a'}\}}
\max_{\sigma_{CD}}
F(\Pi_{\Psi(\{e^{i\theta_a'}\})} 
\otimes \sigma_{CD},
\Pi_{\Phi(\{e^{i\theta_a}\})}) \nonumber\\
=&
\sum_{a,a'} 
e^{i(\theta_a-\theta_a')-i(\theta_a'-\theta_a)}
P_A(a) P_A(a')
|\langle \phi_{C|a'}| \phi_{C|a}\rangle| \nonumber\\
=&
\sum_{a,a'} 
e^{i(\theta_a-\theta_a')-i(\theta_a'-\theta_a)}
P_A(a) P_A(a')
F(\rho_{C|a},\rho_{C|a'}),
\end{align}
which implies \eqref{3-16-4eq}.
\end{proof}

\begin{lemma}\Label{3-16-8L}
When $P_A=\tilde{P}_A$,
\begin{align}
F(|\Phi\rangle \langle \Phi|, \rho_{AB} )
\le 
\max_{\sigma_C} F(\rho_{AC}, \rho_A \otimes \sigma_C)
\Label{3-8-10C}.
\end{align}
The equality hold 
if and only if 
$\langle \phi_{C|a'} |\phi_{C|a}\rangle \ge 0$
and
$\langle \phi_{C|a} |\phi_{C|a}\rangle =1$.
\end{lemma}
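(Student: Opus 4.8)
The plan is to squeeze both sides of \eqref{3-8-10C} between the explicit formulas already established in the two preceding lemmas and then to verify the resulting inequality entry by entry. Putting $\tilde{P}_A=P_A$ in Lemma \ref{3-16-1L} turns the left-hand side into $\sum_{a,a'}P_A(a)P_A(a')\langle\phi_{C|a'}|\phi_{C|a}\rangle$, and Lemma \ref{3-16-7L} turns the right-hand side into $\sum_{a,a'}P_A(a)P_A(a')F(\rho_{C|a},\rho_{C|a'})$, where $\rho_{C|a}=\Tr_B|\phi_{BC|a}\rangle\langle\phi_{BC|a}|$ is the conditional state on $C$. The left-hand sum is manifestly real, since it is the squared norm of $\sum_a P_A(a)|\phi_{C|a}\rangle$, so it equals $\sum_{a,a'}P_A(a)P_A(a')\,\mathrm{Re}\langle\phi_{C|a'}|\phi_{C|a}\rangle$; because every weight $P_A(a)P_A(a')$ is non-negative, the whole statement reduces to the single pointwise estimate $|\langle\phi_{C|a'}|\phi_{C|a}\rangle|\le F(\rho_{C|a},\rho_{C|a'})$ for each pair $(a,a')$.

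The heart of the proof is this pointwise estimate, which I would derive from Uhlmann's theorem. Using $|\phi_{C|a}\rangle=\langle\psi_{B|a}|\phi_{BC|a}\rangle$, I first rewrite the partial inner product as $\langle\phi_{C|a'}|\phi_{C|a}\rangle=\langle\phi_{BC|a'}|\,(M\otimes\hat{1}_C)\,|\phi_{BC|a}\rangle$ with $M:=|\psi_{B|a'}\rangle\langle\psi_{B|a}|$ a contraction on $\cH_B$, $\|M\|\le1$. Since $|\phi_{BC|a}\rangle$ and $|\phi_{BC|a'}\rangle$ purify $\rho_{C|a}$ and $\rho_{C|a'}$ on the purifying system $B$, Uhlmann's theorem gives $F(\rho_{C|a},\rho_{C|a'})=\max_{U}|\langle\phi_{BC|a'}|(U\otimes\hat{1}_C)|\phi_{BC|a}\rangle|$ over unitaries $U$ on $\cH_B$. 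The overlap is linear in the inserted operator and the closed unit ball of operators is the closed convex hull of the unitaries, so this maximum is unchanged if $U$ ranges over all contractions; inserting $M$ then yields $|\langle\phi_{C|a'}|\phi_{C|a}\rangle|\le F(\rho_{C|a},\rho_{C|a'})$. Combined with $\mathrm{Re}\langle\phi_{C|a'}|\phi_{C|a}\rangle\le|\langle\phi_{C|a'}|\phi_{C|a}\rangle|$ and summation against the non-negative weights, this proves \eqref{3-8-10C}.

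For the equality clause I would read the two inequalities backwards. Equality in \eqref{3-8-10C} forces, for every $(a,a')$ of positive weight, equality in both $\mathrm{Re}\,z\le|z|$ and $|z|\le F$. The first is tight exactly when each $\langle\phi_{C|a'}|\phi_{C|a}\rangle$ is real and non-negative, which is the first stated condition. The second, applied on the diagonal $a=a'$, reads $\langle\phi_{C|a}|\phi_{C|a}\rangle\le F(\rho_{C|a},\rho_{C|a})=1$, and its saturation $\langle\phi_{C|a}|\phi_{C|a}\rangle=1$ is the second stated condition; by the Cauchy--Schwarz equality case it forces $|\phi_{BC|a}\rangle=|\psi_{B|a}\rangle\otimes|\phi_{C|a}\rangle$ to factorize, making $\rho_{C|a}=|\phi_{C|a}\rangle\langle\phi_{C|a}|$ pure, so that $F(\rho_{C|a},\rho_{C|a'})=|\langle\phi_{C|a}|\phi_{C|a'}\rangle|$ and all off-diagonal bounds become equalities automatically. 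Conversely, assuming both conditions makes every pointwise step an equality, closing the ``if and only if''.

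I expect the main obstacle to be the pointwise fidelity bound together with the careful bookkeeping of its equality case, namely arguing that the rank-one contraction $|\psi_{B|a'}\rangle\langle\psi_{B|a}|$ attains the Uhlmann maximum precisely when the purifications factorize, and that $\langle\phi_{C|a}|\phi_{C|a}\rangle=1$ is exactly the condition recording this. A secondary technical point to confirm is that the conditional states supplied by Lemma \ref{3-16-7L} really are $\rho_{C|a}=\Tr_B|\phi_{BC|a}\rangle\langle\phi_{BC|a}|$, i.e.\ that the classical-quantum hypothesis of that lemma holds in the present configuration.
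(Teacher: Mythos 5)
Your proposal is correct and takes essentially the same route as the paper: both sides are rewritten via Lemmas \ref{3-16-1L} and \ref{3-16-7L}, the claim is reduced to the pointwise bound $\mathrm{Re}\,\langle \phi_{C|a'}|\phi_{C|a}\rangle \le F(\rho_{C|a},\rho_{C|a'})$ (the paper's \eqref{3-16-9eq}), and the equality clause is traced through the saturation of that bound. The only difference is one of completeness rather than method: the paper asserts \eqref{3-16-9eq} without proof and disposes of the equality case in one line, whereas you actually establish the stronger modulus bound $|\langle \phi_{C|a'}|\phi_{C|a}\rangle| \le F(\rho_{C|a},\rho_{C|a'})$ via Uhlmann's theorem extended from unitaries to contractions, and correctly unwind the equality conditions (the diagonal terms force $\langle \phi_{C|a}|\phi_{C|a}\rangle=1$, hence factorization and purity of $\rho_{C|a}$, which makes the off-diagonal saturations automatic), exactly recovering the two stated conditions.
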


\begin{proof}
We use the notations in Lemmas \ref{3-16-1L} and \ref{3-16-7L}.
Then, we have
\begin{align}
\langle \phi_{C|a'}| \phi_{C|a}\rangle 
+
\langle \phi_{C|a}| \phi_{C|a'}\rangle 
\le
2 F(\rho_{C|a},\rho_{C|a'}).\Label{3-16-9eq}
\end{align}
Combining Lemmas \ref{3-16-1L} and \ref{3-16-7L}, we have
\begin{align}
&F(|\Phi\rangle \langle \Phi|, \rho_{AB} ) \nonumber\\
=&\sum_{a,a'} 
\sqrt{\tilde{P}_A(a) P_A(a)}
\sqrt{\tilde{P}_A(a') P_A(a')}
\langle \phi_{C|a'}| \phi_{C|a}\rangle \nonumber\\
\le &
\sum_{a,a'}P_A(a) P_A(a') F(\rho_{C|a},\rho_{C|a'}) \nonumber\\
=&
\max_{\sigma_C} F(\rho_{AC}, \rho_A \otimes \sigma_C).
\end{align}
Hence, we obtain \eqref{3-8-10C}.
The equality in \eqref{3-8-10C} holds if and only if
that in \eqref{3-16-9eq} holds.
So, we obtain the desired equivalence.
\end{proof}

\renewcommand{\refname}{\vspace{-1cm}}

\end{document}